\documentclass[11pt,leqno]{report}

\usepackage[english]{babel}
\usepackage[colorlinks,allcolors=black]{hyperref}
\usepackage{amsfonts}
\usepackage{amsmath}
\usepackage{amssymb}
\usepackage{amsthm}
\usepackage{vmargin}
\usepackage{graphicx}
\usepackage{booktabs}
\usepackage{float}
\usepackage{algorithm}
\usepackage{algorithmicx}
\usepackage{algpseudocode}

\usepackage{parskip}
\usepackage{dsfont}
\usepackage{subcaption}
\usepackage[bb=boondox]{mathalfa}
\usepackage{listingsutf8}
\usepackage{tikz}
\usepackage{chngcntr}
\usepackage{xcolor}
\definecolor{lightgray}{gray}{0.7} % 0 = nero, 1 = bianco
\usepackage{mdframed}
\newmdenv[
	skipabove=10pt,
	skipbelow=10pt,
	leftline=true,
	topline=false,
	rightline=false,
	bottomline=false,
	linecolor=lightgray,
	linewidth=2pt,
	leftmargin=10pt,
	innerleftmargin=10pt
]{leftbarquote}
\usepackage{etoolbox}
\counterwithin{figure}{section}

\DeclareMathOperator{\supp}{supp}
\DeclareMathOperator{\col}{col}
\DeclareMathOperator{\Markov}{Markov}
\DeclareMathOperator{\ind}{\rotatebox{90}{$\models$}}

\newcommand{\code}[1]{
	\colorbox{gray!15}{\texttt{\detokenize{#1}}}
}

\theoremstyle{definition}
\newtheorem{theorem}{Theorem}[section]

\newtheorem{lemma}[theorem]{Lemma}
\newtheorem{proposition}[theorem]{Proposition}

\newtheorem{Assumption}{Assumption}

\newtheorem{definition}[theorem]{Definition}
 
\newtheorem*{recall}{Recall}
\newtheorem*{note}{Note}

\theoremstyle{remark}
\newtheorem{remark}[theorem]{Remark}

\newcommand{\norm}[1]{\left\|#1\right\|}

\numberwithin{equation}{section}

\setmarginsrb{33mm}{30mm}{33mm}{40mm} {0mm}{10mm}{0mm}{10mm}

\begin{document}

\begin{titlepage}
	\begin{figure}[!htb]
		\centering
		\includegraphics[keepaspectratio=true,scale=0.5]{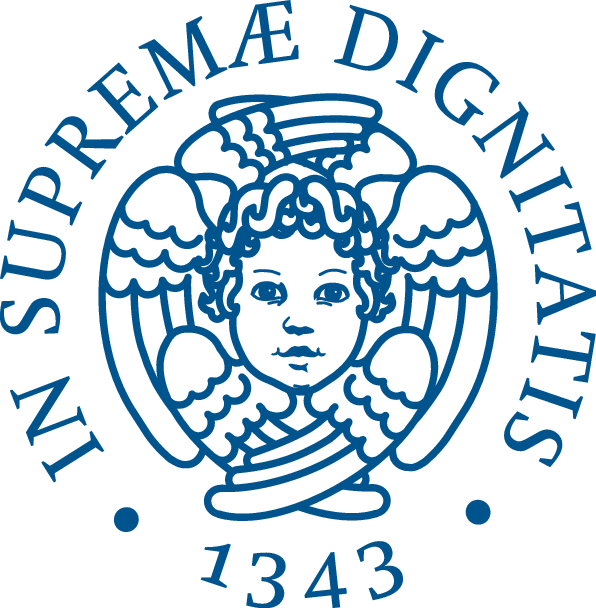}
	\end{figure}
	
	\begin{center}
		\LARGE{UNIVERSITY OF PISA}
		\vspace{5mm}
		\\ \large{DEPARTMENT OF MATHEMATICS}
		\vspace{5mm}
		\\ \large{Bachelor's Degree in Mathematics}
	\end{center}
	
	\vspace{8mm}
	\begin{center}
		{\Large{\bf Approximation Properties of Evolutionary Dynamics in Continuous–Time Finite State Space Games\\ \vspace{5mm} \large From Stochastic Crowds to Deterministic Limits}}
	\end{center}
	\vspace{8mm}
	
	\begin{minipage}[t]{0.47\textwidth}
		{\large{Supervisor}{\normalsize\vspace{3mm}
				\bf\\ \large{Prof. Andrea Agazzi} \normalsize\vspace{6mm}\bf\\}}
		{\large{Revisor}{\normalsize\vspace{3mm}
				\bf\\ \large{Eng. Leonardo Pedroso} \normalsize\vspace{6mm}\bf\\}}
		{\large{Co–Supervisor}{\normalsize\vspace{3mm}
				\bf\\ \large{Prof. Dario Trevisan} \normalsize\vspace{3mm}\bf\\}}
	\end{minipage}
	\hfill
	\begin{minipage}[t]{0.47\textwidth}\raggedleft
		{\large{Candidate}{\normalsize\vspace{3mm} \bf\\ \large{Pietro Grassi}}}
	\end{minipage}
	
	\vspace{8mm}
	\hrulefill
	\\\centering{\large{ACADEMIC YEAR 2024/2025}}
	
\end{titlepage}

\newpage
\thispagestyle{empty}
\vspace*{\stretch{1}}
\begin{flushright}
	\itshape to Daniela, Laura, Gianni and Giordano
\end{flushright}
\vspace{\stretch{2}}

\pagenumbering{gobble}
\tableofcontents
\clearpage

\newpage
\clearpage

\begin{abstract}
	This thesis studies the convergence of finite–population stochastic evolutionary dynamics to their deterministic mean–field limit in continuous–time finite state space games. We first develop refined ergodic theorems for Markov chains with a single positive–recurrent class, guaranteeing the existence of a unique invariant distribution and almost–sure convergence of time averages. Next, we prove that the mean–field model—described by a system of Lipschitz–continuous ordinary differential equations— admits a unique solution that depends continuously on its initial condition and that constitutes the almost–sure limit for the empirical distributions with fixed policy. Furthermore, we show that every Mixed Stationary Nash Equilibrium of the mean–field game is approximated by a Nash equilibrium of the corresponding \(N\)–player game within an error \(\varepsilon\) for sufficiently large \(N\). We finally demonstrate, by Kurtz's theorem, that the empirical state–policy distribution converges in probability to the mean–field trajectory. Numerical simulations conducted in \textsc{matlab} confirm the theoretical \(\mathcal{O}(N^{-1/2})\) convergence rate in both models across a range of population sizes.
\end{abstract}

\pagenumbering{arabic}
\newpage

\chapter*{Introduction}
\addcontentsline{toc}{chapter}{Introduction}

In continuous–time evolutionary mean–field games, a key mathematical question is whether the finite–population stochastic process converges to its deterministic mean– field counterpart as the population size grows. The purpose of this thesis is to rigorously establish this convergence.

We commence with the Ergodic Theorem, which describes the conditions under which a Markov chain reaches a unique stationary distribution, thereby characterizing its long–term behavior. Our approach for proving the convergence relies on both Kolmogorov's Strong Law of Large Numbers and Kurtz's Theorem. The latter is a fundamental result in the theory of stochastic processes, guaranteeing that, under appropriate assumptions, proper empirical distributions in the finite–population system converge to the solutions of deterministic equations in the continuum model. The objective of this work is to apply these results to our specific model and verify, both theoretically and numerically, their implications.

\section*{Model Overview}

We consider a finite population model where each player has an individual state and takes asynchronous actions independently in continuous–time. The state to which each player transitions after taking an action is determined by a Markov kernel. Furthermore, each player follows a policy, i.e., a map from the state space to the set of probability distributions over all available actions.

First of all, we focus on the case in which policies are fixed, then we analyze the evolutionary dynamics, allowing players to change policy according to a revision protocol. In both cases, as the population size grows, the stochastic fluctuations diminish, so the empirical joint state–action and state–policy distributions converge to the solution of the ordinary differential equations (ODEs) which describe the deterministic mean–field model. The focus of this thesis is to rigorously justify this convergence using both Kolmogorov's Strong Law of Large Numbers and Kurtz’s Theorem.

\section*{Main Results}

This work provides detailed proofs of the following key results:\begin{itemize}
	\item Ergodic Theorem with a Unique Recurrent Class: We prove that the continuous–time Markov chain governing the players’ joint state has a unique invariant distribution and that long–run time averages converge almost surely to it. The argument is developed first for chains with one positive–recurrent communicating class in a general countable space, and then specialised to our finite state space.
	\item Well–Posedness of the Mean–Field Model: We establish the existence, uniqueness, and Lipschitz continuity of the solution to the ODE characterizing the mean–field model. Additionally, we show that as the number of players increases, the empirical state–policy distribution of the finite–population system converges almost surely to the deterministic mean–field model. The proof relies on Kolmogorov's Strong Law of Large Numbers and the Continuous Mapping Theorem.
	\item Approximation of the Finite–Population Game by the Mean–Field Model: We show that for a sufficiently large population, the behavior of the finite–population system remains close to that of the mean–field model. This follows from the result above and the Dominated Convergence Theorem, ensuring that as the number of players increases, the finite–population dynamics approximate the mean–field solution.
	\item Convergence of Finite Population Evolutionary Dynamics to the Mean–Field Model: We prove that, under appropriate conditions, the stochastic evolutionary dynamics of the finite–population system converge in probability to the deterministic mean–field evolutionary dynamics as the population size tends to infinity. This result is based on the convergence result above and an application of Kurtz’s theorem to the empirical state–policy distribution over time.
\end{itemize}

\section*{Structure of the Thesis}

The remainder of this thesis is organized as follows:\begin{itemize}
	\item Chapter 1 provides an overview of relevant concepts about Continuous–Time Markov Chains, which are essential for the mathematical analysis, and concludes with the Ergodic Theorem.
	\item Chapter 2 describes the model, detailing the finite–population model and the corresponding mean–field approximation.
	\item Chapter 3 presents the theoretical foundations of stochastic evolution and its deterministic approximation, culminating in the statement of Kurtz’s Theorem.
	\item Chapter 4 introduces the mathematical framework, including the choice of topology and the construction of the Q–matrix that generates the process.
	\item Chapter 5 presents the theoretical results, focusing on the proofs of the ones specified above.
	\item Chapter 6 discusses practical results, including numerical experiments and computational verification of the theoretical findings.
	\item Chapter 7 concludes the thesis, summarizing the main contributions and discussing potential future directions.
\end{itemize}

This thesis systematically applies established mathematical tools to a well–defined problem, ensuring the correctness and clarity of key results in the framework of mean– field evolutionary games.

\section*{Real–World Applicability of This Thesis}

A natural arena for the theory developed here is token–driven resource–sharing schemes in which many participants contend for common assets \cite{Pedroso2} \cite{Pedroso}.

Fair, system–efficient allocation can be induced by a non–transferable–token mechanism: every participant holds a private balance that grows or shrinks whenever they access the resources. In such a setting:\begin{itemize}
	\item \textbf{Real–time decisions}: each participant decides continuously in time when to request the resources that match their needs;
	\item \textbf{Asynchrony}: requests are uncoordinated and sporadic, so players act at different times;
	\item \textbf{Jump dynamics}: each player’s status is captured by an integer–valued token balance that changes in discrete jumps after every access;
	\item \textbf{Congestion coupling}: the payoff from using a resource decreseas with its current load, so players’ decisions are interdependent.
\end{itemize}

This token–economy model is precisely an example of the class of dynamic games studied in the present paper, illustrating its relevance to real–world systems.

\chapter{Continuous-Time Markov Chains}

General references: \cite[Chapter~4]{Norris}, \cite[Chapters~2–3]{Durrett}, \cite[Chapter~5]{Shwartz}.

\paragraph{Notation.}

From this point forward we work in a probability space $(\Omega,\mathcal{F},\mathbb{P})$.

\section{Introductory Concepts}

General reference: \cite[Chapter~2]{Norris}.

\subsection{Q–Matrices and Their Exponentials}

\subsubsection{Definition and Bond with Stochastic Matrices}

\begin{definition}[Q–Matrix]
	Let $I$ be a countable index set. A \textit{Q–matrix} on $I$ is a matrix $Q=(q_{i,j})_{i,j\in I}$ satisfying:\begin{itemize}
		\item $q_{i,i}\in(-\infty,0]\ \forall i\in I$,
		\item $q_{i,j}\ge0\ \forall i\ne j$,
		\item $\sum_{j\in I}q_{i,j}=0\ \forall i\in I$.
	\end{itemize}
\end{definition}

A Q–matrix has the form:\begin{equation*}
	Q=\begin{bmatrix}-\sum_{j\ne 1}q_{1,j} & \ge0 &\cdots &\cdots& \ge0 \\ \ge0 & -\sum_{j\ne 2}q_{2,j} & \ge0 & \cdots & \ge0 \\ \vdots & \ddots & \ddots & \ddots & \vdots \\ \vdots &\ddots &\ddots &\ddots &\ge0\\ \ge0& \cdots &\cdots & \ge0 & -\sum_{j\ne |I|}q_{|I|,j} \end{bmatrix}.
\end{equation*}

Define $q_i:=\sum_{j\ne i}q_{i,j}=-q_{i,i}$, we give to the entries of $Q$ the following meaning:\begin{itemize}
	\item $q_{i,j}$ represents the rate of transition from state $i$ to $j$,
	\item $q_i$ denotes the total rate of leaving state $i$.
\end{itemize}

\begin{definition}[Stochastic Matrix]
	A matrix $P=(p_{i,j})_{i,j}$ is said to be \textit{stochastic} if:\begin{itemize}
		\item $p_{i,j}\ge0\ \forall i,j\in I$,
		\item $\sum_{j\in I}p_{i,j}=1\ \forall i\in I$.
	\end{itemize}
\end{definition}

A matrix $Q$ on a finite set $I$ is a Q–matrix if and only if $P(t):=e^{tQ}$ is a stochastic matrix $\forall t\ge0$.

\subsubsection{Generating Processes with Q–Matrices}

Let $P$ be a stochastic matrix given by $P=e^Q$, where $Q$ is a Q–matrix. Fix $m\in\mathbb{N}$ and let $(X_n^m)_{n\in\mathbb{N}}$ be discrete–time $\Markov\left(\lambda,\ e^{\frac{Q}{m}}\right)$. Now, define $X_{\frac{n}{m}}:=X_n^m$, then $\left(X_{\frac{n}{m}}\right)_{n\in\mathbb{N}}$ is discrete–time $\Markov\left(\lambda,\ \left(e^{\frac{Q}{m}}\right)^m\right)=\Markov(\lambda,P)$.

This constructions allows the formulation of arbitrarily fine discrete–time Markov chains, giving rise to $\Markov(\lambda,P)$ when sampled at integer times. This approach leads to continuous–time processes $(X_t)_{t\ge0}$ with Q–matrix, so that\begin{equation*}
	\mathbb{P}(X_{t_{n+1}}=i_{n+1}\mid X_{t_0}=i_0,\hdots,X_{t_n}=i_n)=p_{i_n,i_{n+1}}(t_{n+1}-t_n),
\end{equation*} where the transition probability satisfies\begin{equation*}
	\mathbb{P}_i(X_t=j):=\mathbb{P}(X_t=j\mid X_0=i)=p_{i,j}(t).
\end{equation*}

\subsection{Continuous–Time Random Processes}

\begin{definition}[Continuous–Time Random Process]
	Let $I$ be a countable set. A \textit{continuous–time random process} $(X_t)_{t\ge0}$ taking values in $I$ is a collection of random variables $X_t: \Omega\rightarrow I$.
\end{definition}

Since, in general, \[\mathbb{P}\left(\bigcup_{t\ge0}A_t\right)\ne\sum_{t\ge0}\mathbb{P}(A_t),\] we restrict our attention to right–continuous processes, which satisfy the property that for every $\omega\in\Omega$ and for all $t\ge0$, there exists $\epsilon>0$ such that: \begin{equation*}
	X_s(\omega)=X_t(\omega),\quad \forall s\in[t,\ t+\epsilon].
\end{equation*}

\begin{theorem}
	The probability of any event that depends on a right–continuous process can be completely determined by its finite–dimensional distributions. Specifically, the event probabilities can be derived from: \[\mathbb{P}(X_{t_0}=i_0,\hdots,X_{t_n}=i_n),\quad n\ge0,\quad 0\le t_0\le\hdots\le t_n,\quad i_0,\hdots,i_n\in I\]
	\begin{proof}
		It follows from the equivalence of probabilities agreeing on a $\pi$–system, as stated in \cite[Theorem~6.6.1]{Norris}.
	\end{proof}
\end{theorem}

We do not consider what happens to a process after potential explosion. Instead, it is convenient to adjoin a new state $\infty$ to $I$, ensuring that the process satisfies $X_t=\infty$ for all $t\ge\zeta$ (see Definition \ref{Explosion time}).

\begin{definition}[Minimal Process]
	A process $(X_t)_{t\ge0}$ is called \textit{minimal} if it satisfies: \[X_t=\infty\quad \forall t\ge\zeta\] for some explosion time $\zeta\in\mathbb{R}$ (see Definition \ref{Explosion time}).
\end{definition}

\subsection{Poisson Processes}

\begin{definition}[Exponential Random Variable]
	A random variable $T: \Omega\rightarrow[0,\infty]$ follows an \textit{exponential distribution} with parameter $\lambda\in[0,+\infty)$ if it satisfies \[\mathbb{P}(T>t)=e^{-\lambda t},\quad \forall t\ge0.\] We denote this by $T\sim E(\lambda)$.
\end{definition}

\begin{proposition}
	A random variable $T:\Omega\rightarrow(0,\infty]$ follows an exponential distribution if and only if it satisfies the memoryless property:\[\mathbb{P}(T>s+t\mid T>s)=\mathbb{P}(T>t),\quad \forall s,t\ge0.\]
	\begin{proof}
		See \cite[Theorem~2.3.1]{Norris}.
	\end{proof}
\end{proposition}

\begin{definition}[Poisson Process]
	A right–continuous process $(X_t)_{t\ge0}$ with values in $\mathbb{N}$ is a \textit{Poisson process} of rate $\lambda$ if its holding times $(S_n)_{n\in\mathbb{N}}$, that is, the durations spent in each state before transitioning, are independent exponentially distributed random variables with parameter $\lambda$ and its jump chain is given by $Y_n=n$.
\end{definition}

A such process can be represented by the following diagram, which highlights its nature of linear Markov process with transition rate $\lambda$.

\begin{center}
	\begin{tikzpicture}[node distance=1.5cm, >=stealth, auto]
		
		% Define nodes explicitly
		\node[circle, fill=black, inner sep=1.5pt, label=below:{0}] (s0) at (0,0) {};
		\node[circle, fill=black, inner sep=1.5pt, label=below:{1}] (s1) at (1,0) {};
		\node[circle, fill=black, inner sep=1.5pt, label=below:{2}] (s2) at (2,0) {};
		\node[circle, fill=black, inner sep=1.5pt, label=below:{3}] (s3) at (3,0) {};
		\node[circle, fill=black, inner sep=1.5pt, label=below:{4}] (s4) at (4,0) {};
		
		% Draw arrows with λ labels
		\draw[->] (s0) -- (s1) node[midway, above] {\(\lambda\)};
		\draw[->] (s1) -- (s2) node[midway, above] {\(\lambda\)};
		\draw[->] (s2) -- (s3) node[midway, above] {\(\lambda\)};
		\draw[->] (s3) -- (s4) node[midway, above] {\(\lambda\)};
		
		% Dashed line indicating continuation
		\draw[dashed] (s4) -++ (1,0);
		
	\end{tikzpicture}
\end{center}

The Q–matrix associated with the process $(X_t)_{t\ge0}$ is:\begin{equation*}
	Q=\begin{bmatrix}-\lambda&\lambda&&&\\&-\lambda&\lambda&&\\&&\ddots&\ddots&\quad \\&&&&\end{bmatrix}.
\end{equation*}

Let $J_n$ be the time in which $X_t$ switches from $n-1$ to $n$. By the strong law of large numbers it holds that $\mathbb{P}(J_n\rightarrow\infty)=1$, so there is no explosion and the law of $(X_t)_{t\ge0}$ is uniquely determined. Thus, the natural construction of a Poisson process of rate $\lambda$ consists of taking a sequence $(S_n)_{n\in\mathbb{N}}$ of independent $E(\lambda)$ random variables, setting $J_0:=0,\ J_n:=S_1+\hdots+S_n$ and then defining:\[X_t:=n\quad\mbox{if}\ J_n\le t<J_{n+1}.\]

\begin{theorem}[Markov Property]
	Let $(X_t)_{t\ge0}$ be a Poisson process of rate $\lambda$. Then the shifted process $(X_{s+t}-X_s)_{t\ge0}$ is also a Poisson process of rate $\lambda$ for all $s\ge0$, and it is independent of $(X_r:\ r\le s)$.
	\begin{proof}
		See \cite[Theorem~2.4.1]{Norris}.
	\end{proof}
\end{theorem}

\begin{theorem}[Strong Markov Property] \label{SMP}
	Let $(X_t)_{t\ge0}$ be a Poisson process of rate $\lambda$, and let $T$ be a stopping time of $(X_t)_{t\ge0}$. Then the shifted process $(X_{T+t}-X_T)_{t\ge0}$ is also a Poisson process of rate $\lambda$, independent of $(X_s:\ s\le T)$.
	\begin{proof}
		See \cite[Theorem~2.4.2]{Norris}.
	\end{proof}
\end{theorem}

\subsection{Jump Chain and Holding Times}

\begin{definition}[Jump Matrix]
	Let $Q$ be a Q–matrix on a countable set $I$, the \textit{jump matrix} $\Pi:=(\pi_{i,j})_{i,j\in I}$ of $Q$ is defined by\begin{equation*}
		\pi_{i,j}=\begin{cases}\frac{q_{i,j}}{-q_{i,i}}&\ \mbox{if}\ j\ne i,\ q_{i,i}\ne0\\ 0&\ \mbox{if}\ j\ne i,\ q_{i,i}=0\\ 1&\ \mbox{if}\ j=i,\ q_{i,i}=0\\0&\ \mbox{if}\ j=i,\ q_{i,i}\ne0	\end{cases},
	\end{equation*} so that $\Pi$ is a stochastic matrix.
\end{definition}

\begin{definition}[Continuous–Time Markov Chain]
	A minimal right–continuous process $(X_t)_{t\ge0}$ on $I$ is a \textit{Markov chain} with initial distribution $\lambda$ and generator matrix $Q$ if its jump chain $(Y_n)_{n\in\mathbb{N}}$ is a discrete–time $\Markov(\lambda,\Pi)$ and if for each $n\ge1$, conditional on $Y_0,\hdots,Y_{n-1}$, its holding times $S_1,\hdots,S_n$ are independent exponential random variables with parameters $q(Y_0),\hdots,q(Y_{n-1})$, respectively.
\end{definition}

\subsubsection{Constructing Markov Chains Based on the Jump Matrix}

Let $(Y_n)_{n\in\mathbb{N}}$ be a discrete–time $\Markov(\lambda,\Pi)$ and let $(T_n)_{n\in\mathbb{N}}$ be independent exponential random variables of parameter $1$, independent of $(Y_n)_{n\in\mathbb{N}}$. Set $S_n:=T_n/q(Y_{n-1})$ and $J_n:=S_1+\hdots+S_n$, then define\begin{equation*}
	X_t:=\begin{cases}Y_n&\ \mbox{if}\ J_n\le t<J_{n+1}\\\infty&\ \mbox{otherwise}\end{cases}.
\end{equation*}

\subsubsection{Constructing Markov Chains Based on Poisson Processes}

Consider an initial state $X_0=Y_0$ with distribution $\lambda$ and a family of independent Poisson processes $\{(N_t^{i,j})_{t\ge0}:\ i,j\in I,\ i\ne j\}$, where $(N_t^{i,j})_{t\ge0}$ has rate $q_{i,j}$. Then define\begin{align*}
	J_0:=&0,\\
	J_{n+1}:=&\inf\{t>J_n:\ N_t^{Y_n,j}\ne N_{J_n}^{Y_n,j}\ \mbox{for some}\ j\ne Y_n\},\\
	Y_{n+1}:=&\begin{cases}j&\ \mbox{if}\ J_{n+1}<\infty\wedge N_{J_{n+1}}^{Y_n,j}\ne N_{J_n}^{Y_n,j}\\i&\ \mbox{if}\ J_{n+1}=\infty\end{cases}.
\end{align*} Thus, the first jump time of $(N_t^{i,j})_{t\ge0}$ is exponential with parameter $q_{i,j}$, so, conditional on $Y_0=i$, $J_1$ is exponential with parameter $q_i=\sum_{j\ne i}q_{i,j}$, while $Y_1$ has distribution $(\pi_{i,j}:\ j\in I)$ and $J_1\ind Y_1$ (independent).

Suppose that $T$ is a stopping time of $(X_t)_{t\ge0}$, conditional on $X_0$ and on the process $(N_t^{k,l})_{t\ge0}$, as $(k,l)\ne(i,j)$, which are independent of $N_t^{i,j}$, then $\{T\le t\}$ depends only on $(N_s^{i,j}:\ s\le t)$, so $\tilde{N}_t^{i,j}:=N_{T+t}^{i,j}-N_T^{i,j}$ is a Poisson process of rate $q_{i,j}$ independent of $(N_s^{i,j}:\ s\le T)$ and of $X_0$ and $(N_t^{k,l})_{t\ge0},\ (k,l)\ne(i,j)$.

Hence, conditional on $T<\infty$ and $X_T=i$, then $(X_{T+t})_{t\ge0}\overset{(d)}{=}(X_t)_{t\ge0}$ (in law) and $(X_{T+t})_{t\ge0}\ind(X_s:\ s\le T)$. In particular, setting $T:=J_n$, we have that, conditional on $J_n<\infty$ and $Y_n=i$, then $S_{n+1}\sim E(q_i)$ and $Y_{n+1}$ has distribution $(\pi_{i,j}:\ j\in I)$ and $S_{n+1}\ind Y_{n+1}$, and independent of $Y_0,\hdots,Y_n$ and $S_1,\hdots,S_n$. 

Hence, $(X_t)_{t\ge0}$ is $\Markov(\lambda,Q)$ and has the strong Markov property.

\subsection{Explosion}

\begin{definition}[Explosion Time]\label{Explosion time}
	Consider a process with jump times $J_0,J_1,\hdots$ and holding times $S_1,S_2,\hdots$. The \textit{explosion time} $\zeta$ is given by\begin{equation*}
		\zeta:=\sup_{n\in\mathbb{N}}J_n=\sum_{n\in\mathbb{N}}S_n.
	\end{equation*}
\end{definition}

The following result provides condition for the non–explosion of a Markov chain.

\begin{theorem}[Non–Explosion]\label{Non-Explosion}
	Let $(X_t)_{t\ge0}\sim\Markov(\lambda,Q)$. Then $(X_t)_{t\ge0}$ does not explode if at least one of the following conditions holds:\begin{itemize}
		\item[(i)] $I$ is finite,
		\item[(ii)] $\sup_{i\in I}\{-q_{i,i}\}<\infty$,
		\item[(iii)] $X_0=i$ and $i$ is recurrent (see Definition \ref{Recurrent}) for the jump chain.
	\end{itemize}
	\begin{proof}
		Define $T_n:=q(Y_{n-1})\cdot S_n$. Then $(T_n)_{n\in\mathbb{N}}$ are independent $E(1)$ random variables and are independent of $(Y_n)_{n\in\mathbb{N}}$.
		
		In cases (i),(ii), it follows that $q:=\sup_{i\in I}q_i<\infty$, leading to\begin{equation*}
			q\cdot\zeta\ge\sum_{n\in\mathbb{N}}T_n=\infty
		\end{equation*} with probability $1$. Hence, $\zeta=\infty$.
	
		In case (iii), since $(Y_n)_{n\in\mathbb{N}}$ visits $i$ infinity often, say at times $N_1,N_2,\hdots$, we obtain\begin{equation*}
			q_i\cdot\zeta\ge\sum_{m\in\mathbb{N}}T_{N_m+1}=\infty
		\end{equation*} with probability $1$. Thus, $\zeta=\infty$.
	\end{proof}
\end{theorem}

\section{Further Properties}

General reference: \cite[Chapter~3]{Norris}.

\subsection{Class Structure}

\begin{definition}
	Let $i,j\in I$, we say that $i$ \textit{leads} to $j$ and write $i\to j$ if\begin{equation*}
		\mathbb{P}_i(X_t=j\ \mbox{for some}\ t\ge0)>0,
	\end{equation*} where $\mathbb{P}_i(\cdot):=\mathbb{P}(\cdot\mid X_0=i)$.
\end{definition}

\begin{definition}[Communication]
	Let $i,j\in I$, we say that $i$ \textit{communicates} with $j$ and write $i\leftrightarrow j$ if both $i\rightarrow j$ and $j\to i$ hold.
\end{definition}

\begin{theorem}\label{Thm Communication}
	Let $i\ne j$ be distinct states in $I$, the following are equivalent:\begin{itemize}
		\item $i\to j$,
		\item $i\to j$ for the jump chain,
		\item There exist states $i_0,\hdots,i_n\in I$ such that $i_0=i,\ i_n=j$ and $q_{i_0,i_1}\cdots~q_{i_{n-1},i_n}>0$.
	\end{itemize}
	\begin{proof}
		See \cite[Theorem~3.2.1]{Norris}.
	\end{proof}
\end{theorem}

\begin{proposition}
	The communication relation is an equivalence relation.
	\begin{proof}
		Reflexivity and symmetry follow directly from the definition of communication. To establish transitivity, we invoke Theorem \ref{Thm Communication}. Let $i,j,k\in I$ such that $i\leftrightarrow j$ and $j\leftrightarrow k$, we aim to show that $i\leftrightarrow k$ as well. To this end, we separately prove that both $i\to k$ and $k\to i$ hold:\begin{itemize}
			\item Since $i\leftrightarrow j$, it follows by definition that $i\to j$. Then, by Theorem \ref{Thm Communication}, there exist states $i_0,\hdots,i_n\in I$ with $i_0=i,\ i_n=j$, such that $q_{i,i_1}\cdots q_{i_{n-1},j}>0$. Similarly, from $j\leftrightarrow k$, there exist $j_0,\hdots,j_m\in I$ with $j_0=j,\ j_m=k$, such that $q_{j,j_1}\cdots q_{j_{m-1},k}>0$. Concatenating the two paths yields \[q_{i,i_1}\cdots q_{i_{n-1},j}\cdot q_{j,j_1}\cdots q_{j_{m-1},k}>0,\] which shows that $i\to k$.
			\item By an analogous argument, we conclude that $k\to i$.
		\end{itemize}
	\end{proof}
\end{proposition}

\subsection{Hitting Times, Absorption Probabilities and Average Times Taken}

Let $(X_t)_{t\ge0}$ be a Markov chain with generator matrix $Q$.

\begin{definition}[Hitting Time]
	The \textit{hitting time} of $A\subseteq I$ is the random variable $D^A$ defined by:\begin{equation*}
		D^A(\omega):=\inf\{t\ge0:\ X_t(\omega)\in A\}.
	\end{equation*}
\end{definition}

Since the chains we deal with are minimal, let $H^A$ be the hitting time of $A$ for the jump chain, then $\{H^A<\infty\}=\{D^A<\infty\}$ and $D^A=J_{H^A}$. The probability, starting from $i$, that $(X_t)_{t\ge0}$ ever hits $A$ is then:\begin{equation*}
	h_i^A:=\mathbb{P}_i(D^A<\infty)=\mathbb{P}_i(H^A<\infty).
\end{equation*}

\begin{definition}[Absorption Probability]
	When $A$ is a closed class of communication, the value $h_i^A$ is called the \textit{absorption probability} of $A$.
\end{definition}

Since the hitting probabilities are those of the jump chain, from the theory of discrete–time Markov chains we inherit the following result.

\begin{proposition}
	The vector of hitting probabilities $h^A:=(h_i^A:\ i\in I)$ is the minimal non–negative solution to:\begin{equation*}
		\begin{cases}
			h_i^A=1&\ \mbox{for}\ i\in A\\
			\sum_{j\in I}q_{i,j}h_j^A=0&\ \mbox{for}\ i\notin A
		\end{cases}.
	\end{equation*}
\end{proposition}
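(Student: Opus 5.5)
The plan is to reduce the statement to the discrete-time result for the jump chain, and then turn the jump-chain equations into the $Q$-matrix equations in the statement.

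First, recall from the preceding paragraph that $h_i^A=\mathbb{P}_i(H^A<\infty)$, where $H^A$ is the hitting time of $A$ for the jump chain $(Y_n)_{n\in\mathbb{N}}\sim\Markov(\lambda,\Pi)$. The discrete-time theory (e.g.\ \cite[Theorem~1.3.2]{Norris}) gives that $h^A$ is the minimal non-negative solution of
\begin{equation*}
\begin{cases}
h_i^A=1 & \mbox{for}\ i\in A\\
h_i^A=\sum_{j\in I}\pi_{i,j}h_j^A & \mbox{for}\ i\notin A
\end{cases}.
\end{equation*}
If one prefers not to quote it, this is proved in two steps. For the equation, condition on $Y_1$ via the Markov property. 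For minimality, take any non-negative solution $x$ and iterate the equation $n$ times, which gives $x_i\ge\mathbb{P}_i(H^A\le n)$; letting $n\to\infty$ yields $x_i\ge h_i^A$.

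Second, I show that for $i\notin A$ the two systems impose the same constraint, so they have the same non-negative solutions.
\begin{itemize}
\item If $q_i>0$, then $\pi_{i,j}=q_{i,j}/q_i$ for $j\ne i$ and $\pi_{i,i}=0$. Hence $h_i=\sum_j\pi_{i,j}h_j$ is equivalent to $q_ih_i=\sum_{j\ne i}q_{i,j}h_j$. Since $q_{i,i}=-q_i$, this is exactly $\sum_{j}q_{i,j}h_j=0$.
\item If $q_i=0$, then the row $(q_{i,j})_j$ is identically zero, because $q_{i,j}\ge0$ for $j\ne i$ and these sum to $q_i=0$. So $\sum_j q_{i,j}h_j=0$ holds for every $h$. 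On the jump-chain side, $\pi_{i,i}=1$, so the equation reads $h_i=h_i$, which is also vacuous.
\end{itemize}
The solution sets therefore coincide. Minimality is a property of the solution set alone, so the minimal non-negative solution transfers from the jump-chain system to the $Q$-system.

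The main obstacle is bookkeeping rather than depth, in two places. The absorbing case $q_i=0$ must be handled separately, since dividing by $q_i$ is not allowed there. Sums with possibly infinite entries on a countable $I$ must also make sense. Because all terms are non-negative, the sum $\sum_{j\ne i}q_{i,j}h_j$ is well defined in $[0,\infty]$. When $q_i<\infty$ and $h_j\le1$ it is finite, so rearranging to $\sum_j q_{i,j}h_j=0$ is legitimate.

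Finally, I would note that $\{H^A<\infty\}=\{D^A<\infty\}$ uses minimality of the chain. Since this was already established in the text before the statement, the continuous-time hitting probabilities are exactly those of the jump chain.
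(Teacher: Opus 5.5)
Your proposal is correct and follows the same route as the paper, which observes that $h_i^A=\mathbb{P}_i(H^A<\infty)$ is a jump-chain hitting probability and inherits the characterisation from discrete-time theory without further proof. You go slightly further by writing out the row-by-row translation between the $\Pi$-equations and the $Q$-equations, including the absorbing case $q_i=0$; this is exactly the rewriting the paper leaves implicit.
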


Analogously, we have a result for the average times taken.

\begin{definition}[Average Time Taken]
	The \textit{average time taken}, starting from $i$, for $(X_t)_{t\ge0}$ to reach $A$ is \[k_i^A:=\mathbb{E}_i[D^A],\] where $\mathbb{E}_i[\cdot]=\mathbb{E}[\cdot\mid X_0=i]$.
\end{definition}

\begin{proposition}
	If $-q_{i,i}>0\ \forall i\notin A$, the vector $k^A:=(k_i^A:\ i\in I)$ of expected hitting times is the minimal non–negative solution to\begin{equation*}
		\begin{cases}
			k_i^A=0&\ \mbox{for}\ i\in A\\
			-\sum_{j\in I}q_{i,j}k_j^A=1&\ \mbox{for}\ i\notin A
		\end{cases}.
	\end{equation*}
\end{proposition}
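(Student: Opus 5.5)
The plan is to follow the same pattern as the preceding proposition on hitting probabilities. We reduce to the jump chain $(Y_n)_{n\in\mathbb{N}}$ and the holding times, using the Markov property at the first jump time $J_1$. The argument has two parts: first show that $k^A$ satisfies the system, then show that it is minimal among non-negative solutions.

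\textbf{Step 1: $k^A$ solves the system.} For $i\in A$ we have $D^A=0$, so $k_i^A=0$. For $i\notin A$, the hypothesis $q_i=-q_{i,i}>0$ means that, under $\mathbb{P}_i$, the first holding time satisfies $J_1=S_1\sim E(q_i)$. Moreover the chain jumps to $Y_1=j$ with probability $\pi_{i,j}=q_{i,j}/q_i$, independently of $S_1$. Since $X_0=i\notin A$, we have $D^A=J_1+\tilde D^A$, where $\tilde D^A$ is the hitting time of $A$ for the shifted process $(X_{J_1+t})_{t\ge0}$. By the strong Markov property at $J_1$ (established in the Poisson construction), conditional on $Y_1=j$ this shifted process is $\Markov(\delta_j,Q)$ and is independent of $S_1$. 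Hence
\[
k_i^A=\mathbb{E}_i[S_1]+\sum_{j\ne i}\pi_{i,j}k_j^A=\frac1{q_i}+\sum_{j\ne i}\frac{q_{i,j}}{q_i}k_j^A .
\]
Multiplying by $q_i$ and recalling $q_{i,i}=-q_i$ gives $-\sum_{j\in I}q_{i,j}k_j^A=1$. This identity holds in $[0,\infty]$ as well.

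\textbf{Step 2: minimality.} Let $y\ge0$ be any solution of the system. For $i\notin A$, the system can be rewritten as $y_i=\frac1{q_i}+\sum_{j\notin A,\,j\ne i}\pi_{i,j}y_j$, because $y_j=0$ for $j\in A$. Iterating this identity $n$ times and dropping the final non-negative remainder term gives
\[
y_i\ \ge\ \sum_{m=1}^{n}\mathbb{E}_i\big[S_m\,\mathbf 1_{\{H^A\ge m\}}\big]=\mathbb{E}_i\big[J_{n\wedge H^A}\big]=\mathbb{E}_i\big[D^A\wedge J_n\big].
\]
Here we use that, conditional on $Y_{m-1}=k\notin A$, the holding time $S_m$ has mean $1/q_k$, which is exactly the coefficient produced at the $m$-th step of the iteration. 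We then let $n\to\infty$. By minimality we have $J_n\uparrow\zeta$ and $D^A\le\zeta$ (with the convention $D^A=\infty$ if $A$ is not hit before explosion), so $D^A\wedge J_n\uparrow D^A$. Monotone convergence then yields $y_i\ge k_i^A$.

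\textbf{Main obstacle.} The expected difficulty is the bookkeeping in Step 2. One has to identify the $n$-fold iterate precisely with $\mathbb{E}_i[D^A\wedge J_n]$. This needs the conditional independence of the holding times given the jump chain, from the definition of $\Markov(\lambda,Q)$, together with the fact that the event $\{H^A\ge m\}$ is determined by $Y_0,\hdots,Y_{m-1}$. The cleanest way is induction on $n$, proving that
\[
y_i\ge\mathbb{E}_i\Big[\sum_{m=1}^{n}S_m\mathbf 1_{\{H^A\ge m\}}\Big]+\mathbb{E}_i\big[y_{Y_n}\mathbf 1_{\{H^A>n\}}\big]
\]
and then discarding the last term, which is non-negative.
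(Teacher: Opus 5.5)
The bound you get in Step 2, $y_i\ge\mathbb{E}_i[J_{n\wedge H^A}]=\mathbb{E}_i[D^A\wedge J_n]$, is correct. Your two steps follow the standard argument that the paper relies on implicitly, since it states the proposition without proof, by analogy with hitting probabilities. The gap is in letting $n\to\infty$. Your claim $D^A\le\zeta$ is false exactly on the event $\{H^A=\infty,\ \zeta<\infty\}$. There the minimal chain moves to the cemetery $\infty\notin A$ at time $\zeta$, so $D^A=\infty>\zeta$; your own parenthetical convention contradicts the inequality it is attached to. Consequently $D^A\wedge J_n\uparrow D^A\wedge\zeta$, not $D^A$. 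Monotone convergence then only gives $y_i\ge\mathbb{E}_i[D^A\wedge\zeta]=\mathbb{E}_i[J_{H^A}]$ (with $J_\infty:=\zeta$), which can be strictly smaller than $k_i^A$.

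No argument can close this gap, because the proposition as stated has no non-explosion hypothesis and is false without one. Take $I=\{0,1,2,\dots\}$ and $A=\{0\}$, let $0$ be absorbing, and set $q_{i,i+1}=-q_{i,i}=2^i$ for $i\ge1$. From any $i\ge1$ the chain moves only to the right and explodes, with $\mathbb{E}_i[\zeta]=2^{1-i}$, without ever visiting $0$, so $k_i^A=\infty$. Yet $y_0=0$, $y_i=2^{1-i}$ satisfies $-\sum_j q_{i,j}y_j=2^i(y_i-y_{i+1})=1$ for every $i\ge1$. This is a finite non-negative solution strictly below $k^A$; in fact $y_i=\mathbb{E}_i[D^A\wedge\zeta]$, which is exactly what your argument proves.

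The missing ingredient is a non-explosion assumption. If $\mathbb{P}_i(H^A=\infty,\ \zeta<\infty)=0$ for all $i$, then $J_{H^A}=D^A$ almost surely and your proof is complete as written. This holds, for example, when $I$ is finite or $\sup_i q_i<\infty$, by Theorem~\ref{Non-Explosion}, so it covers the finite-state setting the paper actually uses.

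A smaller point concerns Step 1. When $k_i^A=\infty$, rearranging to $-\sum_j q_{i,j}k_j^A=1$ involves $\infty-\infty$. In $[0,\infty]$ the system should instead be read as $k_i=q_i^{-1}+\sum_{j\ne i}\pi_{i,j}k_j$.
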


\subsection{Recurrence and Transience}

Let $(X_t)_{t\ge0}$ be a Markov chain with generator matrix $Q$.

\begin{definition}[Recurrent State]\label{Recurrent}
	We say a state $i\in I$ is \textit{recurrent} if\begin{equation*}
		\mathbb{P}_i(\{t\ge0:\ X_t=i\}\ \mbox{is unbounded})=1.
	\end{equation*}
\end{definition}

\begin{definition}[Transient State]
	We say a state $i\in I$ is \textit{transient} if\begin{equation*}
		\mathbb{P}_i(\{t\ge0:\ X_t=i\}\ \mbox{is unbounded})=0.
	\end{equation*}
\end{definition}

Note that if $(X_t)_{t\ge0}$ can explode starting from $i$, then $i$ is certainly not recurrent (so it is transient, as we highlight below).

In the following result, we establish the relation between the notions of recurrence and transience for a continuous–time Markov chain and for its jump chain.

\begin{theorem}[Recurrence and Transience from the Jump Chain] \label{Recurrence and Transience Jump Chain}
	The following hold:\begin{itemize}
		\item[(i)] If $i$ is recurrent for the jump chain $(Y_n)_{n\in\mathbb{N}}$ of $(X_t)_{t\ge0}$, then it is recurrent also for $(X_t)_{t\ge0}$,
		\item[(ii)] If $i$ is transient for the jump chain, then it is transient also for $(X_t)_{t\ge0}$,
		\item[(iii)] Every state is either recurrent or transient,
		\item[(iv)] Recurrence and transience are class properties.
	\end{itemize}
	\begin{proof}
		\color{white} A\color{black}
		\begin{itemize}
			\item[(i)] Suppose $i$ is recurrent for $(Y_n)_{n\in\mathbb{N}}$. If $X_0=i$ then $(X_t)_{t\ge0}$ does not explode and $J_n\to\infty$ by Theorem \ref{Non-Explosion}. Also $X_{J_n}=Y_n=i$ infinitely often, so $\{t\ge0:\ X_t=i\}$ is unbounded, with probability $1$.
			\item[(ii)] Suppose $i$ is transient for $(Y_n)_{n\in\mathbb{N}}$. If $X_0=i$ then:\begin{equation*}
				N:=\sup\{n\ge0:\ Y_n=i\}<\infty,
			\end{equation*} so $\{t\ge0:\ X_t=i\}$ is bounded by $J_{N+1}$, which is finite, with probability $1$, because $(Y_n:\ n\le N)$ cannot include an absorbing state.
			\item[(iii),(iv)] Follow from points (i) and (ii) applying the results for discrete–time Markov chains.
		\end{itemize}
	\end{proof}
\end{theorem}

We now give a result about recurrence, transience and the first passage times.

\begin{definition}[First Passage Time]
	The \textit{first passage time} $T_i$ of $(X_t)_{t\ge0}$ to state $i$ is defined by\begin{equation*}
		T_i(\omega):=\inf\{t\ge J_1(\omega):\ X_t(\omega)=i\}.
	\end{equation*}
\end{definition}

\begin{theorem}[Recurrence, Transience and First Passage Time]
	The following dichotomy holds:\begin{itemize}
		\item If $q_i=0$ or $\mathbb{P}_i(T_i<\infty)=1$, then $i$ is recurrent and $\int_0^\infty p_{i,i}(t)\,\mathrm{d}t=\infty$,
		\item If $q_i>0$ and $\mathbb{P}_i(T_i<\infty)<1$, then $i$ is transient and $\int_0^\infty p_{i,i}(t)\,\mathrm{d}t<\infty$.
	\end{itemize}
	\begin{proof}
		See \cite[Theorem~3.4.2]{Norris}.
	\end{proof}
\end{theorem}

The last result of this section shows that recurrence and transience are determined by any discrete–time sampling of a continuous–time Markov chain $(X_t)_{t\ge0}$.

\begin{proposition}
	Let $h>0$ be given and set $Z_n:=X_{n\cdot h}$, then:\begin{itemize}
		\item If $i$ is recurrent for $(X_t)_{t\ge0}$, then it is recurrent for $(Z_n)_{n\in\mathbb{N}}$,
		\item If $i$ is transient for $(X_t)_{t\ge0}$, then it is transient for $(Z_n)_{n\in\mathbb{N}}$.
	\end{itemize}
\end{proposition}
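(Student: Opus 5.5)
The plan is to pass through the integral criterion from the preceding theorem (recurrence, transience and first passage time), which characterises the two cases by $\int_0^\infty p_{i,i}(t)\,\mathrm{d}t=\infty$ and $<\infty$ respectively. For the skeleton $(Z_n)_{n\in\mathbb{N}}$ we use the standard discrete-time criterion: $i$ is recurrent for $Z$ if and only if $\sum_{n\ge0}p_{i,i}(nh)=\infty$, since $Z$ is discrete-time $\Markov(\lambda,P(h))$ with $P(h)^n=P(nh)$ by the semigroup property. The whole proof therefore reduces to comparing $\int_0^\infty p_{i,i}(t)\,\mathrm{d}t$ with $h\sum_{n}p_{i,i}(nh)$.

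\textbf{Lower bound.} For $t\in[nh,(n+1)h)$, staying in $i$ throughout a holding period gives
\begin{equation*}
p_{i,i}(t)\ge p_{i,i}(nh)\,\mathbb{P}_i(J_1>t-nh)\ge p_{i,i}(nh)\,e^{-q_ih}.
\end{equation*}
This uses the Markov property at time $nh$ and the fact that, once in $i$, the chain remains there for an $E(q_i)$ time. Integrating over each such interval yields
\begin{equation*}
\int_0^\infty p_{i,i}(t)\,\mathrm{d}t\ \ge\ h\,e^{-q_ih}\sum_{n\ge0}p_{i,i}(nh).
\end{equation*}
Hence transience of $i$ for $X$, i.e.\ finiteness of the integral, forces finiteness of the series, so $i$ is transient for $Z$.

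\textbf{Upper bound.} The same argument applied on $[nh-h,nh)$ gives $p_{i,i}(nh)\ge e^{-q_ih}p_{i,i}(t)$ for $t$ in that interval. Integrating,
\begin{equation*}
\int_{(n-1)h}^{nh}p_{i,i}(t)\,\mathrm{d}t\le h\,e^{q_ih}\,p_{i,i}(nh).
\end{equation*}
Summing over $n$, recurrence of $i$ for $X$ (infinite integral) gives an infinite series, so $i$ is recurrent for $Z$.

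The case $q_i=0$ is trivial: $i$ is absorbing, so $p_{i,i}\equiv1$ and $Z$ stays at $i$ forever. The main point needing care is the inequality $p_{i,i}(t+s)\ge p_{i,i}(t)e^{-q_is}$. I would justify it via the Markov property (the semigroup identity $p_{i,i}(t+s)=\sum_j p_{i,j}(t)p_{j,i}(s)\ge p_{i,i}(t)p_{i,i}(s)$) together with $p_{i,i}(s)\ge\mathbb{P}_i(J_1>s)=e^{-q_is}$. Both follow from the jump-chain/holding-time construction given earlier.
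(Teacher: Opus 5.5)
Your proof is correct. The paper states this proposition without proof; it is Theorem 3.4.3 of Norris, covered by the section's general reference. So there is no in-paper argument to match, and the natural comparison is with the standard proof.

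\textbf{Recurrent case.} Your argument is exactly Norris's. You use the estimate $p_{i,i}(nh)\ge e^{-q_ih}\,p_{i,i}(t)$ for $t\in[(n-1)h,nh)$, obtained from $p_{i,i}(t+s)\ge p_{i,i}(t)p_{i,i}(s)$ and $p_{i,i}(s)\ge e^{-q_is}$. You then integrate and compare $\int_0^\infty p_{i,i}(t)\,\mathrm{d}t$ with $\sum_n p_{i,i}(nh)$. The discrete-time series criterion applies because $Z$ is discrete-time Markov with $P(h)^n=P(nh)$. This also works for the minimal, possibly substochastic, semigroup, so explosion causes no trouble. You also correctly read ``recurrent $\iff$ integral infinite'' off the exhaustive dichotomy in the first-passage theorem.

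\textbf{Transient case.} Here you differ from the standard proof. You prove it analytically, with the mirror-image bound $p_{i,i}(t)\ge e^{-q_ih}\,p_{i,i}(nh)$ on $[nh,(n+1)h)$. Norris instead treats it as immediate pathwise: if $\{t\ge0:\ X_t=i\}$ is bounded almost surely, then $\{n:\ Z_n=i\}$ is finite almost surely, since $Z_n=i$ forces $nh\in\{t:\ X_t=i\}$. That route needs neither the integral criterion nor the series criterion.

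\textbf{Trade-off.} Your route buys symmetry: a single two-sided comparison, with constants $he^{\pm q_ih}$, shows the integral and the series are finite or infinite together, which yields both implications at once. The cost is invoking heavier machinery in the direction where it is not needed. Only the recurrent case genuinely requires the analytic estimate, because returns of $X$ to $i$ at unbounded times need not land on the grid $h\mathbb{N}$.

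Your separate treatment of $q_i=0$ is harmless but unnecessary, since the bounds hold verbatim with $e^{\pm q_ih}=1$.
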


\subsection{Invariant Distributions}

Let $(X_t)_{t\ge0}$ be a Markov chain with generator matrix $Q$.

\begin{definition}[Invariant Distribution]
	We say that $\lambda$ is \textit{invariant} for $(X_t)_{t\ge0}$ if \[\lambda Q=0.\]
\end{definition}

\begin{theorem}[Invariant Distributions and Jump Matrices]
	Let $\Pi$ be the jump matrix of $Q$ and $\lambda$ a measure, then the following are equivalent:\begin{itemize}
		\item $\lambda$ is invariant, i.e., $\lambda Q=0$,
		\item $\mu\Pi=\mu$, where $\mu_i=\lambda_iq_i$.
	\end{itemize}
	\begin{proof}
		It holds that $q_i\cdot(\pi_{i,j}-\delta_{i,j})=q_{i,j}$ for all $i,j$, so:\begin{equation*}
			\left(\mu(\Pi-I)\right)_j=\sum_{i\in I}\mu_i(\pi_{i,j}-\delta_{i,j})=\sum_{i\in I}\lambda_iq_{i,j}=(\lambda Q)_j.
		\end{equation*}
	\end{proof}
\end{theorem}

\begin{theorem}[Existence and Uniqueness of Invariant Distributions for Irreducible and Recurrent Processes]
	Suppose that $Q$ is irreducible and recurrent. Then $Q$ has an invariant measure $\lambda$, which is unique up to scalar multiples.
	\begin{proof}
		See \cite[Theorem~3.5.2]{Norris}.
	\end{proof}
\end{theorem}

\begin{definition}[Positive and Null Recurrence] \label{Positive and Null Recurrence}
	If $q_i=0$ or $m_i=\mathbb{E}_i[T_i]<\infty$ then we say that $i$ is \textit{positive} recurrent. Otherwise, a recurrent state $i$ is called \textit{null} recurrent.
\end{definition}

As in the discrete–time case, positive recurrence is closely related to the existence of an invariant distribution.

\begin{theorem}[Existence of Invariant Distributions and Positive Recurrence]\label{Invariant distributions and positive recurrence}
	Let $Q$ be an irreducible Q–matrix, then the following are equivalent:\begin{itemize}
		\item Every state $i\in I$ is positive recurrent,
		\item There exists a state $i\in I$ which is positive recurrent,
		\item $Q$ is non–explosive and has an invariant distribution $\lambda$. In this case, it holds that \[m_i=\frac{1}{\lambda_iq_i}\quad \mbox{for all}\ i\in I.\]
	\end{itemize}
	\begin{proof}
		See \cite[Theorem~3.5.3]{Norris}.
	\end{proof}
\end{theorem}

The last theorem of this section states an important translation result for irreducible non–explosive Q–matrices with invariant distributions.

\begin{theorem}
	Let $Q$ be an irreducible non–explosive Q–matrix having an invariant distribution $\lambda$. If $(X_t)_{t\ge0}$ is $\Markov(\lambda,Q)$, then so is $(X_{s+t})_{t\ge0}$ for any $s\ge0$.
	\begin{proof}
		See \cite[Theorem~3.5.6]{Norris}.
	\end{proof}
\end{theorem}

\subsection{Convergence to Equilibrium}

We finally investigate the limiting behaviour of $p_{i,j}(t)$ as $t\to\infty$ and its relation to invariant distributions.

\begin{theorem}[Convergence to Equilibrium]
	Let $Q$ be an irreducible non–explosive Q–matrix with semigroup $(P(t):\ t\ge0)$, and having an invariant distribution $\lambda$, then for all states $i,j\in I$, as $t\to\infty$, we have\begin{equation*}
		p_{i,j}(t)\to\lambda_j.
	\end{equation*}
	\begin{proof}
		See \cite[Theorem~3.6.2]{Norris}.
	\end{proof}
\end{theorem}

\begin{theorem}
	Let $Q$ be an irreducible Q–matrix and let $(X_t)_{t\ge0}$ be $\Markov(\nu,Q)$ for some distribution $\nu$, then for all $j\in I$, as $t\to\infty$, we have\begin{equation*}
		\mathbb{P}(X_t=j)\to\frac{1}{q_jm_j},
	\end{equation*} where $m_j=\mathbb{E}_j[T_j]$.
	\begin{proof}
		See \cite[Theorem~3.6.3]{Norris}.
	\end{proof}
\end{theorem}

\subsection{Ergodic Theorem} \label{ET}

We now present with the culmination of this chapter: the Ergodic Theorem. It will be used in Lemma \ref{Lemma 2.1} for proving the almost sure convergence of the empirical distribution to the deterministic one. This result states that long–run averages for continuous–time chains display the same sort of behaviour as in the discrete–time case, and for similar reasons.

\begin{theorem}[Ergodic Theorem]
	Let $Q$ be irreducible and let $\nu$ be any distribution. If $(X_t)_{t\ge0}$ is $\Markov(\nu,Q)$, then\begin{equation}
		\mathbb{P}\left(\frac{1}{t}\int_0^t\mathds{1}_{\{X_s=i\}}\,\mathrm{d}s\xrightarrow[t\to\infty]{}\frac{1}{m_iq_i}\right)=1,
	\end{equation} where $m_i=\mathbb{E}_i[T_i]$, i.e., the expected return time to state $i$.

	Moreover, in the positive recurrent case, consider the unique invariant distribution $(\lambda_i:\ i\in I)$, then for any bounded function $f: I\to\mathbb{R}$ we have\[\mathbb{P}\left(\frac{1}{t}\int_0^tf(X_s)\,\mathrm{d}s\xrightarrow[t\to\infty]{}\bar{f}\right)=1,\] where $\bar{f}:=\sum_{i\in I}\lambda_if_i$.
	\begin{proof}
		See \cite[Theorem~3.8.1]{Norris}.
	\end{proof}
\end{theorem}

\chapter{Model}\label{Model}

This chapter outlines the model used to study the dynamics of a population of players, focusing on both the finite–population system and its corresponding mean–field approximation as $N \to \infty$. We introduce the formal setup of the model, detailing the time evolution of players’ decisions, the states of the population, and the actions taken by each player. Then, we define key components such as state–action distributions, transition processes, and rewards. Finally, we explain the transition from the finite population model to the mean–field approximation, and it establishes the relationship between individual behaviour and collective dynamics.

\paragraph{Notation.}

\begin{itemize}
	\item The set of consecutive positive integer numbers $\{1,\hdots,N\}$, where $N\in\mathbb{N}$, is denoted by $[N]$.
	\item The $i$–th entry of a vector $x\in\mathbb{R}^n$ is denoted by $x_i$.
	\item The $n$–dimensional vector of zeros and ones are denoted by $\mathbb{0}_n$ and $\mathds{1}_n$, respectively. Alternatively, $\mathbb{0}$ and $\mathds{1}$ denote the vectors of zeros and ones of appropriate dimensions, respectively.
	\item The unit simplex of dimension $n$ is the set denoted by $\Delta_n:=\{x\in \mathbb{R}^n_{\ge0}:\ \mathds{1}^Tx\le1\}$.
	\item The indicator function of $a\in \mathcal{X}$ is denoted by $\mathds{1}_a: \mathcal{X}\rightarrow\{0,1\}$ and $\mathds{1}_a(x)=0$ if $x\ne a$ and $\mathds{1}_a(x)=1$ if $x=a$.
	\item The support of a function $f: \mathcal{X}\rightarrow\mathbb{R}$ is denoted by $\supp(f):=\{x\in\mathcal{X}: f(x)\ne0\}$.
	\item The set of all Borel probability measures on $\mathcal{A}$ is denoted by $\mathcal{P}(\mathcal{A})$. If a probability measure $\mu(t)\in\mathcal{P}(\mathcal{A})$ is a function of time, we use $\mu[a](t)$ to denote the mass on $a\in\mathcal{A}$.
	\item The expected value of a random variable $X$ is denoted by $\mathbb{E}[X]$.
\end{itemize}

\section{Model Formulation}

The game is analysed first for a population of $N$ players and then for the continuum mean–field model obtained in the limit $N\to\infty$. The approximation properties of the mean–field model constitute one of the main topics of this thesis.

\subsection{Finite Population Model} \label{Finite Population Model}

The finite population model is described by:\begin{itemize}
	\item \textbf{Time:} Each agent is equipped with an independent Poisson clock of rate $R_d>0$. When the a player's clock rings they take an action. 
	
	\item \textbf{States:} At each time $t$, each player $i\in[N]$ has an individual state $s\in\mathcal{S}$, where $\mathcal S$ is a finite set of states. Each player's state is characterized by the random variable $s^{i,N}(t)\in\mathcal S$. Between two consecutive rings of their clock the state of a player remains constant, so the sample paths of $s^{i,N}(\cdot)$ are piecewise constant with jumps at decision times.
	
	\item \textbf{Actions:} When a player is in state $s\in\mathcal S$, the actions available to them are the non–empty finite set $\mathcal A(s)$. We define the overall action set by $\mathcal A := \bigcup_{s\in\mathcal S}\mathcal A(s)$. The action that player $i\in[N]$ takes at time $t$ if its clock rings is characterized by the random variable $a^{i,N}(t)$. The empirical joint state–action distribution at each time $t$ is characterized by the random variable\[\hat{\mu}^N_{\mathcal{S}\times\mathcal{A}}[s,a](t):=\frac{1}{N}\sum_{i=1}^N\mathds{1}_{s^{i,N}(t)}(s)\cdot\mathds{1}_{a^{i,N}(t)}(a).\]
	
	\item \textbf{State Transitions:} Immediately after performing action $a$ in state $s$, a player jumps to a new state drawn according to the Markov transition kernel $\phi: \mathcal{S}\times\mathcal{A}\rightarrow\mathcal{P}(\mathcal{S})$. We denote by $\phi(\cdot\mid s,a)$ the distribution of the new state.
	
	\item \textbf{Single–Stage Reward:} The instantaneous payoff obtained by a player who takes action $a$ in state $s$ at time $t$, when the population distribution is $\hat\mu^{N}_{\mathcal S\times\mathcal A}(t)$, is given by $r(s,a,\hat{\mu}^N_{\mathcal{S}\times\mathcal{A}}[s,a](t))$, where $r$ is a real–valued function $r: \mathcal{S}\times\mathcal{A}\times\mathcal{P}(\mathcal{S}\times\mathcal{A})\rightarrow \mathbb{R}$.
	$r\bigl(s,a,\hat\mu^{N}*{\mathcal S\times\mathcal A}(t)\bigr)$. Notice that \( N R_d \hat{\mu}^N_{\mathcal{S} \times \mathcal{A}}[s', a'](t) \) corresponds to the expected flow of players taking action \( a' \) from state \( s' \), which can be used to model a decreasing reward upon congestion of a resource, for instance.
	
	\item \textbf{Discounted Infinite–Horizon Reward:}  Let $0<\beta<1$ be a discount factor and let $t_k$ be the time of player $i$'s $k$–th clock ring. Their infinite–horizon discounted reward is given by\[\mathbb{E}\left[\sum_{k=0}^\infty\beta^k\cdot r\left(s^{i,N}(t_k),\ a^{i,N}(t_k),\ \hat{\mu}^N_{\mathcal{S}\times\mathcal{A}}(t_k)\right)\right].\]
\end{itemize}

We impose the following mild continuity assumption on the single stage reward as follows, which is necessary to establish the existence of equilibria.

\begin{Assumption} \label{Assumption1}
	The single–stage reward function $r(s,a,\mu_{\mathcal S\times\mathcal A})$ is continuous with respect to $\mu_{\mathcal{S}\times\mathcal{A}}$.
\end{Assumption}

\subsection{Policies}

Given the information available to them, each player takes an action at each decision instant. A \textit{policy} is a function that maps each state to a probability distribution over the set of actions. We consider the following information structure on a policy of a player:\begin{itemize}
	\item \textbf{Oblivious}: The policy does not depend on the distribution of players' states. The dependence is indirect through the rewards of each action \cite{Adlakha}.
	\item \textbf{Markov}: The policy is present–dependent, in the sense that it depends only on the individual state of a player at the time their clock rings.
	\item \textbf{Stationary}: The policy is time–invariant: when a player chooses a policy, they plan to use it forever.
\end{itemize}

A policy satisfying the properties above can be characterized as a map $u: \mathcal{S}\rightarrow\mathcal{P}(\mathcal{A})$. The set of all such policies is denoted by $\mathcal{U}$ and defined as\[\mathcal{U}:=\{u: \mathcal{S}\rightarrow\mathcal{P}(\mathcal{A})\mid\supp(u(s))\subseteq\mathcal{A}(s),\ \forall s\in\mathcal{S}\}.\]

The subset of \textit{deterministic} policies, which map every state to a randomization of actions that places all the mass in a single action, is denoted by $\mathcal{U}_D\subseteq\mathcal{U}$ and defined as\[\mathcal{U}_D:=\{u\in\mathcal{U}\mid\forall s\in\mathcal{S}\ \exists a\in\mathcal{A}(s):\ \supp(u(s))=\{a\}\}.\]

The policy that player $i\in[N]$ plays at time $t$ is characterized by the random variable $u^{i,N}(t)$. In the sections to come, we introduce evolutionary dynamics to describe the time evolution of $u^{i,N}(t)$. Until then, we admit that the policy played by each player, i.e., the realization of $u^{i,N}(t)$, is constant in time. The empirical joint state–policy distribution is characterized by the random variable\begin{equation}\label{StatePolicy}
	\hat\mu^{N}_{\mathcal S\times\mathcal U}[s,u](t):=
	\frac1N\sum_{i=1}^N \mathds 1_{s^{i,N}(t)}(s)\cdot\mathds 1_{u^{i,N}(t)}(u).
\end{equation}

\subsection{Mean–Field Model} \label{Mean-Field Model}

In the \textit{mean–field model} we consider instead a continuum of players. The assumption on independent Poisson clocks allow to characterize the aggregate evolution of both joint state–action and state–policy distributions with ODEs. Let $\mu_{\mathcal S\times\mathcal A}(t)\in\mathcal{P}(\mathcal{S}\times\mathcal{A})$ be the deterministic joint state–action distribution at time $t$ and $\mu_{\mathcal{S}\times\mathcal{U}}(t)\in\mathcal{P}(\mathcal{S}\times\mathcal{U})$ the deterministic joint state–policy distribution at time $t$. In an infinitesimal interval $\mathrm d t$ the mass placed by $\mu_{\mathcal{S}\times\mathcal{U}}$ in state $s\in\mathcal{S}$\begin{itemize}
	\item increases by the proportion of players in other states whose clocks ring and, after taking action, end up in state $s$,
	\item decreases by the proportion of players in state $s$ whose clocks ring and, after taking action, leave state $s$.
\end{itemize} The resulting balance of inflows and outflows in state $s\in\mathcal{S}$ and policy $u\in\mathcal{U}$ yields
\begin{equation}
	\begin{aligned}
		d\mu_{\mathcal{S} \times \mathcal{U}}[s, u](t) =& \sum_{s' \in \mathcal{S}} \sum_{a' \in \mathcal{A}(s')} R_d\cdot \mu_{\mathcal{S} \times \mathcal{U}}[s', u](t)\,\mathrm{d}t\cdot \phi(s\mid s', a')\cdot u(a'\mid s')+\\&- R_d\cdot \mu_{\mathcal{S} \times \mathcal{U}}[s, u](t) \,\mathrm{d}t\cdot\sum_{s' \in \mathcal{S}} \sum_{a \in \mathcal{A}(s)} \phi(s'\mid s, a)\cdot u(a \mid s).
	\end{aligned}
	\label{Mean-Field dmu}
\end{equation}

When $\mathrm{d}t\to0$, we write\begin{equation}
	\begin{aligned}
		\dot{\mu}_{\mathcal{S} \times \mathcal{U}}[s, u](t) =& R_d\cdot \sum_{s' \in \mathcal{S}} \sum_{a' \in \mathcal{A}(s')} \phi(s \mid s', a')\cdot u(a' \mid s')\cdot \mu_{\mathcal{S} \times \mathcal{U}}[s', u](t)+ \\&- R_d \cdot\mu_{\mathcal{S} \times \mathcal{U}}[s, u](t),
	\end{aligned}
	\label{Mean-Field dotmu}
\end{equation} since \( \sum_{s' \in \mathcal{S}} \phi(s' | s, a) = 1 \).

The joint state–action distribution follows from the solution to \eqref{Mean-Field dotmu} for all \( s \in \mathcal{S} \) and \( a \in \mathcal{A} \) as\[
\mu_{\mathcal{S} \times \mathcal{A}}[s, a](t) = \int_{\mathcal{U}} \mu_{\mathcal{S} \times \mathcal{U}}[s, u](t) \cdot u(a \mid s) \cdot \mu_{\mathcal{U}}(\mathrm{d}u).
\]

\begin{note}
	Contrarily to that of the empirical distribution $\hat{\mu}_{\mathcal{S}\times\mathcal{U}}^N$, the time evolution of $\mu_{\mathcal{S}\times\mathcal{U}}$ is deterministically given by ODE \eqref{Mean-Field dotmu}. In fact, it admits a unique and Lipschitz–continuous solution with respect to the initial condition and, as the population size $N$ grows, the stochastic path of the empirical distribution converges almost surely to the deterministic trajectory (Lemma \ref{Lemma 2.1}).
\end{note}

\section{Equilibria} \label{Equilibria}

In game theory, equilibria constitute one of the most important concepts. Specifically, a Nash Equilibrium (NE) is a set of strategies where no player has an incentive to unilaterally deviate beacause their chosen strategy maximizes their payoff given the strategies of all other players. In this section, we analyze NE–like solution concepts for the class of mean–field games satisfying the properties of our model. The usefuleness of a solution concept is its ability to predict the outcome of a game.

\paragraph{Behavioral vs Mixed Policies.}  We say that the population follows a \textit{behavioral} policy if each player randomizes afresh at every decision instant, whereas under a \emph{mixed} policy each player draws a deterministic policy at the start of the game and follows it forever.

\subsection{Behavioural and Mixed Stationary Nash Equilibria}

Starting from an initial state distribution $\eta_0\in\mathcal P(\mathcal S)$ under a time–invariant joint state–action distribution $\mu_{\mathcal S\times\mathcal A}\in\mathcal{P}(\mathcal{S}\times\mathcal{A})$, the discounted infinite–horizon reward of playing policy $u\in\mathcal U$ is
\begin{equation}
	J(u, \eta_0, \mu_{\mathcal{S} \times \mathcal{A}}) := \mathbb{E} \left[ \sum_{k=0}^{\infty} \beta^k\cdot r(s_k, a_k, \mu_{\mathcal{S} \times \mathcal{A}}) \right],
	\label{DefinitionNE}
\end{equation} where \( s_k \sim \eta_k \), \( a_k \sim u(s_k) \), and \( \eta_k(s) = \sum_{s' \in \mathcal{S}} \sum_{a' \in \mathcal{A}(s')} \phi(s \mid s', a')\cdot u(a' \mid s')\cdot \eta_k(s') \). By abuse of notation, whenever \( \eta_0 \) has all the mass at a state \( s_0 \in \mathcal{S} \), we, alternatively, write \( J(u, s_0, \mu_{\mathcal{S} \times \mathcal{A}}) \). The formal definitions of BSNE and MSNE are as follows.

\begin{definition}[Behavioural Stationary Nash Equilibrium (BSNE)]
	Given a policy $u\in\mathcal{U}$ and a state distribution $\mu_{\mathcal{S}}\in\mathcal{P}(\mathcal{S})$, the pair $(u,\mu_{\mathcal S})$ is called a \textit{BSNE} in the discounted payoff mean–field game if\[
	J(u, \mu_{\mathcal{S}}, \mu_{\mathcal{S} \times \mathcal{A}}) \geq J(v, \mu_{\mathcal{S}}, \mu_{\mathcal{S} \times \mathcal{A}}), \quad \forall v \in \mathcal{U},
	\]	where \( \mu_{\mathcal{S} \times \mathcal{A}}(s, a) = \mu_{\mathcal{S}}(s)\cdot u(a \mid s) \) \(\forall s \in \mathcal{S},\ \forall a \in \mathcal{A} \), and\[
	\mu_{\mathcal{S}}(s) = \sum_{s' \in \mathcal{S}} \sum_{a' \in \mathcal{A}(s')} \phi(s \mid s', a') \cdot u(a' \mid s')\cdot \mu_{\mathcal{S}}(s'), \quad \forall s \in \mathcal{S}.
	\]
\end{definition}

\begin{definition}[Mixed Stationary Nash Equilibrium (MSNE)] \label{MSNE}
	A joint state–policy distribution \( \mu \in \mathcal{P}(\mathcal{S} \times \mathcal{U}_D) \) is said to be a \textit{MSNE} in the discounted payoff mean–field game if $\forall u \in \mathcal{U}_D$\begin{equation}
		\sum_{s \in \mathcal{S}} \mu(s, u) > 0\implies \left( J(u, s, \mu_{\mathcal{S} \times \mathcal{A}}) \geq J(v, s, \mu_{\mathcal{S} \times \mathcal{A}}),\ \forall v \in \mathcal{U}_D,\ \forall s \in \supp(\mu(\cdot, u))\right), 
		\label{MSNE1}
	\end{equation} where $\mu_{\mathcal{S} \times \mathcal{A}}(s, a) = \sum_{u \in \mathcal{U}_D} \mu(s, u) \cdot u(a \mid s)\ \forall s \in \mathcal{S},\ \forall a \in \mathcal{A}$, and\begin{equation}
		\mu(s, u) = \sum_{s' \in \mathcal{S}} \sum_{a' \in \mathcal{A}(s')} \phi(s \mid s', a') \cdot u(a' \mid s') \cdot\mu(s', u), \quad \forall s \in \mathcal{S},\ \forall u \in \mathcal{U}_D.
		\label{MSNE2}
	\end{equation} 
\end{definition}

The MSNE has an intuitive interpretation: in steady–state, each user follows a deterministic policy and has no incentive to switch to any other deterministic policy from any state they visit.

\begin{theorem}[Existence of MSNE]\label{Existence MSNE}
	Under Assumption \ref{Assumption1}, there exists at least one MSNE.
\end{theorem}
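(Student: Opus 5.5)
The plan is a Kakutani fixed-point argument on a compact convex set of stationary state–policy distributions, using Assumption \ref{Assumption1} for continuity. Since $\mathcal S$ is finite and each $\mathcal A(s)$ is finite, $\mathcal U_D$ is finite. Hence $\mathcal P(\mathcal S\times\mathcal U_D)$ is a simplex in a finite-dimensional space.

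First I define the feasible set. Let
\[
K:=\Bigl\{\mu\in\mathcal P(\mathcal S\times\mathcal U_D):\ \mu \text{ satisfies \eqref{MSNE2}}\Bigr\}.
\]
It is cut out of a simplex by finitely many linear equations, so it is compact and convex. It is nonempty: for each $u\in\mathcal U_D$ the matrix $P_u(s',s)=\sum_{a'}\phi(s\mid s',a')u(a'\mid s')$ is stochastic on a finite set. Therefore it has an invariant distribution $\pi_u$, for instance by Perron–Frobenius or by taking a Cesàro limit of $\eta_0P_u^k$. Any mixture $\mu(s,u)=w(u)\pi_u(s)$ lies in $K$. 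More generally, every $\mu\in K$ has, for each $u$ of positive weight, a conditional $\mu(\cdot,u)/\sum_s\mu(s,u)$ that is $P_u$-invariant.

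Next I set up continuity of payoffs. The map $\mu\mapsto\mu_{\mathcal S\times\mathcal A}$ is linear, hence continuous. For a fixed $v$ and $s$, $J(v,s,\nu)=\sum_k\beta^k\mathbb E[r(s_k,a_k,\nu)]$ is a finite combination of the values $r(s,a,\nu)$, with weights independent of $\nu$ and summable because $\beta<1$. So $J(v,s,\cdot)$ is continuous by Assumption \ref{Assumption1}, uniformly dominated by $\max|r|/(1-\beta)$ on the compact set in question.

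The core step is the best-response correspondence $\Phi:K\rightrightarrows K$. Fix $\mu\in K$ and freeze $\nu=\mu_{\mathcal S\times\mathcal A}$. The problem $\max_v J(v,s,\nu)$ is then a discounted finite MDP. Standard dynamic programming gives a deterministic stationary $v^*_\nu\in\mathcal U_D$ that is optimal simultaneously from every initial state $s$. Let $B(\nu)\subseteq\mathcal U_D$ denote the set of such uniformly optimal deterministic policies. I define $\Phi(\mu)$ as the set of $\mu'\in K$ whose policy-marginal is supported in $B(\nu)$. Then $\Phi(\mu)$ is nonempty, since it contains $\pi_{v^*}\otimes\delta_{v^*}$, and it is convex, being a face of $K$ cut by support constraints.

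The closed graph follows from continuity of $J$. If $\mu_n\to\mu$, $\mu'_n\to\mu'$ with $\mu'_n\in\Phi(\mu_n)$, then any $u$ charged by $\mu'$ is charged by $\mu'_n$ for large $n$. Such a $u$ is optimal from every state for $\nu_n$, and the finitely many inequalities $J(u,s,\nu_n)\ge J(v,s,\nu_n)$ pass to the limit. Kakutani then gives a fixed point $\mu\in\Phi(\mu)$. Every charged $u$ is optimal from every state against $\mu_{\mathcal S\times\mathcal A}$, which yields \eqref{MSNE1}, and \eqref{MSNE2} holds since $\mu\in K$.

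The main obstacle is keeping $\Phi(\mu)$ nonempty and convex while requiring optimality only on $\supp(\mu(\cdot,u))$. Using uniformly optimal policies sidesteps this, so the delicate point reduces to citing, or briefly proving via the Bellman contraction, the existence of a deterministic policy optimal from all states. I would prove it by the Banach fixed point of $T V(s)=\max_{a\in\mathcal A(s)}\bigl[r(s,a,\nu)+\beta\sum_{s'}\phi(s'\mid s,a)V(s')\bigr]$ together with a greedy selection.
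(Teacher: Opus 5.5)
Your argument is correct and follows the route the paper indicates: the paper only remarks that an MSNE is obtained as a fixed point of a set-valued map via Kakutani's theorem, and gives no details. In your version the uniformly optimal deterministic policy of the frozen discounted MDP supplies nonemptiness, and continuity of $J(v,s,\cdot)$ from Assumption~\ref{Assumption1} supplies the closed graph. One small correction: the correspondence requiring optimality only at pairs with $\mu'(s,u)>0$ is also convex-valued (it is again a support constraint, now on pairs $(s,u)$), and its fixed points are exactly the MSNEs, so you need uniform optimality only for nonemptiness, not for convexity as you suggest.
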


The proof of Theorem \ref{Existence MSNE} relies on writing the MSNE as a fixed point of a set–valued map and then using Kakutani’s fixed point theorem.

\subsection{Approximation of Equilibria of Finite Population Games}

In this section we introduce an equilibrium concept for the finite–population game. Theorem \ref{Theorem 3.2} will later show that, as $N \to \infty$, the mean–field game’s MSNE provides an accurate approximation. An analogous result for the BSNE follows by adapting the arguments as in \cite{Altman}.

Let $\mu_0 \in \mathcal{P}(\mathcal{S} \times \mathcal{U}_D)$ be a state–policy distribution, and let $\{u^i\}_{i\in[N]}$ be a collection of players’ policies. Assume that each player $j$'s initial state distribution $s^{j,N}(0)$ is drawn from $\mu_0(\cdot, u^j) / \mu_0(\mathcal{S}, u^j)$. The infinite-horizon discounted payoff of player $i$ is then\[J^{i,N}(\mu_0, u^1, u^2, \dots, u^i, \dots, u^N) = \mathbb{E} \left[ \sum_{k=0}^{\infty} \beta^k	\cdot r \left( s^{i,N}(t_k),\ a^{i,N}(t_k),\ \hat{\mu}^N_{\mathcal{S} \times \mathcal{A}}(t_k) \right) \right].\]

The definition of a weak MSNE in the discounted payoff finite population game is as follows.

\begin{definition}[Weak MSNE of finite–population games] \label{epsilon MSNE}
	Given an initial distribution $\mu_0\in\mathcal{P}(\mathcal{S}\times\mathcal{U}_D)$, the policy profile $\{u^i\}_{i\in[N]}$ is a \textit{weak MSNE} in the discounted payoff finite population game with initial condition $\mu_0$ if for every player $i\in[N]$\begin{equation}
		J^{i,N}(\mu_0, u^1, u^2, \dots, u^i, \dots, u^N) \geq J^{i,N}(\mu_0, u^1, u^2, \dots, v^i, \dots, u^N), \quad \forall v^i \in \mathcal{U}_D.
		\label{Weak MSNE}
	\end{equation} It is a \textit{weak $\epsilon$-MSNE} if the inequality holds within a tolerance $\epsilon > 0$.
	
	Intuitively, a weak MSNE is a policy profile from which no single player can gain by unilaterally switching to another deterministic policy, given that their initial state is sampled according to $\mu_0$.
\end{definition}

\section{Evolutionary Dynamical Model}\label{sec evolution}

Section \ref{Equilibria} characterised long–run outcomes through equilibrium notions. We now shift focus from outcomes to the behavioural path that leads to them. To that end we introduce an \textit{evolutionary–dynamical model} in which players occasionally \textit{revise} the deterministic policies they employ. We first derive a mean–field ODE that tracks the law of motion of the population and that—when $N\to\infty$—approximates the finite–population dynamics arbitrarily well. We then relate the rest points of this ODE to the equilibrium concepts of Section \ref{Equilibria}.

\subsection{Individual Evolutionary Dynamics}

Classical population–game models, with no explicit state dynamics, have been extensively analysed in the evolutionary–game literature (see, e.g., \cite{Sandholm}). Building on that foundation, we model how each player in our dynamic setting occasionally \emph{revises} their deterministic policy. Two behavioural traits motivate the construction:\begin{itemize}
	\item \textbf{Inertia}: policy changes occur only when a personal revision opportunity arises;
	\item \textbf{Myopia}: each revision is made unilaterally, without coordinating with the rest of the population.
\end{itemize} Hence, the process tracks the evolution of \emph{individual} (deterministic) policies rather than collective agreements—which would be implausible in a large population.

At time $t$ the empirical joint state–policy distribution is characterized by the random variable\begin{equation}
	\hat{\mu}^N[s, u](t) := \frac{1}{N} \sum_{i=1}^{N} \mathds{1}_{s^{i,N}(t)}(s)\cdot \mathds{1}_{u^{i,N}(t)}(u).
	\label{StatePolicy Evolutionary Dynamics}
\end{equation}

For any state \( s \in \mathcal{S} \), we denote the vector of the mass on each policy as \[ \hat{\mu}[s, \cdot](t) := \col(\hat{\mu}[s, u](t), u \in \mathcal{U}_D) \in \Delta_{|\mathcal{U}_D|},\] where the dependence on $t$ will often be omitted for conciseness.

The evolutionary model is described by:\begin{itemize}
	\item \textbf{Time}: Each player is equipped with an independent Poisson revision clock with rate $R_r>0$. When their clock rings, a player has the opportunity to change the policy they are currently playing. All revision clocks share the same rate and are independent of one another and of the players’ action clocks.
	\item \textbf{Policy transitions}: Consider the \textit{revision protocol}\[\rho:\mathbb R^{|\mathcal U_D|}\times\mathbb R^{|\mathcal U_D|}
	\rightarrow\mathbb R_{\ge0}^{|\mathcal U_D|\times|\mathcal U_D|}.\] Suppose a player is in state $s\in\mathcal S$ and follows policy $u\in\mathcal U_D$. Conditioned on a revision opportunity, they switch to $v\in\mathcal U_D$ at rate $\rho_{u,v}\left(F^{s}(\hat{\mu}),\ \hat{\mu}[s,\cdot]\right)$, where\begin{equation}
		F^s(\hat{\mu}) := \col \left( J(u_1, s, \hat{\mu}_{\mathcal{S} \times \mathcal{A}}), \dots, J(u_{|\mathcal{U}_D|}, s, \hat{\mu}_{\mathcal{S} \times \mathcal{A}}) \right),
		\label{PolicyTransitions}
	\end{equation} lists the infinite‐horizon discounted reward obtained from each policy starting at $s$. Here\[ \hat{\mu}_{\mathcal{S} \times \mathcal{A}}[s, a] = \sum_{u \in \mathcal{U}_D} \hat{\mu}[s, u] \cdot u(a \mid s),\ \forall s \in \mathcal{S}, \forall a \in \mathcal{A}\] and, with slight abuse of notation, $F^{s}{u}(\hat{\mu})=J(u,s,\hat{\mu}{\mathcal S\times\mathcal A})$.
\end{itemize}

Intuitively, if a player using policy \( u \in \mathcal{U}_D \) receives a revision opportunity, they switch to a policy \( v \in \mathcal{U}_D \) with probability \( \rho_{u,v}(F^s(\hat{\mu}), \hat{\mu}[s, \cdot]) / R_r \), and they continue to use the same policy with probability \( 1 - \sum_{v \neq u} \rho_{u,v}(F^s(\hat{\mu}),\ \hat{\mu}[s, \cdot]) / R_r \).

We make an assumption to ensure that the aforementioned switching probabilities are well defined and continuous as follows.

\begin{Assumption}
	The revision protocol \( \rho \) is Lipschitz continuous and \( \forall s \in \mathcal{S}, \forall u \in \mathcal{U}_D \) it holds:	\[
	R_r \geq \sup_{\mu \in \mathcal{P}(\mathcal{S} \times \mathcal{U}_D)} \sum_{v \in \mathcal{U}_D \setminus \{u\}} \rho_{u,v} \big( F^s(\mu),\ \mu[s, \cdot] \big).
	\]
	\label{Assumption2}
\end{Assumption}

The literature on evolutionary decision dynamics identifies physically meaningful classes of revision protocols. In this thesis, we restrict our attention to deterministic\footnote{Revision protocols are said to be deterministic if they generate unique solutions for the evolution of the aggregate decisions. Lipschitz continuity of $\rho$ in Assumption \ref{Assumption2} ensures that $\rho$ is deterministic.} revision protocols.

\subsection{Mean–Field Evolutionary Dynamics} \label{Mean-Field Evolutionary Dynamics}

When the population size is treated as a continuum, the revision process can be described deterministically through the evolution of the joint state–policy distribution. Let $\mu(t)\in\mathcal P(\mathcal S\times\mathcal U_D)$ denote this distribution at time $t$. For each state $s\in\mathcal S$ we collect the mass assigned to every deterministic policy in the vector\[\mu[s,\cdot]:=\col(\mu[s,\mu](t),\ u\in\mathcal{U}_D)\in\Delta{|\mathcal U_D|},
\] and, as before, we often suppress the explicit time argument for brevity.

Consider an infinitesimal interval $\mathrm dt$. Fix $s\in\mathcal S$ and $u\in\mathcal U_D$. The mass $\mu[s,u]$ changes because\begin{itemize}
	\item players who currently use $u$ may move in or out of state $s$ according to \eqref{Mean-Field dmu} (rate $R_d$),
	\item players in state $s$ may \emph{switch} into or out of policy $u$ when their revision clocks ring (rate $R_r$).
\end{itemize} Formally, $\forall s \in \mathcal{S},\ \forall u \in \mathcal{U}$\begin{equation*}
\begin{aligned}
	d\mu[s, u] =& \sum_{s' \in \mathcal{S}} \sum_{a' \in \mathcal{A}(s')} R_d\cdot \mu[s', u] \,\mathrm{d}t\cdot \phi(s \mid s', a')\cdot u(a' \mid s')+\\& - R_d \cdot \mu[s, u] \,\mathrm{d}t \cdot\sum_{s' \in \mathcal{S}} \sum_{a \in \mathcal{A}(s)} \phi(s' \mid s, a) \cdot u(a \mid s)+\\
	&+ \sum_{u' \in \mathcal{U}_D} R_r\cdot \mu[s, u'] \,\mathrm{d}t\cdot \frac{\rho_{u', u}(F^s(\mu),\ \mu[s, \cdot])}{R_r}+\\& - R_r\cdot \mu[s, u] \,\mathrm{d}t\cdot \sum_{u' \in \mathcal{U}_D}\frac{\rho_{u, u'}(F^s(\mu),\ \mu[s, \cdot])}{R_r} .
\end{aligned}
\end{equation*}

Taking the limit as $\mathrm dt\to0$ yields the ODE\begin{equation}
	\dot{\mu}[s, u] = f^d_{s,u}(\mu) + f^r_{s,u}(\mu), \quad \forall s \in \mathcal{S},\ \forall u \in \mathcal{U},
	\label{ODE8}
\end{equation}where\begin{itemize}
	\item $f^d_{s,u}(\mu) = R_d\cdot \sum_{s' \in \mathcal{S}} \sum_{a' \in \mathcal{A}(s')} \phi(s \mid s', a') \cdot u(a' \mid s')\cdot \mu[s', u](t) - R_d\cdot \mu[s, u]$,
	\item $f^r_{s,u}(\mu) = \sum_{u' \in \mathcal{U}_D} \mu[s, u']\cdot \rho_{u', u}(F^s(\mu),\ \mu[s, \cdot]) - \mu[s, u]\cdot \sum_{u' \in \mathcal{U}_D} \rho_{u ,u'}(F^s(\mu),\ \mu[s, \cdot])$.
\end{itemize}

Equation \eqref{ODE8} is the \emph{mean dynamic} associated with the revision protocol $\rho$. Theorem \ref{Theorem 4.1} states that, as $N\to\infty$, the empirical distribution of strategies converges to the solution of the mean dynamic.

\section{Summary}

The combination of the finite–population model, mean–field approximation, equilibrium analysis, and evolutionary dynamics provides a comprehensive framework for studying strategic interactions in large–population systems. In the next chapters, we delve deeper into the theoretical results and proofs that substantiate these concepts.

\chapter{Stochastic Evolution and Deterministic Approximation}

General reference: \cite[Chapter~10]{Sandholm}, \cite{Kurtz}.

In this chapter we delve into the theory of stochastic evolution and its deterministic approximation. We explore the underlying mathematical framework governing the evolution of strategies in a population game, with a focus on Kurtz’s Theorem. This theorem forms the foundation for further analysis, providing insight into the convergence of stochastic processes to deterministic dynamics as the population size grows large. This chapter introduces the stochastic evolutionary process, the associated Markov process, and a finite–horizon deterministic approximation that captures the expected evolution of the system.

\section{The Stochastic Evolutionary Process}

Consider a population of $N$ players (where $N$ denotes the population size) who repeatedly engage in a population game $F: X\rightarrow\mathbb{R}^n$, with pure strategy set $S=\{1,\hdots,n\}$.

The decision–making process of each player is modeled by a revision protocol\[\rho: \mathbb{R}^n\times X\rightarrow\mathbb{R}_{\ge0}^{n\times n}\] that takes current payoffs and the states of the population as inputs, and provides conditional switch rates $\rho_{i,j}(F(x),x)$ as outputs. Consequently, the feasible social states lie on the discrete grid:\begin{equation*}
	\mathcal{X}^N:=X\cap\frac{1}{N}\mathbb{Z}^n=\{x\in X:\ Nx\in\mathbb{Z}^n\}.
\end{equation*}

Each player is equipped with a Poisson alarm clock that rings at rate $R$, where \begin{equation*}
	\max_{x,i}\sum_{j\ne i}\rho_{i,j}(F(x),x)\le R<\infty.
\end{equation*} The ringing of an player's clock indicates a revision opportunity. If the player is currently playing strategy $i\in S$, they switch to strategy $j\ne i$ with probability $\frac{\rho_{i,j}}{R}$.

The clocks of the players ring independently of each other, and the choice of strategy is made independently of the timing of these rings. As the evolution proceeds, the players are only influenced by the history of process by way of the current value of the social state.

Given these assumptions, $\{X_t^N\}$ forms a continuous–time Markov process on the finite state space $\mathcal{X}^N$. To characterize this process, it is sufficient to specify its jumping rates $\{\lambda_x^N\}_{x\in\mathcal{X}^N}$ and the transition probabilities $\{P_{x,y}^N\}_{x,y\in\mathcal{X}^N}$. If the current state is $x\in\mathcal{X}^N$, then $N\cdot x_i$ players are playing strategy $i\in S$. 

Since players receive revision opportunities independently at an exponential rate $R$, the total arrival rate of revision opportunities in the population is $N\cdot R$.

Since the decisions to switch from one strategy to another are independent of the arrivals of revision opportunities, the probability that the next revision opportunity goes to an player playing strategy $i$ who then switches to strategy $j$ is\begin{equation*}
	\frac{N x_i}{N}\cdot\frac{\rho_{i,j}}{R}=\frac{x_i\rho_{i,j}}{R}.
\end{equation*}

Thus, the process can be described by the following:\begin{itemize}
	\item An initial state $X_0^N=x_0^N$,
	\item Jump rates $\lambda_x^N=N\cdot R$,
	\item Transition probabilities given by: \[P_{x,x+z}^N=\begin{cases}\frac{x_i\rho_{i,j}(F(x),x)}{R}&\ \mbox{if}\ z=\frac{1}{N}(e_j-e_i),\ i\ne j\\ 1-\sum_{i\in S}\sum_{j\ne i}\frac{x_i\rho_{i,j}(F(x),x)}{R}&\ \mbox{if}\ z=0\\0&\ \mbox{otherwise}\end{cases}.\]
\end{itemize}

\section{Finite–Horizon Deterministic Approximation:\\Kurtz's Theorem}

Let $\zeta_x^N$ be a random variable whose distribution describes the stochastic increment of $\{X_t^N\}$ from state $x$, that is:\begin{equation*}
	\mathbb{P}(\zeta_x^N=z)=P^N_{x,\ x+z}.
\end{equation*}

Define:\begin{align*}
	V^N:\mathcal{X}^N\to TX&\quad x\mapsto\lambda_x^N\cdot\mathbb{E}[\zeta_x^N],\\
	A^N:\mathcal{X}^N\to\mathbb{R}&\quad x\mapsto\lambda_x^N\cdot\mathbb{E}[|\zeta_x^N|],\\
	A_\delta^N:\mathcal{X}^N\to\mathbb{R}&\quad x\mapsto\lambda_x^N\cdot\mathbb{E}[|\zeta_x^N\cdot\mathds{1}_{\{|\zeta_x^N|>\delta\}}|].
\end{align*}

\begin{remark} Note that:
	\begin{itemize}
		\item Function $V^N(x)$ represents the expected increment per time unit from $x$ under the process $\{X^N_t\}$, since it is the product of the jump rate at state $x$ and the expected increment per jump at $x$,
		\item Similarly, $A^N(x)$ is the expected absolute displacement per time unit,
		\item And $A^N_\delta(x)$ is the expected absolute displacement per time unit due to jumps traveling further than $\delta$.
	\end{itemize}
\end{remark}

On the following theorem is based Theorem \ref{Theorem 4.1}.

\begin{theorem}[Kurtz]\label{Kurtz}
	Let $V: X\to TX$ be a Lipschitz continuous vector field. Suppose that there exists $\{\delta^N\}_{N\ge N_0}$ converging to $0$ such that:\begin{itemize}
		\item[(i)] $\lim_{N\to\infty}\sup_{x\in\mathcal{X}^N}|V^N(x)-V(x)|=0$,
		\item[(ii)] $\sup_N\sup_{x\in\mathcal{X}^N}A^N(x)<\infty$,
		\item[(iii)] $\lim_{N\to\infty}\sup_{x\in\mathcal{X}^N}A^N_{\delta^N}(x)=0$.
	\end{itemize} If the initial conditions $X_0^N=x_0^N$ converge to $x_0\in X$, let $\{x_t\}_{t\ge0}$ be the solution to\begin{equation}
		\begin{cases}
		\dot{x}=V(x)\\ x(0)=x_0
		\end{cases}.
	\label{M}
	\end{equation} Then:\begin{equation}
		\lim_{N\to\infty}\mathbb{P}\left(\sup_{t\in[0,T]}|X^N_t-x_t|<\epsilon\right)=1\quad \mbox{for all}\ T<\infty,\ \epsilon>0
	\end{equation}
	\label{Kurtz's equation}
	\begin{proof}
		See \cite[Theorem~10.2.1]{Kurtz}.
	\end{proof}
\end{theorem}

Fix a finite time horizon $T<\infty$ and an error bound $\epsilon>0$, Kurtz's Theorem states that as $N\to\infty$, nearly all simple paths of $\{X_t^N\}_{t\ge0}$ stay within $\epsilon$ of a solution of \eqref{M} through time $T$. Thus, if $N$ is large enough, $X_t^N\approx x_t$.

Condition (i) demands that as $N\to\infty$, the expected displacements per time unit $V^N$ converge uniformly to a Lipschitz continuous vector field $V$. Remark that the Lipschitz continuity of $V$ ensures the existence and uniqueness of solutions to \eqref{M}.

Condition (ii) requires that the expected absolute displacement per time unit is bounded.

Finally, condition (iii) demands tha jumps larger then $\delta^N$ make vanishing contributions to the motion of the processes, where $\{\delta^N\}_{N=N_0}^\infty$ approaches zero.

The reasoning behind Kurtz's theorem can be described as follows. At each revision opportunity, the change in the process $\{X_t^N\}$ is stochastic. However, the expected number of revision opportunities that occur during a small time interval $I = [t,\ t + \,\mathrm{d}t]$ is of the order of $\lambda_{\chi}^N \,\mathrm{d}t$. Whenever this quantity is non–zero, it grows without bound as the population size $N$ increases. Conditions (ii) and (iii) guarantee that when $N$ is large, each change in the state is likely to be small. This ensures that the total variation in the state during the interval $I$ remains small, meaning that jump rates and transition probabilities do not change significantly during this period.

During interval $I$, there are numerous revision opportunities, each producing nearly the same expected change. According to the law of large numbers, the overall change in $\{X_t^N\}$ over this interval is predominantly determined by the expected motion of $\{X_t^N\}$. This expected behavior is captured by the limiting mean dynamic $V$, whose solutions approximate the stochastic process $\{X_t^N\}$ over finite periods with high probability.

\chapter{Mathematical Framework} \label{Mathematical Framework}

This chapter introduces the mathematical framework for stochastic processes with jumps. We begin with a discussion of the choice of topology, highlighting the significance of selecting an appropriate topology to establish convergence results. Indeed, we compare the uniform and the Skorokhod topologies, emphasizing the role each topology plays in modelling stochastic processes. We also write and examine the Q–matrix generating the process, which captures the dynamics of the system and describes the state transitions and policy revisions of the players. The section further elaborates on the mathematical constructs, such as the state transition and policy transition Q–matrices, which are pivotal in understanding the evolution of strategies within a population game.

\section{Topology Selection: Uniform vs Skorokhod} \label{Choice of the Topology}

When analyzing stochastic processes with jumps, the choice of topology plays a crucial role in establishing convergence results. A common approach is to work with the Skorokhod topology, which is well–suited for studying processes that exhibit discontinuities. This topology provides flexibility in handling jumps by allowing small perturbations in time to compensate for abrupt changes in state, making it a natural choice in many asymptotic analyses.

In our setting, we instead focus on the asymptotic behavior of a finite–population model as $N \to \infty$ , where the empirical state–action distribution converges to a deterministic mean–field limit governed by an ODE. To rigorously establish this convergence, we leverage Kurtz’s Theorem, which provides strong approximation guarantees for stochastic processes in large–population limits. Given this setting, the uniform topology (i.e., the sup–norm distance) is sufficient to quantify convergence, as it directly measures the maximum deviation between the empirical and mean–field distributions at any fixed time.

Unlike the Skorokhod topology, which is particularly useful when the limiting process retains jumps, our mean–field limit follows a deterministic trajectory, making the stronger uniform norm appropriate. The results in Lemma \ref{Lemma 2.1} confirm that as $N \to \infty$, the empirical joint state–action and state–policy distributions almost surely converge to their deterministic counterparts. The application of Kurtz’s Theorem ensures that the fluctuations in the finite–population dynamics diminish, allowing us to work within the sup–norm framework without loss of generality.

Thus, while the Skorokhod topology remains a powerful tool in stochastic process theory, the uniform sup–norm metric is sufficient—and preferable—for our purposes, as it aligns naturally with the deterministic limit obtained in the mean–field setting.

\subsection{Sup–Norm Distance ($L^\infty$ Norm)}

\begin{definition}[Sup–Norm]
	For a function $f: [0,T]\rightarrow\mathbb{R}^d$, the \textit{sup–norm} (uniform norm) is defined as\begin{equation}
		\norm{f}_\infty:=\sup_{t\in[0,T]}\left|f(t)\right|,
		\label{Sup-norm}
	\end{equation} where $\left|\cdot\right|$ is the Euclidean norm in $\mathbb{R}^d$.
\end{definition}

\subsubsection{Advantages}

\begin{itemize}
	\item[\color{green}$\checkmark$] Strong form of convergence: If $\norm{f_N-f}_\infty\to0$, then $f_N$ converges uniformly to $f$, meaning that the maximum deviation over time vanishes.
	\item[\color{green}$\checkmark$] Simplicity: Works well for deterministic dynamics (ODE models) and ensures well–posedness.
	\item[\color{green}$\checkmark$] Directly applies to the stability of ODEs: The sup–norm is useful in showing well–posedness, uniqueness, and stability of solutions in compact time intervals.
\end{itemize}

\subsubsection{Limitations}

\begin{itemize}
	\item[\color{red}$\times$] Insensitive to small–time oscillations: The sup–norm does not capture cases where a trajectory jumps at different times but still stays close overall.
	\item[\color{red}$\times$] Not well–suited for discontinuous sample paths: If state–action distributions $\mu_N$ exhibit jumps due to Poisson arrivals or discrete updates, the sup–norm may not effectively capture convergence.
\end{itemize}

\subsection{Skorokhod Topology}

General reference: \cite[Chapter~A]{Shwartz}.

When dealing with stochastic processes, the sup–norm distance may not be an appropriate metric: even a slight perturbation in the time of a jump can lead to a significant discrepancy under the sup–norm. To accomodate such small fluctuations, the Skorokhod topology is widely used.

\subsubsection{The Space of \textit{càdlàg} Functions}

\begin{definition}[Space of \textit{càdlàg} Functions]
	Let $D^d[0,T]$ denote the space of \textit{càdlàg} (right–continuous with left limits) functions defined on $t\in[0,T]$.
\end{definition}

Consider the set $\Lambda$ of strictly increasing, continuous $\lambda: [0,T]\rightarrow\mathbb{R}$ satisfying:\begin{itemize}
	\item $\lambda(0)=0,\ \lambda(T)=T$,
	\item $\gamma(\lambda):=\sup_{0\le s\le t\le T}\left|\log\frac{\lambda(s)-\lambda(t)}{s-t}\right|<\infty$.
\end{itemize}

\begin{definition}[Standard Metric on the Space of \textit{càdlàg} Functions]
	The \textit{standard metric} on $D^d[0,T]$ is given by:\begin{equation*}
		d_d(\textbf{x},\textbf{y}):=\inf_{\lambda\in\Lambda}\left\lbrace \max\left\lbrace\gamma(\lambda),\ \sup_{t\in[0,T]}|\textbf{x}(t)-\textbf{y}(\lambda(t))|\right\rbrace\right\rbrace.
	\end{equation*}
\end{definition}

This distance induces the Skorokhod topology on $D^d[0,T]$, which is complete and separable (and obviously metric), thus Polish.

\subsubsection{Convergence in the \textit{càdlàg} functions space}

In $D^d[0,T]$, the following are equivalent:\begin{itemize}
	\item The sequence $(\textbf{x}_n)_{n\in\mathbb{N}}$ converges to $\textbf{x}$,
	\item There exists a sequence $(\lambda_n)_{n\in\mathbb{N}}\subseteq\Lambda$ such that both hold:\begin{itemize}
		\item $\gamma(\lambda_n)\underset{n\rightarrow+\infty}{\longrightarrow}0$,
		\item $\sup_{t\in[0,T]}|\textbf{x}(t)-\textbf{x}_n(\lambda_n(t))|\underset{n\rightarrow+\infty}{\longrightarrow}0$.
	\end{itemize}
	\item Alternatively, there exists a sequence $(\lambda_n)_{n\in\mathbb{N}}\subseteq\Lambda$ such that both hold:\begin{itemize}
		\item $\sup_{t\in[0,T]}|\lambda_n(t)-t|\underset{n\rightarrow+\infty}{\longrightarrow}0$,
		\item $\sup_{t\in[0,T]}|\textbf{x}(t)-\textbf{x}_n(\lambda_n(t))|\underset{n\rightarrow+\infty}{\longrightarrow}0$.
	\end{itemize}
\end{itemize}

\subsubsection{Sup–Norm Distance in $D_d[0,T]$}

Let $d_c$ denote the sup–norm distance on $D^d[0,T]$, as defined in \eqref{Sup-norm}. Then, the metric space $(D^d[0,T],\ d_c)$ has a stronger topology than $d_d$, meaning every open (respectively closed) set in $d_d$ is also open (respectively closed) in $d_c$. However, while $(D^d,\ d_c)$ is complete, it is not separable.

\subsubsection{Advantages}

\begin{itemize}
	\item[\color{green}$\checkmark$] Captures jumps in state–action distributions: Unlike the sup–norm, it allows for small shifts in time, making it more appropriate for jump processes.
	\item[\color{green}$\checkmark$] Stronger notion of convergence for stochastic processes: If $\mu_N$ is approximating a random process, the Skorokhod topology ensures that discontinuities align correctly in the limit.
	\item[\color{green}$\checkmark$] More flexible than sup–norm for stochastic systems: Useful when state–action distributions are driven by discrete events (e.g., Poisson updates).
\end{itemize}

\subsubsection{Limitations}

\begin{itemize}
	\item[\color{red}$\times$] More complex: Requires dealing with homeomorphisms of time, making analysis harder than with the sup–norm.
	\item[\color{red}$\times$] Not always necessary for ODE–based models: If the model follows deterministic ODEs, as in our case, the sup–norm is sufficient.
\end{itemize}

\subsection{A Comparison Between Sup–Norm Distance and Skorokhod Topology}

In Table \ref{SupNorm vs Skorokhod} we make a summarized comparison of the properties of interest of the two topologies above.

\begin{table}[H]
	\centering
	\caption{Comparison between Sup–Norm Distance and Skorokhod Topology}
	\begin{tabular}{l||c|c}
		\textbf{Property$\mathbf{\backslash}$Topology} & \textbf{Sup–Norm} & \textbf{Skorokhod Topology} \\
		\hline\hline
		Best for ODEs? & \color{green}$\checkmark$\color{black}\ Yes & \color{red}$\times$\color{black}\ No need \\
		Best for jump processes? &\color{red}$\times$\color{black}\ No & \color{green}$\checkmark$\color{black}\ Yes \\
		Handles time shifts? & \color{red}$\times$\color{black}\ No & \color{green}$\checkmark$\color{black}\ Yes \\
		Easier to use? & \color{green}$\checkmark$\color{black}\ Yes & \color{red}$\times$\color{black}\ More complex \\
		Supports Lipschitz continuity results? & \color{green}$\checkmark$\color{black}\ Yes & \color{red}$\times$\color{black}\ Harder \\
	\end{tabular}
	\label{SupNorm vs Skorokhod}
\end{table}

\subsection{Which Topology Best Fits Our Model}

Given that fluctuations in our model are ruled by the $\frac{1}{N}$ term in \eqref{StatePolicy} and \eqref{StatePolicy Evolutionary Dynamics}, meaning they vanish as $N \to \infty$, they do not significantly impact the limiting behavior of the system. This has direct implications for the choice of topology:\begin{itemize}
	\item Fluctuations are negligible in the limits thanks to the $\frac{1}{N}$ terms, so the Skorokhod topology is unnecessary, as it is primarily useful for handling discontinuities and large fluctuations.
	
	\item Mean–field model follows a deterministic ODE:
	\begin{itemize}
		\item The ODE representation in \eqref{ODE8} shows that the state–policy distribution evolves continuously in time.
		\item Since we want to prove Lipschitz–continuity of the solutions, the sup–norm topology naturally aligns with stability analysis and equilibrium approximation.
	\end{itemize}
	
	\item Convergence proofs support sup–norm topology:
	\begin{itemize}
		\item Lipschitz continuity of the solutions.
		\item The Picard–Lindel\"of theorem guarantees uniqueness and well–posedness in the sup–norm.
		\item Gr\"onwall’s inequality supports sup–norm–based error bounds.
		\item Kurtz’s Theorem establishes the convergence of empirical processes to their deterministic limit both in sup–norm distance and Skorokhod topology.
	\end{itemize}
\end{itemize}

Thus, the sup–norm topology ($L^\infty$) is the best choice for our model:\begin{itemize}
	\item It aligns with the deterministic mean–field ODE framework.
	\item It simplifies stability analysis and ensures uniform convergence as $N \to \infty$.
	\item Skorokhod topology is unnecessary, since fluctuations are bounded, meaning they do not introduce significant jumps or irregularities.
\end{itemize}

\section{The Q–Matrices Involved in the Processes}

\begin{recall}
	Let $I$ be a countable index set. A Q–matrix on $I$ is a matrix $Q=(q_{i,j})_{i,j\in I}$ satisfying:\begin{itemize}
		\item $q_{i,i}\in(-\infty,0]\ \forall i\in I$,
		\item $q_{i,j}\ge0\ \forall i\ne j$,
		\item $\sum_{j\in I}q_{i,j}=0\ \forall i\in I$.
	\end{itemize}
\end{recall}

Define $q_i:=\sum_{j\ne i}q_{i,j}=-q_{i,i}$, we give to the entries of $Q$ the following meaning:\begin{itemize}
	\item $q_{i,j}$ represents the rate of transition from state $i$ to $j$,
	\item $q_i$ denotes the total rate of leaving state $i$.
\end{itemize}

The model under consideration is characterized by transitions between joint state–policy distributions, driven by individual players’ decisions. To capture these dynamics, we define:\begin{itemize}
	\item A state transition Q–matrix governing the evolution of player states, useful when studying situations in which policy transitions are not allowed,
	\item A collection of $|\mathcal{S}|$ policy transition Q–matrices, one for each state, governing changes in players' policies,
	\item Then, we merge them together to obtain the joint state–policy transition Q–matrix, which we eventually use in \ref{Theorem 4.1}.
\end{itemize}

\subsubsection{State Transition Q–Matrix}

\begin{recall}
	Each player is equipped with a Poisson alarm clock of rate $R_d>0$. Every time a player's clock rings, the player takes an action. The clocks of different players are independent and identically distributed with rate $R_d$.
\end{recall}

\begin{recall}
	Upon taking an action, a player's state evolves according to a Markov transition kernel $\phi: \mathcal{S}\times \mathcal{A}\rightarrow\mathcal{P}(\mathcal{S})$, where:\begin{itemize}
		\item $\mathcal{S}$ is the set of states,
		\item $\mathcal{A}$ is the set of actions,
		\item $\mathcal{P}(\mathcal{S})$ denotes the space of Borel probability measures on $\mathcal{S}$.
	\end{itemize} We denote by $\phi(\cdot\mid s,a)$ the distribution of the new state of a player in state $s\in \mathcal{S}$ taking action $a\in \mathcal{A}(s)$, where $\mathcal{A}(s)$ is the set of all the available actions to a player in state $s\in \mathcal{S}$.
\end{recall}

\begin{recall}
	A policy is a map $u: \mathcal{S}\rightarrow\mathcal{P}(\mathcal{A})$, specifying a probability distribution over actions in each state. The probability of selecting action $a$ in state $s$ under policy $u$ is written as $u(a\mid s)$.
\end{recall}

\begin{definition}[State Transition Q–Matrix]\label{StateQ}
	The \textit{state transition Q–matrix} $Q_{\mathcal{S}}=(q_{s,s'})_{s,s'\in \mathcal{S}}$ is defined by:\begin{itemize}
		\item $q_{s,s'}:=R_d\cdot\sum_{a\in \mathcal{A}(s)}u(a\mid s)\cdot\phi(s'\mid s,a),\ \forall s\ne s'$,
		\item $q_{s,s}:=-\sum_{s'\ne s}q_{s,s'},\ \forall s\in \mathcal{S}$.
	\end{itemize} Here:\begin{itemize}
		\item $q_{s,s'}$ represents the transition rate from state $s$ to state $s'$,
		\item $-q_{s,s}$ represents the total departure rate from state $s$.
	\end{itemize}
\end{definition}

By construction, $Q_{\mathcal{S}}$ satisfies the conditions of a Q–matrix.

\subsubsection{Policy Transition Q–Matrices}

\begin{recall}
	Each player is equipped with a Poisson alarm clock of rate $R_r>0$. Every time a player's clock rings, they have the opportunity to revise the policy they are currently playing. These revision times are independent across players and independent of the state–transition process.
\end{recall}

\begin{recall}
	A player in state $s\in \mathcal{S}$ playing policy $u\in\mathcal{U}_D$ (where $\mathcal{U}_D$ is the space of deterministic policies, i.e., the policies that map each state to a single action) switches to policy $v\in\mathcal{U}_D$ at a rate given by:\begin{equation*}
		\rho_{u,v}(F^s(\hat{\mu}),\ \hat{\mu}[s,\cdot]).
	\end{equation*}
\end{recall}

\begin{definition}[Policy Transition Q–Matrix]\label{PolicyQ}
	For each state $s\in \mathcal{S}$, the \textit{policy transition Q–matrix} $Q_{\mathcal{U}_D}(s)=(q_{u,v}(s))_{u,v\in\mathcal{U}_D}$ is defined by:\begin{itemize}
		\item $q_{u,v}(s):=R_r\cdot\rho_{u,v}(F^s(\hat{\mu}),\ \hat{\mu}[s,\cdot]),\ \forall u\ne v$,
		\item $q_{u,u}(s):=-\sum_{v\ne u}q_{u,v},\ \forall u\in\mathcal{U}_D$.
	\end{itemize} Here:\begin{itemize}
		\item $q_{u,v}(s)$ represents the rate at which a player in state $s$ switches from policy $u$ to policy $v$,
		\item $-q_{u,u}(s)$ represents the total departure rate from policy $u$ in state $s$.
	\end{itemize}
\end{definition}

By construction, the matrix $Q_{\mathcal{U}_D}(s)$ is a Q–matrix for each state $s\in \mathcal{S}$.

\subsubsection{Joint State–Policy Transition Q–Matrix}

The Q–matrix for the joint state–policy distribution should combine both the state transition Q–matrix (Definition \ref{StateQ}) and the policy transition Q–matrices (Definition \ref{PolicyQ}). The structure must respect the fact that:\begin{itemize}
	\item State transitions occur according to the Markovian dynamics governed by the state transition Q–matrix.
	\item Policy transitions occur independently within each state according to the policy transition Q–matrices.
\end{itemize}

\begin{definition}[Joint State–Policy Transition Q–Matrix]\label{Joint Q-matrix}
	Let $Q_{(s,u),(s',u')}$ be the transition rate from state–policy pair $(s,u)$ to $(s',u')$. The \textit{joint state policy transition Q–matrix} is given by:\begin{equation*}
		Q_{(s,u), (s',u')}:=\begin{cases}
			R_d\cdot \sum_{a\in\mathcal{A}(s)}u(a\mid s)\cdot\phi(s'\mid s,a) & \mbox{if}\ s\ne s',\ u=u'\\
			\rho_{u,u'}(F^s(\hat{\mu}),\ \hat{\mu}[s,\cdot]) & \mbox{if}\ s=s',\ u\ne u'\\
			0 & \mbox{if}\ s\ne s',\ u\ne u'\\
			-\sum_{(s',u')\ne(s,u)}Q_{(s,u),(s',u')}&\mbox{if}\ s=s',\ u=u'
		\end{cases}.
	\end{equation*} Here:\begin{itemize}
		\item The first case represents state transitions: A player in state $s$ playing policy  $u$ transits to state $s'$ based on the state transition Q–matrix $Q_{\mathcal{S}}$ (Definition \ref{StateQ}), while their policy remains unchanged.
		\item The second case represents policy transitions: A player in state $s$ switches from policy $u$ to policy $u'$ based on the policy transition Q–matrix $Q_{\mathcal{U}_D}$ (Definition \ref{PolicyQ}), while their state remains unchanged.
		\item The third case represents both state and policy transitions: A player cannot change both state and policy at the same time because the state transition process and policy revision process are independent. Instead, these two types of transitions happen sequentially but never at the exact same instant.
		\item The fourth case ensures that the rows of the Q–matrix sum to zero, a necessary property for a valid generator matrix.
	\end{itemize}
\end{definition}

\begin{recall}
	Policy transitions are allowed only for deterministic policies. Therefore, the joint state–policy transition Q–matrix defined above is a square matrix of dimension $|\mathcal{S}\times\mathcal{U}_D|$.
\end{recall}

\section{Notions of Convergence for Random Variables}

Convergence of random variables is a central concept in probability theory and statistics. It describes how a sequence of random variables $(X_n)_{n\in\mathbb{N}}$ behaves as $n\to\infty$. There are several distinct notions of convergence, each with its own properties and implications. In this section, we explore the four most common modes of convergence for random variables: in probability, almost surely, in distribution (or in law), and in $L^p$, highlighting their definitions, differences, and relationships. At the end of this section, we state and prove the Continuous Mapping Theorem, which plays a crucial role in understanding the behavior of functions of converging random variables.

\subsection{Convergence in Probability}

\begin{definition}[Convergence in Probability]
	A sequence of random variables $(X_n)_{n\in\mathbb{N}}$ converges to a random variable $X$ \textit{in probability}, and we write $X_n\xrightarrow{\mathbb{P}}X$, if\[
	 \lim_{n \to \infty} \mathbb{P}(|X_n - X| \geq \epsilon) = 0\quad \text{ for all } \epsilon > 0.
	\]
\end{definition}

This type of convergence means that for large enough $n$, the random variables $X_n$ are—loosely speaking—close to $X$ with high probability, though the closeness may not occur for every outcome.

\subsection{Almost Sure Convergence}

\begin{definition}[Almost Sure Convergence]
	A sequence $(X_n)_{n\in\mathbb{N}}$ converges to $X$ \textit{almost surely}, and we write $X_n\xrightarrow{\text{a.s.}}X$, if:\[\mathbb{P}\left( \lim_{n \to \infty} X_n = X \right) = 1.\]
\end{definition}

Almost sure convergence is a stronger condition than convergence in probability. It means that the sequence $X_n$ eventually converges to X for nearly every possible outcome in the sample space, i.e., for all outcomes except for a set of outcomes with probability zero.

\subsection{Convergence in Distribution, or in Law}

\begin{definition}[Convergence in Distribution, or in Law]
	A sequence of random variables $(X_n:\ (\Omega_n,\mathcal{F}_n,\mathbb{P}_n)\to\mathbb{R}^d)_{n\in\mathbb{N}}$ converges \textit{in distribution} (or in law) to the random variable $X:\ (\Omega,\mathcal{F},\mathbb{P})\to\mathbb{R}^d$, and we write \( X_n \xrightarrow{(d)} X \), if \[
		\mathbb{E}_{\mathbb{P}_n}[g(X_n)]\to\mathbb{E}_{\mathbb{P}}[g(X)]\quad\text{for all}\ g\in C_b(\mathbb{R}^d,\mathbb{R}),
	\] where $C_b(\mathbb{R}^d,\mathbb{R})$ denotes the set of continuous and bounded functions from $\mathbb{R}^d$ to $\mathbb{R}$.
\end{definition}

Convergence in distribution concerns the convergence of the distributions (or probabilities) of the random variables, rather than their actual values. It is the weakest form of convergence and does not require the random variables to converge pointwise in any specific way.

\subsection{Convergence in $L^p$, for $p\ge1$}

\begin{definition}[Convergence in $L^p$, for $p\ge1$]
	A sequence of random variables $(X_n)_{n\in\mathbb{N}}$ converges to $X$ in $L^p$, and we write $X_n \xrightarrow{L^p} X$, if the $p$–th power of the absolute difference between $X_n$ and $X$ has expected value that goes to zero. Specifically, for $p \ge 1$, this means:\[\lim_{n \to \infty} \mathbb{E}[|X_n - X|^p] = 0.\]
\end{definition}

Convergence in $L^p$ implies that the expected size of the difference between $X_n$ and $X$, raised to the power of $p$, becomes arbitrarily small as $n$ grows. This form of convergence provides a way to control the magnitude of deviations between $X_n$ and $X$, particularly in terms of their higher moments.

\subsection{Implications Between the Notions of Convergence}

The theorem below highlights the relationships between the four notions of convergence defined above, which are resumed as follows.

\begin{table}[H]
	\centering
	\begin{tabular}{lcccc}
		a.s. & \rotatebox{340}{$\Rightarrow$} & & & \\
		& & $\mathbb{P}$ & $\Rightarrow$ & $(d)$ \\
		$L^p$ & \rotatebox{30}{$\Rightarrow$} & & &
	\end{tabular}
\end{table}

Simple counterexamples can be constructed to demonstrate that the reverse implications are false, and also to establish that no implications hold between almost sure convergence and convergence in $L^p$.

\begin{theorem}
	The following hold:\begin{itemize}
		\item[(i)] If $X_n\xrightarrow{\text{a.s.}}X$, then $X_n\xrightarrow{\mathbb{P}}X$,
		\item[(ii)] If $X_n\xrightarrow{L^p}X$, then $X_n\xrightarrow{\mathbb{P}}X$,
		\item[(iii)] If $X_n\xrightarrow{\mathbb{P}}X$, then $X_n\xrightarrow{(d)}X$.
	\end{itemize}
	\begin{proof}
		\begin{itemize}
			\item[(i)] Let $\epsilon>0$, by Fatou's Lemma \cite[Theorem~1.5.5]{Lemmas} we have that\[\limsup_{n\to\infty}\mathbb{P}(|X_n-X|>\epsilon)\le\mathbb{P}\left(\limsup_{n\to\infty}\{|X_n-X|>\epsilon\}\right)=0,\] where the equality follows from the definition of almost sure convergence.
			\item[(ii)] Let $\epsilon>0$, by Markov Inequality we have that\[\mathbb{P}(|X_n-X|\ge\epsilon)\le\frac{1}{\epsilon^p}\mathbb{E}[|X_n-X|^p]\xrightarrow[n\to\infty]{}0,\] where the convergence follows from the definition of convergence in $L^p$.
			\item[(iii)] By Borel–Cantelli's Lemma \cite[Theorem~2.3.1]{Lemmas} it is possible to demonstrate that if $X_n\xrightarrow{\mathbb{P}}X$, then there exists a subsequence $(X_{n_k})\subseteq(X_n)$ such that $X_{n_k}\xrightarrow{\text{a.s.}}X_n$. We conclude the proof by applying the Dominated Convergence Theorem \cite[Theorem~9.1.2]{Rosenthal}.
		\end{itemize}
	\end{proof}
\end{theorem}

\subsection{Continuous Mapping Theorem}

The notions of convergence above form the foundation for understanding the limiting behavior of random variables and are essential tools in probability theory. The last result of this section is the Continuous Mapping Theorem (also known as Mann–Wald's Theorem), which describes how the convergence of random variables is preserved under continuous functions. The theorem provides a powerful tool for analyzing the convergence of transformations of converging random variables.

\begin{theorem}[Continuous Mapping Theorem] \label{CMP}
	Let $(X_n)_{n\in\mathbb{N}},\ X$ be random variables defined on a metric space $S$, and let $S'$ be another metric space. Consider a continuous function $g: S\to S'$ (it is sufficient for $g$ to be continuous everywhere except fot a set $D_g$ of null measure). The following hold:\begin{itemize}
		\item[(i)] If $X_n\xrightarrow{(d)}X$, then $g(X_n)\xrightarrow{(d)}g(X)$,
		\item[(ii)] If $X_n\xrightarrow{\mathbb{P}}X$, then $g(X_n)\xrightarrow{\mathbb{P}}g(X)$,
		\item[(iii)] If $X_n\xrightarrow{\text{a.s.}}X$, then $g(X_n)\xrightarrow{\text{a.s.}}g(X)$.
	\end{itemize}
	\begin{proof}
		See \cite[Theorem~5]{CMP}.
	\end{proof}
\end{theorem}

\chapter{Theoretical Results} \label{Theoretical Results}

\section{Ergodic Theorem in our Framework} \label{Ergodic Theorem Model}

The Ergodic Theorem we stated in Section \ref{ET} requires the Markov chain to be irreducible. However, in our model irreducibility is too strong. Instead, in the following assumption we enforce that a unique recurrent communicating class exists, while all other states are transient.

\begin{Assumption}\label{A3}
	For each deterministic policy $u\in\mathcal{U}_D$, the Markov kernel $\phi^u$ on $S$ defined by \[\phi^u(s\mid s'):=\sum_{a'\in\mathcal{A}(s')}\phi(s\mid s',a')\cdot u(a'\mid s')\] has exactly one recurrent communicating class, and all other states are transient.
\end{Assumption}

This section is then dedicated to proving that the Ergodic Theorem still holds under Assumption \ref{A3}. There are two possible approaches:\begin{itemize}
	\item Prove the result in a slightly different but more general framework, with possibly infinitely many states and assuming the unique recurrent class is positive–recurrent\footnote{If the unique recurrent class is null–recurrent, then the theorem does not hold, so the statement is strictly more general than the one needed.} and hit almost surely, and then using the fact that a finite recurrent class is positive recurrent (see Definition \ref{Positive and Null Recurrence}),
	\item Prove the result in our framework, which includes assuming a finite state space, hence a unique positive–recurrent communicating class.
\end{itemize} We will present both approaches for completeness.

\subsection{Ergodic Theorem for a Markov Chain with Unique Positive Recurrent Class}

To commence with, we state and prove the Ergodic Theorem for a Markov chain defined on a general state space and having exactly one positive–recurrent communicating class, which is entered almost surely, while all other states are transient. This result is strictly more general than the Ergodic Theorem we need in our model, in which the state space is finite, implying the unique recurrent class—enforced in Assumption~\ref{A3}—to be positive–recurrent.

\begin{theorem}[Ergodic Theorem for a Markov Chain with Unique Positive–Recurrent Class] \label{Ergodic Theorem Positive Recurrent}
	Let $(X_t)_{t\ge0}$ be a continuous–time Markov chain with generator matrix $Q$ on a countable state space $I$. Suppose:\begin{itemize}
		\item[(i)] There is exactly one communicating class $C\subseteq I$ that is positive–recurrent,
		\item[(ii)] Every state $i\in I\setminus C$ is transient and $\mathbb{P}_i\bigl(T_C<\infty\bigr)=1$, where $T_C$ is the first passage time in $C$, i.e., $T_C=\inf\{t\ge0:\ X_t\in C\}$,
		\item[(iii)] Let $\lambda$ be the unique invariant distribution supported on the positive–recurrent communicating class $C$ (see the remark below).
	\end{itemize} Then, for any initial distribution $\nu$ on $I$ and for each $j\in I$,\[
		\frac{1}{t}\int_{0}^{t} \mathds{1}_{\{X_s=j\}}\,\mathrm{d}s\xrightarrow[t\to\infty]{\mbox{a.s.}}		\begin{cases}
		\lambda_j=\frac{1}{m_jq_j} & \mbox{if}\ j\in C\\
		0 &\mbox{if}\ j\notin C
		\end{cases},
	\] where $m_j=\mathbb{E}_j[T_j]$ is the expected return time to state $j\in C$ and $q_j=-Q_{j,j}$. Furthermore, for any bounded function $f: I\to\mathbb{R}$,\[
		\frac{1}{t} \int_{0}^{t} f(X_s) \,\mathrm{d}s \xrightarrow[t\to\infty]{\mbox{a.s.}} \sum_{i\in C} \lambda_i f(i).
	\]
	\begin{proof}
		The core of our proof is showing that the proportion of time spent outside $C$ goes to $0$, then restricting to the unique positive‐recurrent class $C$ and applying (actually, going over the proof of) the standard irreducible‐class argument, i.e., the Ergodic Theorem \ref{ET}.
		
		By hypothesis all states in $I\setminus $C are transient and with probability $1$ the chain visits $I\setminus C$ only finitely many times (or, equivalently, that the holding times outside $C$ have finite total sum almost surely). Concretely:\begin{itemize}
			\item Almost surely\[\int_0^\infty\mathds{1}_{\{X_s\in I\setminus C\}}\,\mathrm{d}s<\infty,\]
			\item Consequently, for large $t$, the chain spends negligible time outside $C$. Formally, \[
			\frac{1}{t} \int_{0}^{t}\mathds{1}_{\{X_s\in I\setminus C\}}\,\mathrm{d}s
			\xrightarrow[t\to\infty]{\mbox{a.s.}} 0,
			\] showing that the proportion of time outside $C$ goes to $0$ almost surely.
		\end{itemize}
		
		Once the chain is in $C$, it behaves like an irreducible, positive‐recurrent continuous‐time chain restricted to $C$.  Indeed, we can consider the trace of $X_t$ on $C$, i.e., look only at the times and states in $C$.  Because $C$ is irreducible and positive‐recurrent, there is a return‐time argument exactly as in the usual Ergodic Theorem proof: for each $i\in C$, denote \[m_i=\mathbb{E}_i[T_i],\quad \lambda_i=\frac{1}{m_iq_i},\] where $T_i$ is the first return time to $i$ starting from $i$, and $q_i=-Q_{i,i}$ is the holding rate at $i$. It holds that $(\lambda_i:\ i\in C)$ is the unique invariant distribution on $C$:\begin{itemize}
			\item By the strong Markov property and the law of large numbers \cite[Theorem~3.8.1]{Norris}, for any fixed $i\in C$,\[
			\frac{1}{t}\int_{0}^{t}\mathds{1}_{\{X_s=i\}}\,\mathrm{d}s\xrightarrow[t\to\infty]{\mbox{a.s.}} \lambda_i,
			\]
			\item Because the chain visits $C$ infinitely often with probability $1$ (once it reaches $C$, it is a positive‐recurrent class and cannot escape to a transient set forever, by definition), we obtain that the limit holds almost surely whenever the chain stops staying in a transient set, i.e., enters the recurrent class.
		\end{itemize}
	
		The first convergence of the theorem follows by putting the two steps above together:\begin{itemize}
			\item Almost surely, the chain spends vanishingly little time in $I\setminus C$ as $t\to\infty$
			\item Inside $C$, the standard Ergodic Theorem for irreducible positive‐recurrent chains applies and yields\begin{equation}
				\frac{1}{t}\int_{0}^{t}\mathds{1}_{\{X_s=i\}}\,\mathrm{d}s\xrightarrow[t\to\infty]{\mbox{a.s.}} \lambda_i.
				\label{ET 1}
			\end{equation}
		\end{itemize} Hence, for each $i\in C$, the proportion of time spent at $i$ converges to $\lambda_i$, and for $i\notin C$ it converges to $0$.
	
		By linearity, applying the same argument to a bounded function $f$ instead of the indicator function, we get\begin{equation}
			\frac{1}{t}\int_{0}^{t} f(X_s)\,\mathrm{d}s \xrightarrow[t\to\infty]{\mbox{a.s.}}\sum_{i\in C} \lambda_if(i),
			\label{ET 2}
		\end{equation}
		completing the proof (see Remark \ref{Linearity}).
	\end{proof}
	\begin{note}
		This is precisely the same conclusion as in the purely irreducible case, except that the limit distribution is supported on the single recurrent class $C$. All points outside $C$ receive limit‐weight $0$, reflecting the transience of those states.
	\end{note}
	\begin{remark}[Existence and Uniqueness of an Invariant Distribution for a Markov Chain with Exactly One Positive–Recurrent Class] \label{Invariant Distribution Ergodic}
		Hypothesis (iii) of the theorem above claims that in the framework of the theorem there exists a unique invariant distribution supported on the positive–recurrent communicating class $C$, which is unique for our Markov chain. Here is why this statement holds:\begin{itemize}
			\item The restriction of the generator matrix $Q$ to class $C$ gives an irreducible positive–recurrent Q–matrix. By Theorem \ref{Invariant distributions and positive recurrence}, such a matrix has a unique invariant distribution $\lambda=(\lambda_i:\ i\in C)$, given by\[\lambda_i=\frac{1}{m_iq_i}\quad\text{for all}\ i\in C.\] Extending the invariant distribution $\lambda$ to the entire state space $I$ by placing null mass on each state $i\in I\setminus C$ results in an invariant distribution (since each state $i\in I\setminus C$ is transient, then the mass placed on $i$ by any invariant distribution must be null).
			\item Uniqueness follows directly by the uniqueness of $\lambda$ over $C$: let $\nu$ be an invariant distribution for our process, then $\nu(i)=0$ for all $i\in I\setminus C$, which is a set of transient states. Thus, $\nu$ is supported on $C$ and since the restriction of $Q$ to $C$ has a unique invariant distribution, which is $\lambda$, then $\nu=\lambda$.
		\end{itemize}
	\end{remark}
	\begin{remark}[From the Convergence of Indicators to that of Bounded Functions] \label{Linearity}
		Partition the time interval $[0,t]$ through the disjoint sets $E_i(t):=\{s\in[0,t]:\ X_s=i\}$, which satisfy\[|E_i(t)|=\int_0^t\mathds{1}_{\{X_s=i\}}\mathrm{d}s.\] Then,\[\int_0^tf(X_s)\mathrm{d}s=\sum_{i\in I}\int_{E_i(t)}f(X_s)\mathrm{d}s=\sum_{i\in I}f(i)\cdot|E_i(t)|.\] Since $0\le\frac{1}{t}\int_0^t\mathds{1}_{\{X_s=i\}}\mathrm{d}s\le1$, by the Dominated Convergence Theorem the limit as $t\to\infty$ and the sum over $i\in I$ are exchangeable and the convergence obtained in \eqref{ET 1} implies that in \eqref{ET 2}.
	\end{remark}
\end{theorem}

\paragraph{The Null–Recurrent Case.}

In Appendix \ref{Ergodic Theorem Null Recurrent} we state and prove that the usual Ergodic Theorem does not hold for a Markov chain with unique null–recurrent class, showing that Theorem \ref{Ergodic Theorem Positive Recurrent} not only implies Theorem \ref{Ergodic Theorem Finite Space}, but is also the only possible generalization of the finite case.

\subsection{Ergodic Theorem for a Markov Chain on a Finite State Space with Unique Recurrent Class}

When trying to prove the Ergodic Theorem for a finite‐state continuous‐time Markov chain having exactly one recurrent class, the key point is a well‐known fact.

\begin{proposition} \label{Finite State Space and Recurrence}
	In a finite‐state Markov chain, every recurrent class is positive‐recurrent.
	\begin{proof}
		In the discrete–time framework, let $\lambda$ be an invariant distribution, then\[\mathbb{E}_i[T_i]=\frac{1}{\lambda_i}\quad\mbox{for all}\ i\in C.\] Note that $\lambda_i>0$ for all $i\in C$ since $C$ is irreducible and finite. Thus, $\mathbb{E}_i[T_i]<\infty$ for all $i\in C$, i.e., $C$ is positive recurrent.
		
		We conclude by applying this to the jump chain of our continuous–time Markov chain, since recurrence can be controlled on the jump chain (see Theorem \ref{Recurrence and Transience Jump Chain}).
	\end{proof}
\end{proposition}

Hence:\begin{itemize}
	\item If there is exactly one positive‐recurrent class and the rest of the states are transient, the Ergodic Theorem \ref{Ergodic Theorem Positive Recurrent} holds exactly as in the general case. The proof is even simpler in the finite‐state setting, since positive recurrence in a finite chain implies an exponential ergodic convergence (see Lemma \ref{Exponential ergodic convergence}) to the unique invariant distribution supported on the recurrent class,
	\item A scenario of exactly one null‐recurrent class cannot arise in a finite‐state Markov chain, so there is no need for a separate null‐recurrent Ergodic Theorem (which actually does not hold in the way we shall like to, see Appendix \ref{Ergodic Theorem Null}) in the finite‐state setting.
\end{itemize}

\begin{theorem}[Ergodic Theorem for a Markov Chain on a Finite State Space with Unique Recurrent Class] \label{Ergodic Theorem Finite Space}
	Suppose $I$ is a finite set of states, and $(X_t)_{t\ge0}$ is a continuous‐time Markov chain with generator matrix $Q$ on $I$. Suppose:\begin{itemize}
		\item[(i)] There is exactly one communicating class $C\subseteq I$ that is recurrent, thus positive–recurrent since $I$ is finite (see Proposition \ref{Finite State Space and Recurrence}),
		\item[(ii)] Every state in $I\setminus C$ is transient,
		\item[(iii)] Let $\lambda$ be the unique invariant distribution supported on the recurrent communicating class $C$ (same argument of Remark \ref{Invariant Distribution Ergodic}).
	\end{itemize} Then for any initial distribution $\nu$,\begin{equation}
		\frac{1}{t}\int_0^t\mathds{1}_{\{X_s=i\}}\,\mathrm{d}s\xrightarrow[t\to\infty]{\mbox{a.s.}}\begin{cases}\lambda_i&\ \mbox{if}\ i\in C\\0 &\ \mbox{if}\ i\notin C\end{cases}. \label{Ergodic Theorem Equation}
	\end{equation} Furthermore, for any bounded function $f: I\to\mathbb{R}$,\[
		\frac{1}{t}\int_{0}^{t} f(X_s)\,\mathrm{d}s
		\xrightarrow[t\to\infty]{\text{a.s.}}
		\sum_{i\in C} \lambda_if(i).
	\]
	\begin{proof}
		 Let $T_C:=\inf\{t\ge0:\ X_t\in C\}$, by hypothesis\[\mathbb{P}(T_C<\infty\mid X_0=j)=1\quad \text{for all } j\in I\setminus C.\] The Strong Markov Property (Theorem \ref{SMP}) states that, conditional on $T_C<\infty$:\begin{itemize}
			\item $(X_{T_C+t})_{t\ge 0}$ is a Markov chain with generator matrix $Q^{|C}$, i.e., the restriction of $Q$ to $C$,
			\item it is independent of the history before $T_C$.
		 \end{itemize} Hence we can study the long‑time averages after $T_C$ in total isolation.
	 
	 	Because $C$ is finite and recurrent, Proposition \ref{Finite State Space and Recurrence} guarantees it is positive recurrent. Finite state space immediately implies irreducibility of $Q^{|C}$. Therefore all assumptions of Theorem \ref{ET} are satisfied for the process $Y_t:=X_{T_C+t},\ t\ge 0$.
	 	
	 	Thus:\begin{itemize}
			\item If $i\in C$, Theorem \ref{ET} and Remark \ref{Invariant Distribution Ergodic} give \[\frac{1}{t}\int_0^{t}\mathds{1}_{\{Y_s=i\}}\,\mathrm{d}s\xrightarrow[t\to\infty]{\text{a.s.}}\lambda_i.\] Since $Y_t$ is just a time‑shift of $X_t$, the same limit holds for $X_t$ itself once $t$ is large enough,
			\item If $i\in I\setminus C$, then\[\mathbb{P}\left(\lim_{t\to\infty}\frac{1}{t}\int_0^t\mathds{1}_{\{X_s=i\}}\,\mathrm{d}s>0\right)\le\mathbb{P}\left(\lim_{t\to\infty}\int_0^t\mathds{1}_{\{X_s=i\}}\,\mathrm{d}s=\infty\right)=0,\] since it represents the total time spent in $i$, which is null because of $i$ being transient. 
			
			Thus,\[\mathbb{P}\left(\lim_{t\to\infty}\frac{1}{t}\int_0^t\mathds{1}_{\{X_s=i\}}\,\mathrm{d}s=0\right)=1.\]
		 \end{itemize}
	 
		 Combining these facts shows\[\frac{1}{t}\int_0^t\mathds{1}_{\{X_s=i\}}\,\mathrm{d}s\xrightarrow[t\to\infty]{\mbox{a.s.}}\begin{cases}\lambda_i&\ \mbox{if}\ i\in C\\0 &\ \mbox{if}\ i\notin C\end{cases}\] and by linearity (same argument of Remark \ref{Linearity}, made even simpler here since the state space is finite) we obtain the same result for $\frac{1}{t}\int_0^t f(X_s)\,\mathrm{d}s$.
	\end{proof}
	\begin{lemma}[Exponential Convergence of a Finite, Irreducible and Positive‐Recurrent Markov Chain to its Unique Invariant Distribution] \label{Exponential ergodic convergence}
		Consider a Markov chain on a finite state space $I=\{1,\hdots,N\}$. Assume the Markov chain to be irreducible and aperiodic (in the discrete–time case) or non–cyclic (in the continuous–time framework), ensuring that there is a unique invariant distribution $\lambda$, where $\lambda_i>0$ for all states $i\in I$.
		
		Then, not only convergence holds\[
		P^n(i,j)=\mathbb{P}(X_n = j \mid X_0 = i)
		\xrightarrow[n\to\infty]{}\lambda_j,
		\] but also exponential‐rate (geometric) convergence, i.e., there exist constants $0<r<1$ and $M<\infty$ such that\[
		\bigl|P^n(i,j) - \lambda_j\bigr|\le M\cdot r^n,
		\quad \text{for all }n\text{ and all states }i,j.\]
		It is written in the discrete–time case, but as we said it also holds for continuous–time Markov chains and the result is that $\|P_t - \Lambda\|$ decays exponentially fast in $t$, where $P_t$ is the transition matrix for time $t$ and $\Lambda$ is the invariant projection $\lambda\mathds{1}^T$.
		\begin{proof} We split the proof between the discrete and the continuous–time cases:
			\begin{itemize}
				\item \textbf{Discrete–Time}: Since the chain is finite, irreducible and aperiodic, there is a nonzero minimum probability of transitioning from any state to some specific state in a fixed number of steps. Equivalently, there exist $\delta>0$ and an integer $m$ such that for all states $i,j\in I$,	\[P^m(i,j) \ge \delta.\]
				
				From that, with probability at least $\delta$, the chain from any two initial states coalesces within $m$ steps. Once they coalesce, they evolve identically and remain coupled. This implies that for any two initial states $i$ and $k$, the total variation distance between the laws of $(X_{n,m}^{(i)}, X_{n,m}^{(k)})$ shrinks by a factor of $(1-\delta)$ at each multiple of $m$.
				
				Thus, we get geometric decay in distance to stationarity: for each $n$,\[\max_{i,j}\bigl|P^n(i,j)-\lambda_j\bigr|
				\le M\cdot r^n\quad\text{for some }M<\infty\text{ and }r<1.\]
				\item \textbf{Continuous–Time}: Firstly, we look at the jump chain: between jumps, the chain holds at its current state for an $E(\alpha)$ amount of time. The jump chain is discrete‐time on the same finite set, irreducible and aperiodic (since the original chain is not locked into a deterministic cycle), hence by the discrete‐time argument, the jump chain converges exponentially fast to its invariant distribution $\lambda$. We conclude by translating that back to the continuous‐time process:\[\bigl\|P_t(i,\cdot)-\lambda(\cdot)\bigr\|=O(e^{-\beta\,t})\quad\mbox{for some}\ \beta>0.\]
			\end{itemize}
		\end{proof}
		\begin{note}
			This lemma also applies to the chains in our model, i.e., consistent with Assumption \ref{A3}, which have a finite, irreducible and positive–recurrent subchain.
		\end{note}
	\end{lemma}
\end{theorem}

\section{Convergence of the Empirical Joint State–Policy Distribution to the Solution of the Mean–Field Model}

It follows, from the Picard–Lindelöf Theorem \cite[Theorem~5.7]{Smirnov} \cite[Theorem~4.A.5]{Sandholm}, that a solution trajectory \( \mu_{\mathcal{S} \times \mathcal{U}}(t) \) to \eqref{Mean-Field dotmu} exists and is unique and, from Kolmogorov's Strong Law of Large Numbers \cite[Theorem~2.4.1]{Lemmas}, that it approximates arbitrarily well the evolution of the empirical joint state–policy distribution \( \hat{\mu}^N_{\mathcal{S} \times \mathcal{U}}(t) \) as \( N \to \infty \), as shown in the following result.

\begin{lemma}[Convergence of the Empirical Joint State–Policy Distribution to the Solution of the Mean–Field Model]\label{Lemma 2.1}
	A solution to \eqref{Mean-Field dotmu} with initial condition \( \mu_{\mathcal{S} \times \mathcal{U}}(0) \) exists in \( t \in [0, \infty) \), is unique, and is Lipschitz continuous w.r.t. \( \mu_{\mathcal{S} \times \mathcal{U}}(0) \). Furthermore, if \( (s^{i,N}(0), u^{i,N}(0)) \sim \mu_{\mathcal{S} \times \mathcal{U}}(0) \) for all \( i \in [N] \), then \( \hat{\mu}^N_{\mathcal{S} \times \mathcal{U}}(t) \) and \( \hat{\mu}^N_{\mathcal{S} \times \mathcal{A}}(t) \) converge almost surely to \( \mu_{\mathcal{S} \times \mathcal{U}}(t) \) and \( \mu_{\mathcal{S} \times \mathcal{A}}(t) \), respectively, as \( N \to \infty \) for all \( t \in [0, \infty) \).
	\begin{proof}
		As we saw in Section \ref{Mean-Field Model}, when $\mathrm{d}t\rightarrow0$, the balance equation \eqref{Mean-Field dmu} takes the form of the ODE \eqref{Mean-Field dotmu}. Let $V(x):=x\cdot Q_{\mathcal{S}}$, where $Q_{\mathcal{S}}=(q_{s,s'})_{s,s'\in\mathcal{S}}$ denotes the state transition Q–matrix\footnote{In the context of this result, the state transition Q–matrix is sufficient, because no policy switches are taken into account, as it is expressed in the balance equation \eqref{Mean-Field dmu}.}, as defined in Definition \ref{StateQ}. Then the Cauchy problem we are studying can be written as:\begin{equation}
			\begin{cases}
				\dot{\mu}_{\mathcal{S}}^u(t)=V(\mu_{\mathcal{S}}^u(t))=\mu_{\mathcal{S}}^u(t)\cdot Q_{\mathcal{S}}\\
				\mu_{\mathcal{S}}^u(0)=\mu(u,0)
			\end{cases},
		 	\label{Equation proof 5.1.1}
		\end{equation} where for each policy $u\in\mathcal{U}$, we define \[\mu_{\mathcal{S}}^u(t):=\mu_{\mathcal{S}\times\mathcal{U}}[\cdot,u](t)\in\mathcal{P}_u(\mathcal{S}),\] where $\mathcal{P}_u(\mathcal{S})$ denotes the space of marginal distributions over the state space $\mathcal{S}$ for a fixed policy $u$. Hence, $\mu_{\mathcal{S}}^u(t)$ is a vector of dimension $|\mathcal{S}|$ such that $\mu_{\mathcal{S}}^u[s](t)=\mu_{\mathcal{S}\times\mathcal{U}}[s,u](t)$. By doing this, we split the ODE into $|\mathcal{U}|$ different ODEs, one for each policy, to perceive the full context of this lemma, in which policies are held fixed and players only change their states.
	
		Thus, the vector field $V$ lies on the tangent space of $\mathcal{P}(\mathcal{S})$, the space of probability distributions over $\mathcal{S}$ (for an in–depth analysis of this statement see the dedicated remark below). Since $\mathcal{P}_u(\mathcal{S})$ is a compact convex space and $V$ is Lipschitz continuous (for thorough proofs of both see the remarks below), we apply an extension of the Picard–Lindel\"of Theorem to compact convex spaces \cite{Smirnov}\cite{Sandholm} and obtain that a unique solution $\mu_{\mathcal{S}}^u(t)$ exists for all $t\in[0,\infty)$.
		
		To establish Lipschitz continuity of $\mu_{\mathcal{S}}^u(t)$ with respect to the initial condition $\mu_{\mathcal{S}}^u(0)$, we commence with observing that the vector field $V$ satisfies a Lipschitz condition (see the remark below):\begin{equation*}
			\norm{V(\mu_1)-V(\mu_2)}\le L\cdot\norm{\mu_1-\mu_2}
		\end{equation*} for some constant $L$ and for all $\mu_1,\mu_2\in\mathcal{P}_u(\mathcal{S})$. Applying Gronwall's Inequality \cite[Chapter~5]{Norris}\cite{Sandholm}, we obtain:\begin{equation*}
			\norm{\mu_1(t)-\mu_2(t)}\le e^{Lt}\cdot\norm{\mu_1(0)-\mu_2(0)}.
		\end{equation*} Thus, the constant $e^{Lt}$ establishes that $\mu_{\mathcal{S}}^u(t)$ is Lipschitz continuous with respect to $\mu_{\mathcal{S}}^u(0)$.
		
		Finally, to show that $\hat{\mu}_{\mathcal{S}\times\mathcal{U}}^N(t)$ and $\hat{\mu}_{\mathcal{S}\times\mathcal{A}}^N(t)$ converge almost surely to $\mu_{\mathcal{S}\times\mathcal{U}}(t)$ and $\mu_{\mathcal{S}\times\mathcal{A}}(t)$, respectively, we split the proof into separate convergences, one for each policy, showing that $\hat{\mu}_{\mathcal{S}}^{u,N}(t)$ converges almost surely to $\mu_{\mathcal{S}}^u(t)$, where $\hat{\mu}_{\mathcal{S}}^{u,N}(t):=\hat{\mu}_{\mathcal{S}\times\mathcal{U}}^N[\cdot,u](t)$.
		
		Because the players evolve independently under a fixed policy $u$, the argument of Remark \ref{Law of Large Numbers limit at fixed time} applies: for every $t$ the empirical vector $\hat{\mu}^{u,N}_{\mathcal{S}}(t)$ is an average of i.i.d. variables and converges almost surely to $\mu^{u}_{\mathcal{S}}(t)$ as $N\to\infty$.
		
		Therefore $\hat{\mu}^{N}_{\mathcal{S}\times\mathcal{U}}(t)\rightarrow\mu_{\mathcal{S}\times\mathcal{U}}(t)$ almost surely. Then, by the Continuous Mapping Theorem\footnote{The applicability of Theorem \ref{CMP} in this context is discussed below in a dedicated remark.} (see Theorem \ref{CMP}), we obtain the same type of convergence for the joint state–action distribution:\[	\hat{\mu}^N_{\mathcal{S} \times \mathcal{A}}(t)[s,a] = \int_{\mathcal{U}} \hat{\mu}^N_{\mathcal{S} \times \mathcal{U}}(t)[s,u] \cdot u(a \mid s) \,\mathrm{d}\mu(u)\xrightarrow[N\to\infty]{\mbox{a.s.}} \mu_{\mathcal{S} \times \mathcal{A}}(t)[s,a],\] completing the proof.
	\end{proof}

	\begin{remark}[Key Reasons Why $V$ Lies on the Tangent Space of  $\mathcal{P}(\mathcal{S})$]
		The vector field $V(\mu)=\mu\cdot Q_{\mathcal{S}}$ lies in the tangent space of $\mathcal{P}(\mathcal{S})$ because it respects the probability conservation law\begin{equation*}
			\sum\limits_{s'\in\mathcal{S}}q_{s,s'}=0.
		\end{equation*} This follows directly from the structure of the Q–matrix, which ensures that probability mass is redistributed but never created or destroyed.
	\end{remark}
	
	\begin{remark}[Convexity and compactness of the space $\mathcal{P}_u(\mathcal{S})$]
		Firstly, we prove convexity and compactness of $\mathcal{P}(\mathcal{S})$. Then, we deduce that $\mathcal{P}_u(\mathcal{S})$ inherits these properties.
		
		Let $\mu_1,\mu_2\in\mathcal{P}(\mathcal{S})$ and consider the convex combination $\mu:=\lambda\mu_1+(1-\lambda)\mu_2$ for some $\lambda\in[0,1]$, then:\begin{itemize}
			\item $\mu(s)\ge0\ \forall\ s\in\mathcal{S}$,
			\item $\sum\limits_{s\in\mathcal{S}}\mu(s)=\lambda\cdot\sum\limits_{s\in\mathcal{S}}\mu_1(s)+(1-\lambda)\cdot\sum\limits_{s\in\mathcal{S}}\mu_2(s)=\lambda\cdot1+(1-\lambda)\cdot1=1$.
		\end{itemize} Thus, $\mathcal{P}(\mathcal{S})$ is convex.
		
		Since $\mathcal{S}$ is a finite space, let $n:=|\mathcal{S}|$, it is possible to identify the space $\mathcal{P}(\mathcal{S})$ with the unit simplex $\Delta_n$ of dimension $n$, which is a closed and bounded subset of $\mathbb{R}^n$, thus compact.
		
		Finally, we conclude that what above implies convexity and compactness of the set $\mathcal{P}_u(\mathcal{S})$ of marginal distributions for a fixed policy $u$:\begin{itemize}
			\item Convexity: Since policy $u$ is fixed, the convexity of $\mathcal{P}(\mathcal{S})$ implies the one of $\mathcal{P}_u(\mathcal{S})$,
			\item Compactness: Since $\mathcal{P}_u(\mathcal{S})\subseteq\mathcal{P}(\mathcal{S})$, which is compact, to ensure the compactness of $\mathcal{P}_u(\mathcal{S})$, it is sufficient to prove its closedness. Taken any convergent sequence of elements in $\mathcal{P}_u(\mathcal{S})$, its limit must belong to $\mathcal{P}_u(\mathcal{S})$, since policy $u$ is fixed.
		\end{itemize}
	\end{remark}
	
	\begin{remark}[Key Reasons Why $V$ Satisfies a Lipschitz Condition]
		The vector field $V$ satisfies
			\begin{align*}
				V(\mu_1)-V(\mu_2)=&R_d\cdot\sum\limits_{s'\in\mathcal{S}}\sum\limits_{a'\in\mathcal{A}(s')}\phi(s\mid s',a')\cdot u(a'\mid s')\cdot(\mu_1[s']-\mu_2[s'])+\\&-R_d\cdot(\mu_1[s]-\mu_2[s]).
			\end{align*}We conclude by using the fact that both $\phi(s\mid s',a')$ and $u(a'\mid s')$ are bounded.
	\end{remark}

	\begin{remark}[Law‑of‑Large‑Numbers limit at fixed time] \label{Law of Large Numbers limit at fixed time}
		Fix a policy $u\in\mathcal{U}$. For each player $i\in[N]$ the process $(s^{N}_{i}(t))_{t\ge0}$ is an independent copy of the continuous‑time Markov chain with generator $Q_\mathcal{S}$.
		
		Hence, for every state $s\in\mathcal{S}$ and every $t\ge0$ the random variables\[X_i(t):=\mathds{1}_{\{s^{i,N}(t)=s,\ u_i=u\}},\ i=1,\hdots,N\] are i.i.d.\footnote{The random pairs $(s_1^{N}(t),u_1),\ldots,(s_N^{N}(t),u_N)$ are i.i.d.\ (see Section \ref{Finite Population Model}).	Each $X_i(t)$ is the indicator of the Borel event $\{s_i^{N}(t)=s,\ u_i=u\}$; hence the independence carries over and $\mathbb P\left(X_i(t)=1\right)=\mathbb P\left(s_i^{N}(t)=s,\ u_i=u\right)=\mu^{u}_{\mathcal S}[s]$, so $X_i(t)\sim\text{Bernoulli}\left(\mu^{u}_{\mathcal S}[s]\right)$, showing that $X_1(t),\hdots,X_N(t)$ are i.i.d. at every fixed time $t$.} with expected value\[\mathbb{E}[X_i(t)]=\mathbb{P}(s^{i,N}(t)=s)\overset{(\star)}{=}\mu^{u}_{\mathcal{S}}[s](t)<\infty.\]
		
		By Kolmogorov’s Strong Law of Large Numbers \cite[Theorem~2.4.1]{Lemmas},\[\hat{\mu}^{u,N}_{\mathcal{S}}[s](t)=\frac{1}{N}\sum_{i=1}^{N}X_i
		\xrightarrow[N\to\infty]{\mbox{a.s.}}\mu^{u}_{\mathcal{S}}[s](t).\]
		
		\begin{note}
			The equality ($\star$) holds because, by definition, $\mu^{u}_{\mathcal S}[s]=\mu_{\mathcal S\times \mathcal U}[s,u]$. Since the policy component $u_i$ is frozen over time, the slice $\mu^{u}_{\mathcal S}(t)=\mu_{\mathcal S\times \mathcal U}[\cdot,u](t)$ evolves as $\mu^{u}_{\mathcal S}(t)=\mu^{u}_{\mathcal S}(0)\cdot e^{tQ_\mathcal S}$,	and its $s$-th component is exactly the joint probability $\mathbb P\left(s_i^{N}(t)=s,\ u_i=u\right)$.
		\end{note}
	\end{remark}

	\begin{remark}[Applicability of the Continuous Mapping Theorem]
		Fix a policy $u\in\mathcal{U}$, we want to demonstrate that the map\[F_{s,a}: \hat{\mu}\mapsto\int_{\mathcal{U}}\hat{\mu}[s,u]\cdot u(a\mid s)\,\mathrm{d}\mu(u)\] is continuous for all $s\in\mathcal{S},\ a\in\mathcal{A}(s)$. Specifically, we show that if $\hat{\mu}_n\to\hat{\mu}$, then $F_{s,a}(\hat{\mu}_n)\to F_{s,a}(\hat{\mu})$.
		
		Since both $\hat{\mu}_n[s,u]$ and $u(a\mid s)$ take values in $[0,1]$, then $F_{s,a}$ is well–defined (the integrand is always bounded by $1$ and positive) and\[\bigl|		\hat{\mu}_n[s,u]\cdot u(a\mid s)-\hat{\mu}[s,u]\cdot u(a\mid s)\bigr|=\bigl|\hat{\mu}_n[s,u] - \hat{\mu}[s,u]\bigr|\cdot u(a\mid s)\le\bigl|\hat{\mu}_n[s,u] - \hat{\mu}[s,u]\bigr|.\] Integrating over $\mathcal{U}$, we obtain\[\left|\int_{\mathcal{U}}\hat{\mu}_n\cdot u(a\mid s)\,\mathrm{d}\mu(u)-\int_{\mathcal{U}}\hat{\mu}\cdot u(a\mid s)\,\mathrm{d}\mu(u)\right|\le\int_{\mathcal{U}}\left|\hat{\mu}_n-\hat{\mu}\right|\,\mathrm{d}\mu(u)\xrightarrow{}0,\] ensuring continuity.
	\end{remark}
\end{lemma}

\begin{note}[Convergence Rate through Kolmogorov's Strong Law of Large Numbers]
	Since $0\le X_i\le1$, we have\[\mathbb{E}[|X_i|^k]\le1,\quad \mbox{for all}\ k\ge 1.\] Thus, we obtain a convergence rate of $\mathcal{O}(N^{-1/2})$ \cite{SLLNRate}.
\end{note}

\section{Approximation of Nash Equilibria of the Finite Population Game by the Mean–Field Model}

In this section we rigorously state and prove that an MSNE in the mean–field game approximates arbitrarily well, for large enough $N$, an MSNE in the finite population game.

\begin{theorem}[Approximation of Nash Equilibria of the Finite Population Game by the Mean–Field Model] \label{Theorem 3.2}
	Let \( \mu \) be a MSNE in the discounted payoff mean–field game according to Definition \ref{MSNE}. Then, for any \( \epsilon > 0 \) there is \( N_{\epsilon} \in \mathbb{N} \) such that for any \( N > N_{\epsilon} \), a collection of policies \( \{ u^i \}_{i \in [N]} \) with\begin{equation*}
		\left| \frac{1}{N} \sum_{i \in [N]} \mathds{1}_{u^i}(u) - \mu(\mathcal{S}, u) \right| \leq \frac{1}{N}, \quad \forall u \in \mathcal{U}_D
	\end{equation*} is a weak \( \epsilon \)–MSNE (see Definition \ref{epsilon MSNE}) in the finite–population game with initial condition \( \mu \).

	\begin{proof}
		Consider the expected payoff function for player \( i \):\[	J^{i,N}(\mu, u^1, \hdots, u^N) = \mathbb{E} \left[ \sum_{k=0}^{\infty} \beta^k\cdot r \left( s^{i,N}(t_k),\ a^{i,N}(t_k),\ \hat{\mu}_{\mathcal{S} \times \mathcal{A}}^N(t_k) \right) \right] \]
	
		From Lemma \ref{Lemma 2.1}, we know that the empirical state–action distribution \( \hat{\mu}_{\mathcal{S} \times \mathcal{U}}^N(t) \) converges almost surely to \( \mu_{\mathcal{S} \times \mathcal{U}}(t) \) as \( N \to \infty \). Thus, we can apply the Dominated Convergence Theorem \cite[Theorem~9.1.2]{Rosenthal} and switch the limit as $N\to\infty$ with the expectation:\begin{align*}
			\lim_{N\to\infty}J^{i,N}(\mu, u^1, \hdots, u^N)&=\lim_{N\to\infty}\mathbb{E} \left[ \sum_{k=0}^{\infty} \beta^k\cdot r \left( s^{i,N}(t_k),\ a^{i,N}(t_k),\ \hat{\mu}_{\mathcal{S} \times \mathcal{A}}^N(t_k) \right) \right] \\&\overset{(\star)}{=} \mathbb{E} \left[ \sum_{k=0}^{\infty} \beta^k\cdot r \left( s^{i,N}(t_k),\ a^{i,N}(t_k),\ \mu_{\mathcal{S} \times \mathcal{A}}(t_k) \right) \right] \\ &=J(u^i, \mu^\infty_{\mathcal{S}\times\mathcal{A}}),
		\end{align*} where $\mu^\infty_{\mathcal{S}\times\mathcal{A}}(s,a)=\sum_{u\in\mathcal{U}_D}\mu(s,u)\cdot u(a\mid s)$.
	
		Because $\mu$ is an MSNE, every deterministic policy that receives positive mass is a best response. We built the profile $\{u_i\}{i\in[N]}$ by sampling from $\mu$, so with probability $1$ it holds $\mu(\mathcal S,u_i):=\sum_{s\in \mathcal S}\mu_{\mathcal S\times\mathcal U}[s,u^i]>0$ for every realised $u_i$. Consequently, for each such $i$\[J\left(u^i,\mu_{\mathcal S\times\mathcal A}^{\infty}\right)=\max_{v\in \mathcal U_D}J\left(v,\mu_{S\times A}^{\infty}\right),\] and using the same procedure as before through the Dominated Convergence Theorem, but now when player $i$ uses policy $v\in\mathcal{U}_D$, yields\[ \lim_{N\to\infty}J^{i,N}(\mu,u^1,\hdots,u^{i-1},v,u^{i+1},\hdots,u^N)=J(v,\mu^\infty_{\mathcal{S}\times\mathcal{A}})\le\max_{v\in\mathcal{U}_D}J(v, \mu^\infty_{\mathcal{S}\times\mathcal{A}}). \]
		
		Thus,\[\lim_{N\to\infty}J^{i,N}(\mu,u^1,\hdots,u^{i-1},v,u^{i+1},\hdots,u^N)\le\lim_{N\to\infty}J^{i,N}(\mu, u^1, \hdots, u^N).\] By the definition of limit we obtain that for any $\epsilon>0$ there exists $N_\epsilon\in\mathbb{N}$ such that for all $N>N_\epsilon$\[J^{i,N}(\mu, u^1, \hdots, u^N)>J^{i,N}(\mu,u^1,\hdots,u^{i-1},v,u^{i+1},\hdots,u^N)-\epsilon,\quad\forall v\in\mathcal{U}_D,\] establishing the weak \( \epsilon \)–MSNE condition.
	\end{proof}

	\begin{remark}[Interchangeability of the Limit as $N\to\infty$ and the Infinite Discounted Sum]
		 In this Remark we give a step–by–step justification of the equality $(\star)$.
		 
		 Let $\beta\in(0,1)$ and $R:=\sup_{s,a,\mu}|r(s,a,\mu)|<\infty$ (both $\mathcal{S}$ and $\mathcal{A}$ are finite, so $\mathcal{P}(\mathcal{S}\times\mathcal{A})$ is compact). For player $i\in[N]$ we define\begin{align*}
			Z_k^{i,N}&:=r(s^{i,N}(t_k),\ a^{i,N}(t_k),\ \hat{\mu}^N_{\mathcal{S}\times\mathcal{A}}(t_k)),\\ Z_k^i&:=r(s^{i,N}(t_k),\ a^{i,N}(t_k),\ \mu_{\mathcal{S}\times\mathcal{A}}(t_k)).
		 \end{align*} Lemma \ref{Lemma 2.1} tells us that \(\widehat\mu^{N}_{\mathcal S\times \mathcal A}(t_k)\xrightarrow[N\to\infty]{\text{a.s.}}\mu_{\mathcal S\times \mathcal A}(t_k)\) for every fixed $k$. Because $r$ is continuous in its third argument (see Assumption \ref{Assumption1}) and bounded, we get\[Z_{k}^{i,N}\xrightarrow[N\to\infty]{\text{a.s.}} Z_{k}^i\quad\text{for every fixed }k\text{ and }i\in[N].\]
	 
	 	For all $N,k$ we have $|Z_{k}^{i,N}|\le R$.	Therefore the random variables\begin{align*}
			Y^{i,N}&:=\sum_{k=0}^{\infty}\beta^{k}\cdot Z_{k}^{i,N},\\
			Y^i&:=\sum_{k=0}^{\infty}\beta^{k}\cdot Z_{k}^i.
	 	\end{align*} satisfy $|Y^{i,N}|\le R/(1-\beta)$, independently of $N$.
 	
 		Now, fix $K\in\mathbb{N}$. Split the series:\[Y^{i,N}=\sum_{k=0}^{K-1}\beta^{k}\cdot Z_{k}^{i,N}+\sum_{k=K}^{\infty}\beta^{k}\cdot Z_{k}^{i,N}.\]\begin{itemize}
			\item For each fixed $k<K$ the Dominated Convergence Theorem \cite[Theorem~9.1.2]{Rosenthal} gives\[\mathbb{E}\left[\beta^{k}\cdot Z_{k}^{i,N}\right]\xrightarrow[N\to\infty]{}\mathbb{E}\left[\beta^{k}\cdot Z_{k}^i\right].\]
			\item Regardless of $N$,\[\left|\mathbb{E}\left[\sum_{k=K}^{\infty}\beta^{k}\cdot Z_{k}^{i,N}\right]\right|\le\frac{R\cdot\beta^{K}}{1-\beta}\xrightarrow[K\to\infty]{}0.\] Because the right‑hand side is $N$–independent, we have uniform convergence of the tails.
 		\end{itemize}
 	
 		Thus, define the partial–sum function\[S^{i,N}(K):=\sum_{k=0}^{K-1}\beta^k\cdot\mathbb{E}\left[Z_k^{i,N}\right]\]\begin{itemize}
			\item Because $|Z^{i,N}_{k}|\le R$ for all $k,N$, the series $\sum_{k\ge 0}\beta^{k}\cdot \mathbb{E}[Z^{i,N}_{k}]$ converges absolutely, hence $A^{i,N}:=\lim_{K\to\infty}S^{i,N}(K)=\mathbb E[Y^{i,N}]$ exists for every $N$.
			\item The geometric bound \[\left|A^{i,N}-S^{i,N}(K)\right|=\left|\mathbb{E}\left[\sum_{k=K}^{\infty}\beta^{k}\cdot Z^{i,N}{k}\right]\right|\le\frac{R\cdot\beta^{K}}{1-\beta}\] is independent of $N$. Hence $S^{i,N}(K)\xrightarrow[K\to\infty]{}A^{i,N}$ uniformly in $N$.
 		\end{itemize}
 	
 		With these two facts in place, both assumptions of \cite[Theorem~7.11]{Rudin} (uniform convergence	and existence of the outer limits $A^{i,N}$) are satisfied, so we may invoke the theorem	to conclude that\[\lim_{N\to\infty}A^{i,N}=\lim_{N\to\infty}\lim_{K\to\infty}S^{i,N}(K)=\lim_{K\to\infty}\lim_{N\to\infty}S^{i,N}(K),\]i.e.,\[\lim_{N\to\infty}\mathbb{E}\left[Y^{i,N}\right]=\mathbb{E}\left[Y^i\right].\] Putting the steps together yields exactly the equality marked by $(\star)$.
	\end{remark}
\end{theorem}

\section{Convergence of Finite Population Joint State–Policy Distributions to the Evolutionary Dynamics Mean– Field Model}

In this section we encounter the last theoretical results, which regard the mean–field evolutionary dynamics. We begin with Lemma \ref{Lemma 4.1}, which is an existence and uniqueness result based on Picard–Lindel\"of Theorem. This lemma forms the foundation for our last result, Theorem \ref{Theorem 4.1}, which demonstrates that the joint state–policy distributions in the finite population game converge to the ones of the mean–field evolutionary dynamics as the population size grows.

\begin{lemma}[Well–Definition of the Mean–Field Evolutionary Dynamics]\label{Lemma 4.1}
	Under Assumptions \ref{Assumption1} and \ref{Assumption2}, a solution to the mean dynamic of revision protocol \( \rho \), characterized by \eqref{ODE8}, with initial condition \( \mu(0) \in \mathcal{P}(\mathcal{S} \times \mathcal{U}) \) exists in \( t \in [0, \infty) \), is unique, and is Lipschitz continuous w.r.t. \( \mu(0) \).
	\begin{proof}
		First, notice that \eqref{ODE8} can be written for all states \( s \in \mathcal{S} \) and policies \( u \in \mathcal{U}_D \) in vector form as an ODE with a vector field \( V : \mathcal{P}(\mathcal{S} \times \mathcal{U}_D) \to T\mathcal{P}(\mathcal{S} \times \mathcal{U}_D) \), where \( T\mathcal{P}(\mathcal{S} \times \mathcal{U}_D) \) denotes the tangent space of \( \mathcal{P}(\mathcal{S} \times \mathcal{U}_D) \).\\	Second, notice that for all \( s \in \mathcal{S} \) and all \( u \in \mathcal{U}_D \), \( J(u, s, \mu_{\mathcal{S} \times \mathcal{A}}) \), as defined in \eqref{DefinitionNE}, can be written as a linear combination of a finite number of single state reward functions. \\ Therefore, due to Assumption \ref{Assumption1}, \( J(u, s, \mu_{\mathcal{S} \times \mathcal{A}}) \) is Lipschitz continuous w.r.t. \( \mu \).\\ Hence, for any \( s \in \mathcal{S} \), \( F^s(\mu) \), defined in \eqref{PolicyTransitions}, is Lipschitz continuous w.r.t. \( \mu \).\\ Furthermore, due to Assumption \ref{Assumption2}, \( V(\mu) \) is Lipschitz continuous w.r.t. \( \mu \).
		
		Under these conditions, since \( \mathcal{P}(\mathcal{S} \times \mathcal{U}_D) \) is convex and compact, existence and uniqueness follows from an extension of the Picard–Lindelöf Theorem to compact convex spaces \cite[Theorem~5.7]{Smirnov} \cite[Theorem~4.A.5]{Sandholm} and Lipschitz continuity follows from Gr\"onwall’s Inequality \cite[Theorem 4.A.3]{Sandholm}.
	\end{proof}
\end{lemma}

\begin{theorem}[Convergence of Finite Population Joint State–Policy Distribution to the Evolutionary Dynamics Mean–Field Model]\label{Theorem 4.1}
	If \( (s^{i,N}(0), u^{i,N}(0)) \sim \mu(0) \) for all \( i \in [N] \), then \( \hat{\mu}^N(t) \) converges in probability to \( \mu(t) \) for all \( t \in [0, \infty) \) as \( N \to \infty \).
	
	\begin{proof}
		We aim to show that as the population size $N$ grows large, the empirical distribution of joint state–policy $\hat{\mu}^N(t)$ converges in probability to the solution $\mu(t)$ of the mean–field evolutionary dynamics, for all $t\in[0,\infty)$.
		
		In Lemma \ref{Lemma 4.1} we showed that the joint state–policy distribution $\mu(t)$ is the unique solution of the ODE \eqref{ODE8}. This equation can be written in vector form as an ODE with a vector field $V$, which thus satisfies $V(x):=x\cdot Q$, where $Q$ denotes the joint state–policy transition Q–matrix (see Definition \ref{Joint Q-matrix}). Then the Cauchy problem studied in Lemma \ref{Lemma 4.1} can be written as:\begin{equation}
			\begin{cases}
				\dot{\mu}(t)=V(\mu(t))=\mu(t)\cdot Q\\
				\mu(0)
			\end{cases}.
			\label{Cauchy proof 5.3.2}
		\end{equation}
		
		To show that $\hat{\mu}^N(t)$ converges in probability to $\mu(t)$ we apply Kurtz's Theorem \cite[Chapter~5]{Norris}\cite{Sandholm} (see the remark below) as presented in Section \ref{Kurtz}, which, in our framework, states that the empirical joint state–policy distribution $\hat{\mu}^N(t)$ converges in probability to the deterministic measure $\mu(t)$ as $N\rightarrow\infty$.
	\end{proof}
	\begin{remark}[Applicability of Kurtz's Theorem to the Evolutionary Dynamics]
		Let $X_t^N=\hat{\mu}^N(t)\in\mathcal{P}(\mathcal{S}\times\mathcal{U}_D)$ and $\mathcal{X}^N:=\mathcal{P}(\mathcal{S}\times\mathcal{U}_D)$. The process $(X_t^N)_{t\ge0}$ evolves according to the joint state–policy transition Q–matrix $Q$ defined in \eqref{Joint Q-matrix}. The limiting mean–field distribution $\mu(t)$ solves the deterministic Cauchy problem \ref{Cauchy proof 5.3.2}.
		
		To apply Kurtz's Theorem with the notation of Section \ref{Kurtz}, we define the following in our context, with $x\in\mathcal{X}^N$:\begin{itemize}
			\item Total jump rate: \[\lambda_x^N:=N\cdot\sum_{(s,u)\in\mathcal{S}\times\mathcal{U}_D}x[s,u]\cdot q(s,u),\] where $q(s,u)=-Q_{(s,u),(s,u)}$.
			\item Increment: a single jump in the process changes the empirical distribution by \[\zeta_x^N:=\frac{1}{N}(\textbf{e}_{(s',u')}-\textbf{e}_{(s,u)})\] with probability\[\mathbb{P}\left(\zeta_x^N=\frac{1}{N}(\textbf{e}_{(s',u')}-\textbf{e}_{(s,u)})\right)=\frac{x[s,u]\cdot Q_{(s,u),(s',u')}}{\lambda_x^N},\] where $\textbf{e}_i$ denotes the $i$–th elementary basis vector.
		\end{itemize}
		
		Now, we verify the conditions of Kurtz's Theorem:\begin{itemize}
			\item[(i)] Let $V^N(x):=\lambda_x^N\cdot\mathbb{E}[\zeta_x^N]$, to apply Kurtz's Theorem we need to show that the empirical process converges uniformly to the mean–field limit as $N\to\infty$, i.e.,\[\lim_{N\to\infty}\sup_{x\in\mathcal{X}^N}|V^N(x)-V(x)|=0.\tag{$\star$}\]
			
			It holds (see \cite[Section~10.2.2]{Sandholm}):\begin{align*}
				V^N(x)&=\lambda_x^N\cdot\mathbb{E}[\zeta_x^N]\\&=\sum_{(s,u)\in\mathcal{S}\times\mathcal{U}_D}x[s,u]\cdot\sum_{(s',u')\ne(s,u)}Q_{(s,u),(s',u')}e_{(s',u')}+\\&\hspace{16pt}-\sum_{(s,u)\in\mathcal{S}\times\mathcal{U}_D}x[s,u]\cdot q(s,u)e_{(s,u)}\\&=xQ=V(x).
			\end{align*}Thus, $V^N$ is independent of $N$ and $V^N\equiv V$, so ($\star$) trivially holds.
			\item[(ii)] The expected displacement $A^N(x):=\lambda_x^N\cdot\mathbb{E}[|\zeta_x^N|]$ is uniformly bounded in $N$ since each individual jump changes the distribution by at most $1/N$:\[\sup_N\sup_{x\in\mathcal{X}^N}A^N(x)<\infty.\]
			\item[(iii)] The expected displacement due to large jumps vanishes:
			\[
			\lim_{N\to\infty}\sup_{x\in\mathcal{X}^N}A^N_\delta(x) = 0 \quad \text{for any } \delta > 0,
			\]
			since $|\zeta_x^N| \leq 2/N$ implies that for large $N$, no jump exceeds any fixed $\delta > 0$, where $A^N_\delta(x)=\lambda_x^N\cdot\mathbb{E}[|\zeta_x^N\cdot\mathds{1}_{\{|\zeta_x^N|>\delta\}}|]$.
		\end{itemize}
		
		Therefore, by \eqref{Kurtz's equation} we obtain that
		\[
		\lim_{N\to\infty}\mathbb{P}\left(\sup_{t \in [0,T]} \left| X_t^N - x_t \right|<\epsilon\right) = 1 \quad \mbox{for all}\ T<\infty,\ \epsilon>0,
		\] i.e.,\[\hat{\mu}^N(t)\xrightarrow[N\to\infty]{\mathbb{P}}\mu(t)\quad\mbox{for all}\ t\in[0,\infty).\]
	\end{remark}
\end{theorem}

\begin{note}[Convergence Rate through Kurtz's Theorem]
	We proved the convergence in probability by an application of Kurtz's Theorem, which guarantees a convergence rate of $\mathcal{O}(N^{-1/2})$ \cite{KurtzRate}.
\end{note}

\chapter{Practical Results} \label{Practical Results}

The objective of this chapter is to confirm experimentally the two key convergence statements proved in Chapter \ref{Theoretical Results}, namely\begin{itemize}
	\item Lemma \ref{Lemma 2.1}: almost–sure convergence of the empirical joint state–policy distribution to the solution of the mean–field model,
	\item Theorem \ref{Theorem 4.1}: convergence in probability of the joint state–policy distribution to the evolutionary dynamics mean–field model.
\end{itemize}

We first outline the practical model and set–up, analyze the code and then present the results observed for the empirical distributions both with fixed policy and evolutionary dynamics. Throughout the chapter we measure the convergence as a discrepancy in the sup–norm (see Section \ref{Choice of the Topology}).

\section{Experimental Model and Set–Up}

The model we use for the simulations, that follows the one presented in \cite{Wiecek}, is a concrete, battery–driven instance of the continuous–time mean–field game developed in Chapters \ref{Model}, \ref{Mathematical Framework} and \ref{Theoretical Results}.

Each player is a mobile device whose state $s\in\{0,1,\hdots,S\}$ records its remaining battery charge ($0=\text{empty},\ S=\text{full}$). When its individual Poisson decision clock rings, the device chooses an energy level $a\in\mathcal A(s)=\{1,\hdots ,a_s\}=:q$ at which to transmit. Higher energy levels speed up data transfer but drain the battery faster, creating a tension that is typical of many wireless applications. The state then jumps according to a simple Markov rule: with probability\[p(a)=1-\alpha\cdot q_a-\gamma\] the battery stays where it is, otherwise it decreases by one unit (unless it is already empty, in which case it is recharged to S with probability $p_{0,S}$). Because $p(a)$ depends explicitly on the chosen action, individual decisions shape the population–level transition kernel $\phi$ and hence the mean–field dynamics.

\paragraph{What the model represents.}

\begin{itemize}
	\item \textbf{Energy–constrained communication}: The battery variable stands for any exhaustible local resource (energy, tokens, credits). The action set represents a menu of operating intensities.
	\item \textbf{Asynchronous interaction}: Agents act at independent random times, mirroring the lack of global coordination in large–scale networks.
	\item \textbf{Congestion feedback}: Rewards (defined in Section \ref{Finite Population Model}) can penalise widely chosen high–energy actions, capturing the way collective behaviour feeds back on individual pay–offs.
	\item \textbf{Reset mechanism}: The probabilistic jump from state $0$ to state $S$ models sporadic recharges—for instance plugging a phone in or harvesting solar energy—and guarantees a single positive–recurrent class, a key assumption for the ergodic theorems proved earlier.
\end{itemize}

\paragraph{Relevance of the model.}

\begin{itemize}
	\item \textbf{Faithful yet minimal}: By abstracting battery discharge to one–step losses proportional to the transmission power, the model keeps the state space small while preserving the core non–linear coupling between actions and dynamics. This parsimony lets us compute invariant distributions with reduced errors, making discrepancies with the mean–field solution easy to measure.
	\item \textbf{Validates the theory}: The set–up satisfies all structural hypotheses required for Lemma \ref{Lemma 2.1} and Theorem \ref{Theorem 4.1}—finite state–action space, unique recurrent class, Lipschitz vector field—so any divergence we observe must stem from finite–population noise, not from violated assumptions. Confirming the theoretical $\mathcal{O}(N^{-1/2})$ convergence given by both Kolmogorv's Strong Law of Large Numbers and Kurtz's Theorem in this controlled environment therefore provides compelling evidence that the analytical results are tight.
	\item \textbf{Practical relevance}: Variants of this battery–aware decision problem appear in mobile off–loading, sensor scheduling, token–based resource–sharing and edge– cloud orchestration, all discussed as application domains in Section \ref{sec:applied_implications}.
\end{itemize}

Insights gleaned from our model thus carry over to real–world systems where thousands of autonomous players compete for scarce energy or bandwidth resources.\newpage

\subsection{Modelling with Fixed Policy}

We introduce:\begin{itemize}
	\item \code{S}, number of states, so that $\mathcal{S}=\{1,\hdots,S\}$, which represent battery levels, with an extra state $0$ that means \textit{no action available}. The more natural interpretation is that $0$ corresponds to empty battery and $S$ to full.
	\item \code{q}, vector of all possible actions, so that $\mathcal{A}=q$. We denote with \code{A} the cardinality of \code{q}, so $A=|\mathcal{A}|$.
\end{itemize} At state $s$ the set of actions available to a player is given by\code{As(S, A)}, which assigns to each state $s\in\mathcal{S}$ an increasing integer $a_s\in\mathcal{A}$ so that \code{As(s) = 1:a_s} and $s<s'$ implies $a_s<a_{s'}$.

The role of the Markov transition kernel $\phi: \mathcal{S}\times\mathcal{A}\to\mathcal{P}(\mathcal{S})$ (see Section \ref{Finite Population Model}) is taken by function \code{phi(S, A, alpha, gamma, q, p0S)}, which generates an \code{AxSxS} matrix. We call this matrix \code{Q}. If we look at \code{Q(a,:,:)}, it represents the \code{SxS} matrix whose entry $(s,s')$ is the probability of going from state $s$ to state $s'$ when playing action $a$. In our practical model this probability does not depend on the states and we have the following:\begin{itemize}
	\item The probability of remaining in state $s>0$ when playing action $a$ is given by\begin{equation}
		p(a)=1-\alpha\cdot q(a)-\gamma,
	\end{equation}
	\item The probability of transitioning from state $s>0$ to state $s-1$ when playing action $a$ is $1-p(a)$,
	\item The probability of transitioning from state $s>0$ to state $s'\ne s-1$ when playing action $a$ is always $0$,
	\item The probability of transitioning from state $0$ to state $S$ is always $p_{0,S}\in[0,1]$, whatever the action played,
	\item The probability of remaining in state $0$ when playing action $a$ is always $1-p_{0,S}$.
\end{itemize} Here $\alpha$ and $\gamma$ are positive fixed coefficients such that\[\alpha\cdot q_A+\gamma\le1.\] In particular:\begin{itemize}
	\item If $\alpha=0$, then the lifetime of the battery is independent of the energy level used for transmission,
	\item If $\gamma=0$, then the expected lifetime is inversely proportional to the energy level used for transmission.
\end{itemize} Furthermore, the constant probability $p_{0,S}$ translates the following assumption.

\begin{Assumption}
	At any given time a mobile whose battery is empty may have it recharged.
\end{Assumption}

This modelling yields to the transition matrix\[\Phi(a)=\begin{bmatrix}1-p_{0,S}&1-p(a)&&&p_{0,S}\\&p(a)&1-p(a)&&\\&&p(a)&\ddots&\\&&&\ddots&1-p(a)\\&&&&p(a)\end{bmatrix},\] which is exactly what \code{phi(a,:,:)} returns.

\subsection{Modelling in the Evolutionary Dynamics}

The model in \cite{Wiecek} affirms that the reward at time $t$ for a player in state $s$ playing action $a$ when the vector of proportions of players using different actions is $w$ is given by\begin{equation}
	R(a,s,w)=\frac{a}{\sigma^2+C\cdot\sum_{l=1}^Aq_l\cdot w_l}-\beta\cdot a.
\end{equation} Here:\begin{itemize}
	\item $C$ is a constant that captures the interference of other mobiles,
	\item $\sigma^2$ is the noise power,
	\item $\beta$ is the energy cost.
\end{itemize} Thus, $R$ is the difference between the global signal to interference and noise ratio (SINR) and the energy cost $\beta\cdot a$.

From \cite[Example~5.1.1~(5.3)]{Sandholm} we choose the imitative protocol\begin{equation}
	\rho_{i,j}(\pi,x)=\frac{x_j}{m}\cdot[\pi_j-\pi_i]_+,
\end{equation} where:\begin{itemize}
	\item $i,j$ are two deterministic policies in $\mathcal{U}_D$,
	\item $\pi=F$ (see Equation \eqref{PolicyTransitions}),
	\item $x$ is the input distribution,
	\item $m$ is the mass of the population,
	\item $[\pi_j-\pi_i]_+=\max\{\pi_j-\pi_i,\ 0\}$.
\end{itemize} The individual probabilities of switching from one policy to another is given by \cite[Section~4.1.2]{Sandholm}:\begin{leftbarquote}
	a player playing strategy $i$ switches to strategy $j\ne i$ with probability $\rho_{i,j}/R$ and keeps playing strategy $i$ with probability $1-\sum_{j\ne i}\rho_{i,j}/R$,
\end{leftbarquote}where $R=R_r$.

\subsection{Choice of Parameters}

Here is the list of parameters chosen for the experimentation, accompanied by an explanation of their choice.\begin{lstlisting}
	% Number of players, taken in logspace for plotting purposes (20 log-linear values between 10^2 and 10^4)
	N     = round(logspace(2,4,20));
	
	% Number of states, representing 3 active transmission states plus a no-trasmission state
	S     = 3;
	
	% Number of available actions, rich enough to display trade-offs, but small enough to keep U_d numerically feasible.
	A     = 5;
	
	% The fact that Rr<Rd reflects players adapting less frequently than channels change
	Rd    = 1.0;
	Rr    = 0.2;
	
	% Total simulation horizon, long enough for the ODE to get near its quasi-steady-state, yet short enough to keep repeated runs computationally reasonable
	T     = 5.0; 
	
	alpha = rand();
	q = sort([0;rand(A-2,1);1])';
	gamma = 0.1*(1-alpha*q(end))*rand();
	p0S = rand();
	sigma2 = 1e-4*rand(); % Low-noise regime
	
	% Moderate interference strength to balance noise vs. multi-user coupling
	C = 4*rand(); 
	
	% Energy-cost coefficient, that penalizes higher transmit power
	beta = 4*rand(); 
	
	R = 5; % Number of independent repetitions
\end{lstlisting}

\section{Implementation Details and Simulation Outcomes}

\subsection{Coding Approach}

The deterministic limit trajectories $\mu(t)$ are obtained by integrating the mean–field ODEs \ref{Mean-Field dotmu} and \ref{ODE8} with the Runge–Kutta–based \textsc{matlab} function \code{ode45}. All empirical distributions are simulated player–by–player with Poisson distributed inter–event times, exactly matching the generator $Q$ defined in \ref{StateQ} and \ref{PolicyQ}.

For every population size $N$ and repetition $r=1,\hdots,R$ we compute the sup–norm errors\[e^{r,N}=\max_{t\in[0,T]}\norm{\hat{\mu}^{r,N}(t)-\mu(t)}_{\infty}.\] The values $e^{r,N}$ and their mean over $r$ are then plot in a log–log\footnote{Since the theoretical convergence is $\mathcal{O}\left(\frac{1}{\sqrt{N}}\right)$.} scale as functions of $N$, showing their convergence to $0$ (see Section \ref{Computational Results}).

\subsection{State Evolution Function}

In the script aimed to show the convergence of Lemma \ref{Lemma 2.1}, we simulate the state evolution through the following function, which samples player $i$'s path under a fixed policy \code{u}.\begin{lstlisting}
function [si, jumpTimes] = state_evolution(S, A, s0, T, f, u, Rd)
	% State s jumps at rate Rd,  
	% Action in state s is drawn from the fixed policy u(s,:),  
	% Next state sampled via phi.
	maxJumps = ceil(Rd * T * 2);
	si        = zeros(1, maxJumps+1);
	jumpTimes = zeros(1, maxJumps+1);
	
	idx = 1; t = 0; s = s0;
	si(idx)        = s;
	jumpTimes(idx) = t;
	
	while true
		\,\mathrm{d}t = -log(rand) / Rd;
		t  = t + \,\mathrm{d}t;
		if t > T, break; end
		
		% sample action according to u(s,:)
		a     = randsample(1:A, 1, true, u(s,:));
		probs = squeeze(f(a, s, :))';
		s     = randsample(1:S, 1, true, probs);
		
		idx = idx + 1;
		si(idx)        = s;
		jumpTimes(idx) = t;
		end
		
		si        = si(1:idx);
		jumpTimes = jumpTimes(1:idx);
	end
end
\end{lstlisting}

The fixed policy $u_i$ assigned to player $i$ is a randomized distribution over $\mathcal{S}\times\mathcal{A}$ such that, as the model requires:\begin{itemize}
	\item $u_i(a\mid s)=0$ for all $a\notin \mathcal{A}(s)$ and for all $s\in\mathcal{S}$,
	\item $\sum_{a\in\mathcal{A}}u(a\mid s)=1$ for all $s\in\mathcal{S}$.
\end{itemize} This is translated in the following simple piece of code:\begin{lstlisting}
u = rand(S,A);
for s = 1:S
	u(s, actions(s)+1:A) = 0;
	u(s,:) = u(s,:) / sum(u(s,:));
end
\end{lstlisting} where \code{actions(s)} is the implementation of $\mathcal{A}(s)$.

An analogous but more complex function named \code{joint_evolution_dynamic} is designed to show the convergence of Theorem \ref{Theorem 4.1}, in which not only states but also policies change for a single player throughout the process, so both the Markov transition kernel $\phi$ and the revision protocol $\rho$ must be included to model the evolutionary dynamics. The rest of the script is just an adaptation that take into account this modification of the problem's structure.

\subsection{Computational Results} \label{Computational Results}

The two scripts plot the sup–norm errors, their mean and the \code{polyfit} of the mean. Even with a small number of independent repetitions (\code{R = 5}) and population size (\code{N(end) = 10^4}), we obtain convergence of order $\mathcal{O}(N^{-1/2})$, as given by both Kolmogorov's Strong Law of Large Numbers and Kurtz's Theorem.

Both Figures \ref{Lemma Convergence} and \ref{Theorem Convergence} show this behaviour in the two cases.

\begin{figure}[H]
	\centering
	\includegraphics[scale=0.3]{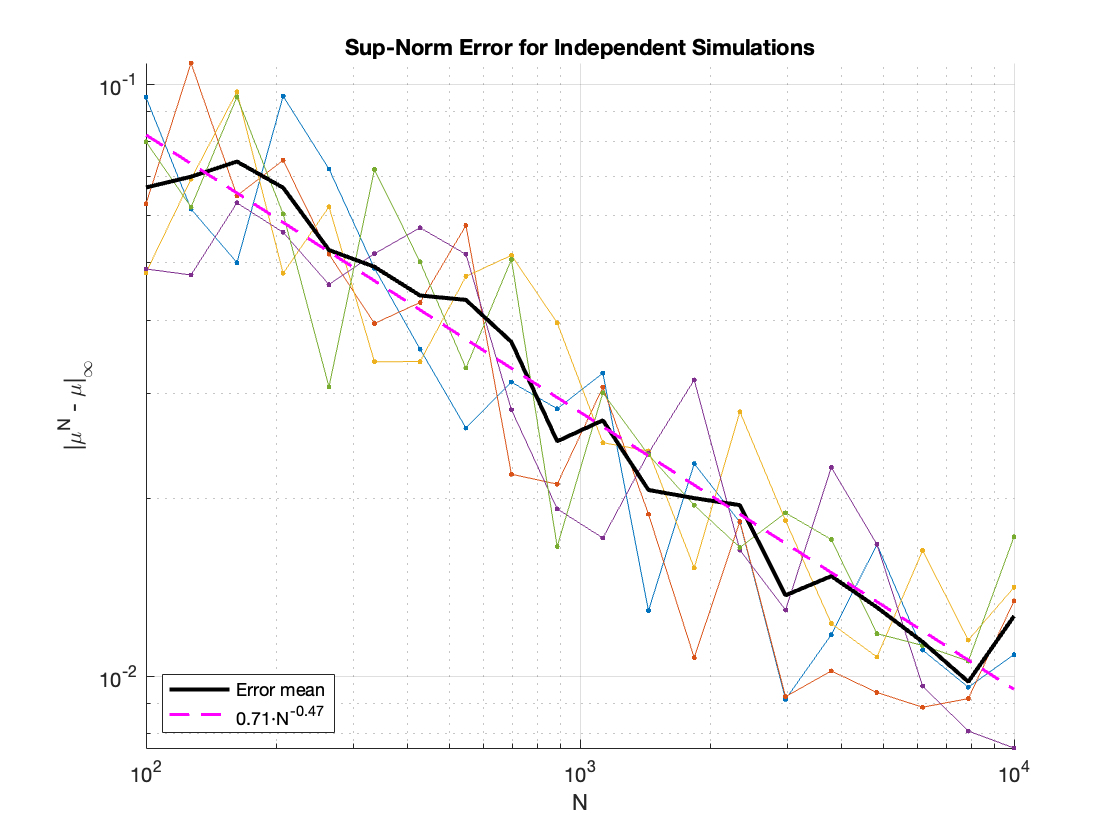}
	\caption{Convergence of Lemma \ref{Lemma 2.1}}
	\label{Lemma Convergence}
\end{figure}

\begin{figure}[H]
	\centering
	\includegraphics[scale=0.3]{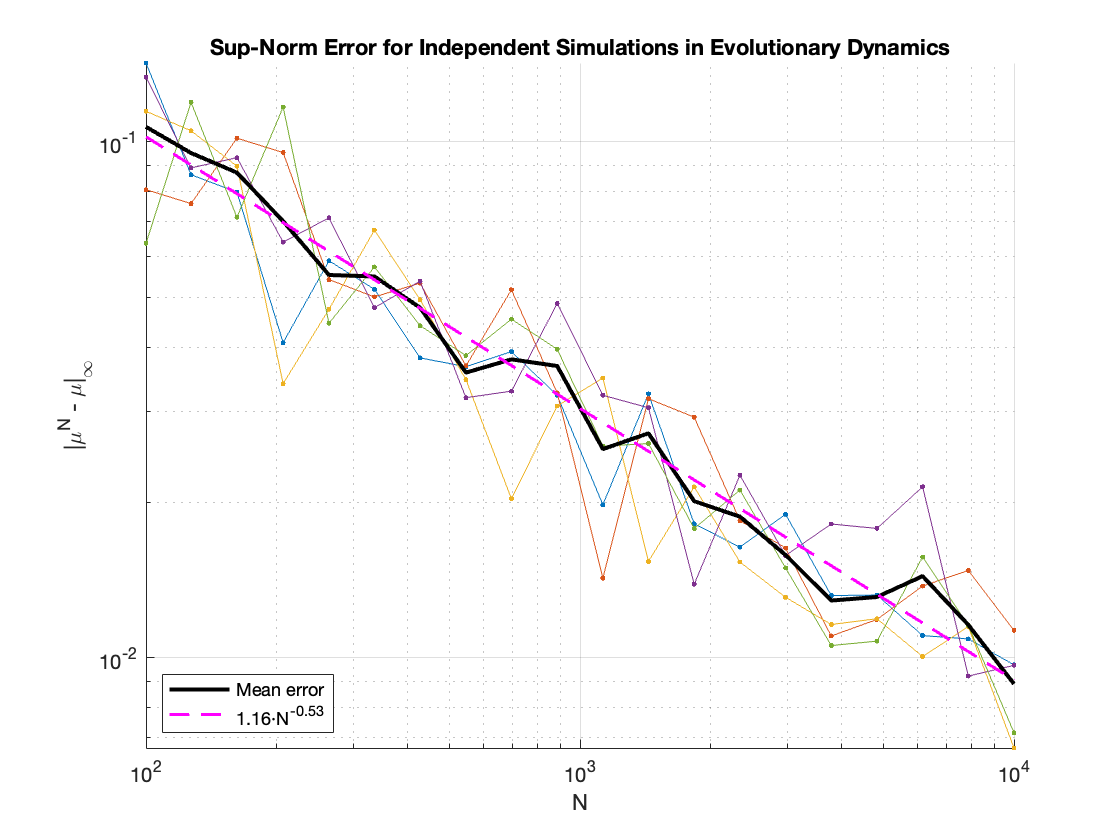}
	\caption{Convergence of Theorem \ref{Theorem 4.1}}
	\label{Theorem Convergence}
\end{figure}

Repeated executions of the two scripts give satisfactory results, as shown in Tables \ref{Table Lemma} and \ref{Table Theorem}.

\begin{table}[H]
	\centering
	\caption{Results of 10 Executions of the Script for Lemma \ref{Lemma 2.1}}
	\begin{tabular}{|l||c|c|c|c|c|c|c|c|c|c|}
		\hline
		Coefficient & 0.74 & 0.64 & 0.88&0.99&0.88&0.87&0.56&0.87&0.78&0.85\\
		\hline
		Exponent & -0.46 & -0.46 &-0.47&-0.48&-0.49&-0.49&-0.39&-0.49&-0.48&-0.46\\
		\hline
	\end{tabular}
	\label{Table Lemma}
\end{table}

\begin{table}[H]
	\centering
	\caption{Results of 10 Executions of the Script for Theorem \ref{Theorem 4.1}}
	\begin{tabular}{|l||c|c|c|c|c|c|c|c|c|c|}
		\hline
		Coefficient & 0.80 & 0.93 & 0.85&0.86&0.86&0.84&0.97&0.89&0.98&1.06\\
		\hline
		Exponent & -0.46 & -0.50 &-0.47&-0.48&-0.47&-0.48&-0.50&-0.48&-0.49&-0.51\\
		\hline
	\end{tabular}
	\label{Table Theorem}
\end{table}

 In fact, the means of the exponents of the collected data are:\begin{itemize}
	\item For Lemma \ref{Lemma 2.1}: $-0.47\pm0.05$
	\item For Theorem \ref{Theorem 4.1}: $-0.48\pm0.03$
\end{itemize} matching the theoretical value of $-0.50$.

\section{Source Code Access}

The \textsc{matlab} scripts are free to acess here (digital version only):\begin{itemize}
	\item \href{https://github.com/pietrograssiunipi/scripts/blob/68d4ebc5af694b99906f3b33397af77a6ad45fa3/Convergence_Lemma_Def.m}{\underline{Convergence of Lemma \ref{Lemma 2.1}}}
	\item \href{https://github.com/pietrograssiunipi/scripts/blob/68d4ebc5af694b99906f3b33397af77a6ad45fa3/Convergence_Theorem_Def.m}{\underline{Convergence of Theorem \ref{Theorem 4.1}}}
\end{itemize}

\chapter{Conclusions}

\section{Summary of Results}\label{sec:summary_results}

This thesis aimed to address a central question in the theory of continuous–time mean–field evolutionary games: to what extent does the finite–population stochastic process approximate the deterministic mean–field model as the number of players grows?

To tackle this question, three specific research objectives were pursued:\begin{itemize}
	\item \textbf{Establish a solid probabilistic framework} based on Markov chains with a single positive–recurrent class and an ergodic theorem tailored to that setting.
	\item \textbf{Prove the existence, uniqueness, and Lipschitz continuity} of the mean– field model, ensuring its analytical well–posedness.
	\item \textbf{Prove the convergence} of the empirical distributions to the deterministic limit of the mean–field model, both in the fixed policy and evolutionary dynamics frameworks.
\end{itemize}

\subsection{Main Theoretical Findings}

\begin{itemize}
	\item \textbf{Ergodic theorems for finite Markov chains.} Section \ref{Ergodic Theorem Model} introduces two refined versions of the Ergodic Theorem that guarantee the existence of a unique invariant distribution and almost–sure convergence of time averages whenever the chain possesses a single positive–recurrent class.  These results underpin the treatment of empirical state frequencies in the long–run limit.
	
	\item \textbf{Well–posedness of the mean–field model and convergence to the deterministic limit.} The vector field that generates the mean–field ODE system (Equation \eqref{Mean-Field dotmu}) is shown to be Lipschitz on a convex, compact domain. An extension of the Picard–Lindelöf theorem then ensures existence and uniqueness of solutions together with continuous dependence on initial conditions. This deterministic solution is then proved to be the limit of the empirical distributions (Lemma \ref{Lemma 2.1}).
	
	\item \textbf{Approximation of equilibria.} Theorem \ref{Theorem 3.2} demonstrates that a Mixed– Strategy Nash equilibrium (MSNE) of the mean–field game can be approximated, within an arbitrary $\varepsilon>0$, by a proper collection of strategies of the corresponding $N$–player game for sufficiently large $N$, which is proved to be—in a slightly weaker way—a MSNE itself. The proof hinges on almost–sure convergence of empirical state–action distributions and the dominated convergence theorem.
	
	\item \textbf{Convergence in evolutionary dynamics.}  
	By applying Kurtz’s Theorem (Theorem \ref{Kurtz}) to the empirical state–policy process, Theorem \ref{Theorem 4.1} establishes convergence in probability towards the deterministic ODE solution at each finite time horizon, thus formalising the subtitle \textit{From Stochastic Crowds to Deterministic Limits}.
\end{itemize}

\subsection{Computational Validation}

Chapter \ref{Practical Results} reports an extensive simulation campaign carried out in \textsc{matlab} that faithfully reproduces the Markov dynamics introduced in Chapter \ref{Mathematical Framework}.  Key highlights are:\begin{itemize}
	\item \textbf{Experimental set–up}: population sizes $N$ ranging from $10^{2}$ to $10^{4}$, time horizon $T=5$, and parameters calibrated to reflect moderate congestion and low exogenous noise.
	\item \textbf{Sup–norm error on the distribution}: for each $N$ and replication, the error is measured as
	\[
	e_{r,N} \;=\; \max_{t \in [0,T]}\,\bigl\lVert\hat{\mu}_{r,N}(t)\;-\;\mu(t)\bigr\rVert_{\infty}.
	\]
	\item \textbf{Empirical rate of convergence}: a log–log plot of the mean errors exhibits a slope compatible with the theoretical estimate $\mathcal{O}(N^{-1/2})$, confirming the Law– of–Large–Numbers scaling.
\end{itemize}

\section{Applied Implications}\label{sec:applied_implications}

While the results are primarily theoretical, they resonate across several application domains in which large populations of players interact asynchronously. This section outlines four emblematic scenarios and explains how the findings of this thesis translate into actionable design principles.

\subsection{Token–Based Resource–Sharing}\label{subsec:token_resource_sharing}

General references: \cite{Pedroso}, \cite{Fan}, \cite{Narander}.

\paragraph{Problem Setting.}
Token economies are increasingly adopted to regulate access to scarce resources—from compute cycles in edge clouds to electricity in micro–grids. Each player holds a portable budget of digital tokens and spends them to obtain service. The designer's objective is to keep the system \emph{incentive compatible} while guaranteeing fairness and high utilisation.

\paragraph{Implications of the Mean–Field Limit.}
The approximation results in Chapter~\ref{Theoretical Results} imply that a pricing mechanism optimised in the deterministic mean–field limit retains near–optimal properties in the corresponding finite system, provided the population is sufficiently large. In practice, this allows designers to:\begin{itemize}
	\item compute equilibrium token prices via low–dimensional ODEs rather	than high– dimensional Markov chains;
	\item exploit the $\mathcal{O}(N^{-1/2})$ error bound to decide how many tokens must circulate in order to keep price volatility below a desired	threshold;
	\item predict congestion episodes by monitoring the deterministic trajectory—a real– time proxy for aggregate token demand.
\end{itemize}

\subsection{Decentralised Market Platforms}\label{subsec:decentralised_markets}

General references: \cite{Bayraktar}, \cite{Chen}, \cite{Sequeira}.

\paragraph{Use Case.}

Blockchain–based exchanges and NFT marketplaces routinely host thousands of autonomous trading bots submitting orders asynchronously. Strategic behaviour is influenced by aggregate market states such as price trends and order–book imbalances.

\paragraph{Leveraging Equilibrium Approximation.}

Because any mean–field Nash equilibrium is $\varepsilon$–close to a finite $N$–player equilibrium (see Theorem \ref{Theorem 3.2}), platform designers can:\begin{itemize}
	\item analyse \emph{front–running risks} and liquidity fragmentation via the tractable mean– field game;
	\item implement fee schedules or market–making incentives that target the deterministic equilibrium, confident that these incentives remain effective in the finite player economy;
	\item evaluate the welfare impact of policy changes without resorting to costly player– based simulations.
\end{itemize}

\subsection{Cloud Service Orchestration}\label{subsec:cloud_orchestration}

General references: \cite{Anselmi}, \cite{Cloud}.

\paragraph{Scenario.}

Modern cloud platforms orchestrate micro–services over geo–distributed resources. Autoscalers and schedulers act as players that asynchronously request CPU, memory, and storage according to local load signals.

\paragraph{Practical Benefits of the Thesis Results.}

The mean–field framework delivers:\begin{itemize}
	\item closed–form scaling laws that relate population size, service demand, and latency;
	\item error bounds that inform how many redundant pods or replicas are required to ensure probabilistic \emph{service–level objectives};
\end{itemize}

\section{Limitations of the Study}\label{sec:limitations}

The limitations of this work are grouped and analyzed in this section into two categories: modelling assumptions and methodological and computational constraints.

\subsection{Modelling Assumptions}\label{subsec:modelling_assumptions}

\paragraph{Finite State Space.}

The analysis is confined to a \emph{finite} set of states and actions. Although this choice simplifies the proofs, many real systems may require countable or continuum state spaces. Extending the current results to such settings would involve more sophisticated theoretical arguments.

\paragraph{Single Positive–Recurrent Class.}

All Markov chains considered possess exactly one positive–recurrent class. This precludes models in which players can permanently drift into \emph{absorbing} or \emph{null–recurrent} regions.  As a consequence, the ergodic theorems developed in Section \ref{Ergodic Theorem Model} might fail to hold when multiple recurrent classes coexist or when the chain exhibits phase transitions.

\paragraph{Homogeneous Revision Protocols.}

The revision mechanism that drives strategy updates is identical across players and time. Heterogeneous revision rates—a common feature in behavioural economics—could alter convergence speeds and equilibrium selection, thereby demanding new analytical tools.

\paragraph{Behavioural Realism.}

Agents are assumed to be perfectly rational within their revision protocols. Bounded rationality and social influence are not considered, yet they play a crucial role in many large–scale socio–technical systems.

\paragraph{Omniscience.}\label{subsec:learning_algorithms}

In practical systems, players rarely know the game structure a~priori and must learn payoffs and transition probabilities online.

\subsection{Methodological and Computational Constraints}\label{subsec:methodological_constraints}

\paragraph{Global Lipschitz Condition.}

Well–posedness relies on a \emph{global} Lipschitz constant for the mean–field vector field, while we expect real–world applications to have guarantees of only \emph{local} Lipschitz conditions, leading to finite–time blow–ups or non–unique solutions. The current proofs would need substantial revision to accommodate such cases.

\paragraph{Finite Time Horizon and Limited Population Size in Simulations.}

The empirical study in Chapter~\ref{Practical Results} considers a fixed horizon $T=5$. Long–horizon or steady–state properties are inferred indirectly from theoretical results rather than observed. A comprehensive numerical analysis over extended horizons would provide stronger evidence but raises computational challenges.

Furthermore, simulations are limited to $N \leq 10^{4}$. Although sufficient to illustrate the $\mathcal{O}(N^{-1/2})$ convergence rate, larger populations could reveal second–order effects.

\section{Closing Remarks}\label{sec:closing_remarks}

This final section offers a concise \emph{take–home message}. It revisits the guiding question posed at the outset and distils the main insights gained along the way.

\subsection{From Stochastic Crowds to Deterministic Limits}\label{subsec:stochastic_to_deterministic}

At its heart, the thesis demonstrates how a population of asynchronously updating players—each governed by microscopic randomness—can give rise to a predictable macroscopic law. In doing so, the work reinforces a central theme in applied mathematics: order emerges from randomness when viewed at the right scale.

\subsection{Ethical and Social Considerations}\label{subsec:ethical_considerations}

Any model that deals with large populations carries ethical reflections. The mean–field framework must be deployed with an awareness of:\begin{itemize}
	\item \textbf{Fairness}: ensuring that aggregated control policies do not systematically disadvantage minority sub–populations that may be invisible in the limit.
	\item \textbf{Transparency}: communicating model assumptions and approximation errors to stakeholders who rely on the resulting decisions.
	\item \textbf{Robustness}: guarding against model misspecification and adversarial manipulation, especially in economic or security contexts.
\end{itemize}

Addressing these aspects will be as important as advancing the mathematics itself.

\subsection{Final Outlook}\label{subsec:final_outlook}

Whether applied to climate-‐driven resource allocation, large‐-scale autonomous mobility, or decentralised finance, the core message of mean–field methodologies endures:\begin{leftbarquote}
	The whole is greater than the sum of its parts.
\end{leftbarquote}

\appendix

\chapter{Trivial Limiting Behavior of a Markov Chain with Unique Null-Recurrent Class} \label{Ergodic Theorem Null Recurrent}

\section{Introduction}

In Section \ref{Ergodic Theorem Model} we discussed variants of the Ergodic Theorem at first stated in Section~\ref{ET} for irreducible Markov chains.

\begin{recall}
	In our model, irreducibility is too strong, so we opted for the weaker Assumption \ref{A3}, in which we forced our Markov chain to have a unique recurrent class $C$. Since our model considers a finite state space, $C$ results in being positive-recurrent. Thus, we proved two variants of the Ergodic Theorem, both fitting our framework:\begin{itemize}
		\item Theorem \ref{Ergodic Theorem Positive Recurrent} refers to Markov chains with general state space (possibly infinite) and unique positive-recurrent class, while all other states are transient,
		\item Theorem \ref{Ergodic Theorem Finite Space} regards Markov chains on a finite state space and having a unique recurrent class, which is then positive-recurrent by definition.
	\end{itemize}
	The first theorem is stronger than the second, in the sense that there is an obvious implication, but the proof for the second theorem is much more simple, so we decided to present it too.
\end{recall}

In this appendix chapter, we analyze the remaining case of this setup: a Markov chain on a general state space having a unique null-recurrent (see Definition \ref{Positive and Null Recurrence}) communicating class and other transient states. Specifically, we demonstrate why the usual Ergodic Theorem does not hold in this framework, but just a trivial limiting distribution arises.

\section{Ergodic Theorem for a Markov Chain with Unique Null-Recurrent Class}

In short, there is no finite stationary distribution in the null‐recurrent case, so we cannot have\[
\frac{1}{t}\int_{0}^{t} f(X_s)\,\mathrm{d}s \xrightarrow[t\to\infty]{}\sum_{i}\lambda_if(i)
\quad(\text{for any probability measure }\lambda).
\] In fact, the chain still spends all of its infinite time in the single recurrent class $C$, but within $C$ it keeps returning to every state infinitely often while not settling down to spend a positive fraction of time at any single state.

A rigorous argument shows that, for each individual state $i \in C$,\[
\frac{1}{t}\int_{0}^{t}\mathds{1}_{\{X_s = i\}}\,\mathrm{d}s \xrightarrow[t\to\infty]{\mbox{a.s.}} 0.
\] Hence, no nontrivial limiting distribution can arise.

The proof of the theorem below mirrors the style used in the positive‐recurrent case, highlighting the crucial points where null recurrence differs from positive recurrence.

\begin{theorem}[Ergodic Theorem for a Markov Chain with Unique Null-Recurrent Class] \label{Ergodic Theorem Null}
	Consider a continuous‐time Markov chain $(X_t)_{t\ge0}$ on a countable state space $I$ with generator matrix $Q$. Suppose:\begin{itemize}
		\item[(i)] There is exactly one communicating class $C\subseteq I$ that is recurrent and null (i.e., it is recurrent but the expected return times to each state in C are infinite),
		\item[(ii)] Every other state in $I\setminus C$ is transient.
	\end{itemize} Then:\begin{itemize}
		\item The chain does not admit a stationary distribution that is a finite probability measure.
		\item Nonetheless, the chain, with probability $1$, visits $C$ infinitely often (in fact, eventually stays in $C$ except for possibly finitely many returns to transients, which must occur only finitely many times since those states are transient).
		\item Within $C$, the chain’s limiting behavior does not concentrate on any single state or subset. In fact, for each $i\in C$, the fraction of time spent at $i$ converges to $0$.
	\end{itemize} Therefore, in the sense of time averages,\[\frac{1}{t}\int_0^t\mathds{1}_{\{X_s=i\}}\,\mathrm{d}s\xrightarrow[t\to\infty]{\mbox{a.s.}}0\quad\mbox{for each}\ i\in I.\] Equivalently, for any bounded function $f: I\to\mathbb{R}$ whose support is finite,\[\frac{1}{t}\int_0^tf(X_s)\,\mathrm{d}s\xrightarrow[t\to\infty]{\mbox{a.s.}}0.\] In other words, there is no nontrivial limit mimicking the Ergodic Theorem for positive‐recurrent chains (Theorem \ref{Ergodic Theorem Positive Recurrent}).
	\begin{proof}
		We begin with observing that the chain eventually stays in $C$ forever, up to finitely many excursions: since all states outside $C$ are transient, we know that with probability $1$ the process $X_t$ can only visit those transient states finitely many times. Thus, from some random time onward, $X_t$ lies entirely in $C$. Hence for large $t$, the chain is effectively a null‐recurrent chain on $C$ alone (we say that the chain’s trace on $C$ is a null‐recurrent Markov chain).
		
		Having noted that, we now face null recurrence, which precludes a positive fraction of time in any single state. Here is the crucial difference from positive recurrence:\begin{itemize}
			\item In a positive‐recurrent class, the expected return time to a given state $i\in C$ is finite. As a result, the fraction of time the chain spends in i converges to a positive constant $\lambda_i$ (see Theorem \ref{Ergodic Theorem Positive Recurrent}),
			\item In a null‐recurrent class, that same expected return time to $i$ is infinite.  Intuitively, although the chain returns to $i$ infinitely often (recurrence), the returns become so spread out over long times that the overall fraction of time in $i$ goes to $0$.
		\end{itemize} Formally, let $T_i$ be the first return time to $i$ starting from $i$. In the null‐recurrent case,\[\mathbb{E}_i[T_i]=\infty.\] By applying the strong Markov property at each return to $i$, we have\[
		\frac{N_i(t)}{t}= \frac{\text{total time spent in state }i \text{ up to }t}{t}
		\xrightarrow[t\to\infty]{\mbox{a.s.}}0,
		\] since the times between successive returns to $i$ have infinite average. Here is a sketch of the argument:\begin{itemize}
			\item Number of returns grows slowly: Let $R_k$ be the time of the $k$-th return to $i$. Since $\mathbb{E}[T_i] = \infty$, the strong law of large numbers cannot apply in a standard way, and in fact\[
			R_k\xrightarrow[k\to\infty]{}\infty,
			\] and grows faster than linearly in $k$, in a loose sense. In fact, $R_k/k$ cannot be bounded, because that would imply a finite mean return time,
			\item Fraction of visits in $i$ is negligible in large time: On each visit to $i$, the chain remains there for an $E(q_i)$ holding time of finite mean $1/q_i$. Summing those small holding times infinitely often still remains negligible compared to $R_k$, which grows faster. Hence for large $t$, the total amount of time the chain has spent in $i$ up to time $t$ is small compared to $t$, so\[\frac{1}{t}\int_0^t\mathds{1}_{\{X_s=i\}}\,\mathrm{d}s=\frac{N_i(t)}{t}\xrightarrow[t\to\infty]{\mbox{a.s.}}0.\] That is exactly the statement that no single state can capture a positive fraction of the long‐run time.
		\end{itemize}
	
		Combining the above with the transience of $I\setminus C$:\begin{itemize}
			\item By transience, the chain can only be in states outside $C$ for a finite total amount of time, so\[
			\frac{1}{t}\int_{0}^{t}\mathds{1}_{\{X_s\in I\setminus C\}}\,\mathrm{d}s \xrightarrow[t\to\infty]{\mbox{a.s.}}0.
			\]
			\item For each $i\in C$,\[
			\frac{1}{t}\int_{0}^{t}\mathds{1}_{\{X_s=i\}}\,\mathrm{d}s \xrightarrow[t\to\infty]{\mbox{a.s.}}0.
			\] Summing these indicators over all $i\in C$ still gives $1$ (after time is large enough that the chain stays in $C$), but individually they are vanishing.
		\end{itemize} Hence for any bounded function $f: I\to\mathbb{R}$ that has finite support (or at least is bounded and small enough outside a finite set, see the remark below), the time average\[\frac{1}{t}\int_0^tf(X_s)\,\mathrm{d}s\] can be made arbitrarily small for large $t$. Indeed, if $\supp(f)\subseteq\{i_1,\hdots,i_n\}$, then each $i_k$ occupies vanishing fraction of time, so the linear combination also goes to zero. Thus, we conclude that\[
			\frac{1}{t}\int_{0}^{t} f(X_s)\,\mathrm{d}s \xrightarrow[t\to\infty]{\mbox{a.s.}}0.
		\]
	\end{proof}
	\begin{remark}[Bounded Functions which Are Small Enough Outside a Finite Set]
		If $f$ is a general bounded function on a countably infinite $I$, it can be approximated by truncated versions $f_N$ that vanish outside a finite set and let $N\to \infty$. It holds that\[\limsup_{t\to\infty}\left|\frac{1}{t}\int_0^tf(X_s)\,\mathrm{d}s\right|\le\sup_N\limsup_{t\to\infty}\left|\frac{1}{t}\int_0^tf(X_s)-f_N(X_s)\,\mathrm{d}s\right|\] and by Dominated Convergence Theorem \cite[Theorem~9.1.2]{Rosenthal} and the fact that $f-f_N$ is supported outside the first $N$ states, we again obtain the a.s. convergence to zero if $f$ grows slowly enough. Either way, the key point is that no finite mass can accumulate on any given state in the long run.
	\end{remark}
\end{theorem}

\bibliographystyle{amsalpha}
\bibliography{Ref}

@book {Shwartz,
	AUTHOR = {A. Shwartz},
	TITLE = {Large Deviations for Performance Analysis},
	PUBLISHER = {Routledge, Nokia of America Corporation},
	YEAR = {2018},
	ISBN = {978-1-138-31577-8},
}

@book {Norris,
	AUTHOR = {J. R. Norris},
	TITLE = {Markov Chains},
	PUBLISHER = {Cambridge University Press},
	YEAR = {2009},
	ISBN = {0-521-48181-3},
	SERIES = {Cambridge Series on Statistical and Probabilistic Mathematics},
}

@book {Adlakha,
	AUTHOR = {S. Adlakha and R. Johari and G. Y. Weintraub},
	TITLE = {Equilibria of dynamic games with many players: Existence, approximation, and market structure},
	VOLUME = {156},
	PAGES = {269-316},
	PUBLISHER = {Journal of	Economic Theory},
	YEAR = {2015},
	DOI = {10.1016/j.jet.2013.07.002},
}

@book {Altman,
	AUTHOR = {P. Wi\c ecek and E. Altman},
	TITLE = {Stationary anonymous sequential games with undiscounted rewards},
	VOLUME = {166},
	PAGES = {686-710},
	PUBLISHER = {Journal of Optimization Theory and Applications},
	YEAR = {2015},
	DOI = {10.1007/s10957-014-0649-9},
}

@article {Wiecek,
	AUTHOR = {P. Wi\c ecek and E. Altman and Y. Hayel},
	TITLE = {Stochastic state dependent population games in wireless communication},
	VOLUME = {56(3)},
	PAGES = {492-505},
	PUBLISHER = {IEEE Transactions on Automatic Control},
	YEAR = {2011},
}

@notes {Durrett,
	AUTHOR = {R. Durrett},
	TITLE = {Essentials of Stochastic Processes},
	PUBLISHER = {Cambridge University Press},
	YEAR = {2010},
}

@book {Smirnov,
	AUTHOR = {G. V. Smirnov},
	TITLE = {Introduction to the Theory of Differential Inclusions},
	PUBLISHER = {American Mathematical Society},
	YEAR = {2002},
}

@book {Sandholm,
	AUTHOR = {W. H. Sandholm},
	TITLE = {Population Games and Evolutionary Dynamics},
	PUBLISHER = {MIT Press},
	YEAR = {2010},
}

@book {Rosenthal,
	AUTHOR = {J. S. Rosenthal},
	TITLE = {A First Look at Rigorous Probability Theory},
	PUBLISHER = {World Scientific Publishing Co.},
	YEAR = {2006},
}

@article {Kurtz,
	AUTHOR = {T. G. Kurtz},
	TITLE = {Strong Approximation Theorems for Density Dependent Markov Chains},
	VOLUME = {6},
	PAGES = {223-240},
	PUBLISHER = {North-Holland Publishing Company},
	YEAR = {1978},
}

@article {CMP,
	AUTHOR = {H. B. Mann and A. Wald},
	TITLE = {On Stochastic Limit and Order Relationships},
	VOLUME = {14},
	PAGES = {217-226},
	PUBLISHER = {Institute of Mathematical Statistics},
	YEAR = {1943},
}

@book {Lemmas,
	AUTHOR = {R. Durrett},
	TITLE = {Probability: Theory and Examples},
	PUBLISHER = {Cambridge University Press},
	YEAR = {2019},
}

@inproceedings{Pedroso,
	AUTHOR = {Leonardo Pedroso and Andrea Agazzi and W. P. M. H. Heemels and Mauro Salazar},
	TITLE = {Fair Artificial Currency Incentives in Repeated Weighted Congestion Games: Equity vs. Equality},
	BOOKTITLE = {63rd IEEE Conference on Decision and Control},
	YEAR = {2024},
	PAGES = {954-959},
	DOI = {10.1109/CDC56724.2024.10886786},
}

@inproceedings{Pedroso2,
	AUTHOR = {Leonardo Pedroso and W. P. M. H. Heemels and Mauro Salazar},
	TITLE = {Urgency-aware Routing in Single Origin-destination Itineraries through Artificial Currencies},
	BOOKTITLE = {62nd IEEE Conference on Decision and Control},
	YEAR = {2023},
	PAGES = {4142-4149},
	DOI = {10.1109/CDC49753.2023.10383739},
}

@book {Rudin,
	AUTHOR = {W. Rudin},
	TITLE = {Principles of Mathematical Analysis},
	PUBLISHER = {McGraw-Hill},
	YEAR = {1976},
}

@article{KurtzRate,
	ISSN = {00219002},
	URL = {http://www.jstor.org/stable/3212147},
	AUTHOR = {T. G. Kurtz},
	JOURNAL = {Journal of Applied Probability},
	NUMBER = {1},
	PAGES = {49–58},
	PUBLISHER = {Applied Probability Trust},
	TITLE = {Solutions of Ordinary Differential Equations as Limits of Pure Jump Markov Processes},
	URLDATE = {2025-06-15},
	VOLUME = {7},
	YEAR = {1970}
}

@article{SLLNRate,
	ISSN = {00029947},
	URL = {http://www.jstor.org/stable/1994170},
	AUTHOR = {Leonard E. Baum and Melvin Katz},
	JOURNAL = {Transactions of the American Mathematical Society},
	NUMBER = {1},
	PAGES = {108--123},
	PUBLISHER = {American Mathematical Society},
	TITLE = {Convergence Rates in the Law of Large Numbers},
	URLDATE = {2025-06-15},
	VOLUME = {120},
	YEAR = {1965}
}

@article {Fan,
	AUTHOR = {S. Fan and J. Zhao and R. Zhao and Z. Wang and W. Cai},
	TITLE = {CryptoArcade: A Cloud Gaming System with Blockchain-based Token Economy},
	VOLUME = {11 (3)},
	PAGES = {2445-2458},
	PUBLISHER = {IEEE Transactions on Cloud Computing},
	YEAR = {2023},
}

@article {Narander,
	AUTHOR = {K. Narander and S. Swati},
	TITLE = {Token-based Predictive Scheduling of Tasks in Cloud Data-centers},
	VOLUME = {4},
	PAGES = {29-33},
	PUBLISHER = {Research Journal of Recent Sciences},
	YEAR = {2015},
}

@book {Bayraktar,
	AUTHOR = {E. Bayraktar and A. Cohen and A. Nellis},
	TITLE = {DEX Specs: A Mean Field Approach to DeFi Currency Exchanges},
	PUBLISHER = {SSRN Electronic Journal},
	YEAR = {2024},
	DOI = {10.2139/ssrn.4796297},
}

@book {Chen,
	AUTHOR = {Y. Chen and L. Wu and R. Xu and R. Zhang},
	TITLE = {Periodic Trading Activities in Financial Markets: Mean-field Liquidation Game with Major-Minor Players},
	YEAR = {2024},
}

@book {Sequeira,
	AUTHOR = {J.I. Sequeira and A.M. González and R.O. Illera},
	TITLE = {Liquidity Pools as Mean Field Games: A New Framework},
	YEAR = {2024},
}

@article {Anselmi,
	AUTHOR = {J. Anselmi},
	TITLE = {Asynchronous Load Balancing and Auto-Scaling: Mean-Field Limit and Optimal Design},
	VOLUME = {32 (4)},
	PAGES = {2960-2971},
	PUBLISHER = {IEEE/ACM Transactions on Networking},
	YEAR = {2024},
}

@book {Cloud,
	AUTHOR = {V. Sachidananda and A. Sivaraman},
	TITLE = {Collective Autoscaling for Cloud Microservices},
	YEAR = {2022},
}

\chapter*{Acknowledgments}
\addcontentsline{toc}{chapter}{Acknowledgments}

Here we are, with the most read chapter of the thesis, one of the few pages that will no longer be untouched but will bear a few small signs of wear.

First and foremost thanks to God, who guides my path.

A special thank you to Prof. Agazzi, Eng. Leonardo Pedroso and Prof. Dario Trevisan for your support, which made it possible for me to complete this work. I also want to thank all the professors at the University of Pisa and those at Liceo Enriques who contributed to my education.

Thanks to all the dear people I care about, who have been part of the person I am today, whether for a long time or a short while: to Vio, my parents, and all my family, to Nunzio, Andre, and Edo, to Gabry, Bracas, Giacomo, Eli, Ale, Jacopone, Mattia, Anna, and Chiara, to Eva, Gaia, Cristi, Lore, Fili, Jacopo, Virgi, Rache, Gioele, Mariam, Eli and Andre, to Albe, Giuba, Bienti, Berto, Sofi, Sese, Costy, and Lena, to Don Giovanni and Don Andrea, to Patri, Stefano, Lino, and Laura, to Daddy Gaiffi, the Tracotanti, and all the Sezione AIA of Livorno.

I am grateful to all of you for your presence, support, and encouragement throughout this journey, which you made special.

\end{document}